\documentclass[11pt]{article}
\usepackage[colorlinks=true,linkcolor=red,citecolor=blue]{hyperref}
\usepackage{graphicx}
\usepackage{verbatim}
\usepackage{array,fullpage,multirow}
\usepackage{latexsym}
\usepackage{enumitem}
\usepackage{amssymb,amsfonts,amsmath,amsthm}
\usepackage{url}

\newcounter{maintheoremcounter}

\theoremstyle{plain}
\newtheorem{theorem}{Theorem}[section]

\newtheorem{maintheorem}[maintheoremcounter]{Theorem}

\newtheorem{lemma}[theorem]{Lemma}

\newtheorem{corollary}[theorem]{Corollary}
\newtheorem{claim}[theorem]{Claim}

\newtheorem{definition}[theorem]{Definition}
\newtheorem{fact}[theorem]{Fact}
\newtheorem{openproblem}{Open Problem}

\numberwithin{equation}{section} 

\newcommand{\etal}{\textit{et~al.}}

\def\SD{\ensuremath{\mathrm{SD}}}

\def\F{\ensuremath{\mathbb{F}}}
\def\ftwo{\ensuremath{\mathbb{F}_{2}}}

\def\Exp{\mathbb{E}}
\def\bias{\ensuremath{\mathrm{bias}} }
\def\Pr{\ensuremath{\mathrm{Pr}} }
\def\dim{\ensuremath{\mathrm{dim}} }
\def\spa{\mathrm{span}}
\def\poly{{\mathrm{poly}}}
\def\wt{{\mathrm{wt}}}
\def\polylog{{\mathrm{polylog}}}

\def\deff{\triangleq}

\newcommand{\eps}{\varepsilon}

\newcommand{\XOR}{\mathsf{XOR}}
\newcommand{\OR}{\mathsf{OR}}
\newcommand{\AND}{\mathsf{AND}}

\newcommand{\XAX}{\XOR\!-\!\AND\!-\!\XOR}
\newcommand{\NP}{\bf{NP}}

\newcommand{\size}{\mathrm{size}}

\newcommand{\ACZERO}{\mathsf{AC}^0}
\newcommand{\ACZEROPARITY}{\mathsf{AC}^0[\oplus]}









\def\F{{\mathbb{F}}}

\def\N{{\mathbb{N}}}
\def\Z{{\mathbb{Z}}}

\def\one{\boldsymbol 1}

\renewcommand{\Pr}{\mathop{\bf Pr\/}}
\newcommand{\E}{\mathop{\bf E\/}}

\newcommand{\Variety}{\mathop{\bf V\/}}

\def\poly{{\mathrm{poly}}}
\def\bias{{\mathrm{bias}}}
\def\spn{\mathrm{span}}

\def\({\left(}
\def\){\right)}

\newcommand{\remove}[1]{}

\newcommand{\ConstStruct}{c_{1}}
\newcommand{\ConstStructTwo}{c_{2}}

\begin{document}

\title{Two Structural Results for Low Degree Polynomials \\ and Applications}

\author{Gil Cohen\thanks{
    Department of Computer Science and Applied Mathematics, Weizmann Institute of Science, Rehovot 76100, { \sc Israel}.
    {\tt gil.cohen@weizmann.ac.il.}
    Supported by an ISF grant and
    by the I-CORE Program of the Planning and Budgeting Committee.
}\and Avishay Tal\thanks{
    Department of Computer Science and Applied Mathematics, Weizmann Institute of Science, Rehovot 76100, { \sc Israel}.
    {\tt avishay.tal@weizmann.ac.il.}
    Supported by an Adams Fellowship of the Israel Academy of Sciences and Humanities,
    by an ISF grant and
    by the I-CORE Program of the Planning and Budgeting Committee.
}}

\maketitle

\begin{abstract}
In this paper, two structural results concerning low degree polynomials over finite fields are given. The first states that over any finite field $\mathbb{F}$, for any polynomial $f$ on $n$ variables with degree $d \le \log(n)/10$, there exists a subspace of $\mathbb{F}^n$ with dimension $\Omega(d \cdot n^{1/(d-1)})$ on which $f$ is constant. This result is shown to be tight. Stated differently, a degree $d$ polynomial cannot compute an affine disperser for dimension smaller than $\Omega(d \cdot n^{1/(d-1)})$. Using a recursive argument, we obtain our second structural result, showing that any degree $d$ polynomial $f$ induces a partition of $\mathbb{F}^n$ to affine subspaces of dimension $\Omega(n^{1/(d-1)!})$, such that $f$ is constant on each part.

We extend both structural results to more than one polynomial. We further prove an analog of the first structural result to sparse polynomials (with no restriction on the degree) and to functions that are close to low degree polynomials. We also consider the algorithmic aspect of the two structural results.

\medskip
\noindent
Our structural results have various applications, two of which are:
\begin{itemize}
  \item Dvir [CC 2012] introduced the notion of extractors for varieties, and gave explicit constructions of such extractors over large fields. We show that over any finite field any affine extractor is also an extractor for varieties with related parameters. Our reduction also holds for dispersers, and we conclude that Shaltiel's affine disperser [FOCS 2011] is a disperser for varieties over $\mathbb{F}_2$.

  \item Ben-Sasson and Kopparty [SIAM J. C 2012] proved that any degree $3$ affine disperser over a prime field is also an affine extractor with related parameters. Using our structural results, and based on the work of  Kaufman and Lovett [FOCS 2008] and Haramaty and Shpilka [STOC 2010], we generalize this result to any constant degree.

\end{itemize}
\medskip
\noindent
\end{abstract}

\thispagestyle{empty}
\newpage
\small
\hypersetup{linkcolor=black}
\tableofcontents
\hypersetup{linkcolor=red}
\normalsize
\thispagestyle{empty}
\newpage
\pagenumbering{arabic}

\section{Introduction}

In this paper, we consider the following question concerning polynomials on $n$ variables over the field with $q$ elements, $\F_q$, where $q$ is some prime power:

\begin{quote}
  What is the largest number $k = k_q(n,d)$, such that any polynomial on $n$ variables over $\F_q$, with degree at most $d$, is constant on some affine subspace of $\F_q^n$ with dimension $k$?
\end{quote}
Here, and throughout the paper, by degree we mean total degree.

This question concerning the structure of low degree polynomials over finite fields can be rephrased, in the language of pseudorandomness, as whether a low degree polynomial can be a good affine disperser. Recall that an \emph{affine disperser} for dimension $k$ is a function $f \colon \F_q^n \to \F_q$ with the following property. For every affine subspace $u_0 + U \subseteq \F_q^n$ of dimension $k$, $f$ restricted to $u_0+U$ is not constant~\footnote{An alternative definition requires that almost all field elements are obtained by $f$ on $u_0+U$.}. A function $f \colon \F_q^n \to \F_q$ is called an \emph{affine extractor} for dimension $k$ with bias $\eps$, if for every affine subspace $u_0 + U \subseteq \F_q^n$ of dimension $k$, it holds that $f(x)$, where $x$ is sampled uniformly from $u_0 + U$, is $\eps$-close in statistical distance, to the uniform distribution over $\F_q$.

It is worth mentioning that several explicit constructions of affine dispersers and affine extractors are in fact low degree polynomials~\cite{B07, BSG12, BSK12}. Examples of this fact can be found in the literature for other types of dispersers and extractors as well~\cite{CG88, BIW06,Dvir12}. In fact, the state of the art explicit construction of affine extractors over $\F_2$ by Li~\cite{Li11} (matching the parameters obtained by Yehudayoff~\cite{Y11}) heavily relies on low degree seeded extractors.


Clearly, for any $q$ it holds that $k_q(n,1) = n-1$. The case $d=2$, at least over fields of characteristic $2$, is also well understood. By Dickson's theorem (\cite{D01}, Theorem 199), $k_q(n,2) \le n/2 + 1$ for fields of characteristic $2$. This is tight, as can be seen by considering the inner product function $x_1 x_2 + x_3 x_4 + \cdots + x_{n-1} x_n$. To the best of our knowledge, the value of $k_q(n,d)$ has not received a formal treatment in the literature, and in particular it is not known or can be easily deduced by previous works, for $d > 2$. The most related result was obtained by Barrington and Tardos (\cite{TB98}, Lemma 3), who proved that for any prime power $q$ and for any degree $d$ polynomial $f$ on $n$ variables over the \emph{ring} $\Z_q$, there exists a ``cube'' with dimension $k = \Omega(n^{1/d})$, on which $f$ is constant. That is, there exist linearly independent vectors $\Delta_1, \ldots, \Delta_k \in \Z_q^n$ such that for every $\alpha \in \{0,1\}^k$, $f(\sum_{i=1}^{k}{\alpha_i \Delta_i}) = f(0)$.



Furthermore, a natural variant of the question of understanding $k_q(n,d)$ was previously raised by Trevisan~\cite{T06} for the special case $q = 2$. As a corollary of the structural results of Haramaty and Shpilka~\cite{HS10}, for biased polynomials and for polynomials with large Gowers norm with degree $d=3,4$ over prime fields, one can deduce non-trivial lower bounds on the dimension of an affine subspace on which such polynomials are constant. Assuming low degree and bounded spectral norm, lower bounds on the affine subspace dimension follow by the structural result of Tsang~\etal~\cite{TWXZ13}.

\subsection{Our Results}

The first result of this paper is an asymptotically tight upper and lower bounds on $k_q(n,d)$ for all $d < \log(n)/10$. Our lower bound holds for all finite fields, namely, for any prime power $q$. We then further study the structure of low degree polynomials over finite fields, the algorithmic aspect of these results, and present several applications to complexity theory and in particular to pseudorandomness.

The following theorem gives a lower bound for $k_q(n,d)$. In fact, it promises something stronger, which is required by one of our applications (see Theorem~\ref{thm:from affine disperser to varieties for small t}). Informally, for any degree $d$ polynomial $f$ and a point $u_0 \in \F_q^n$, there exists a large subspace $U$ such that $f$ is constant on $u_0 + U$. Note that this is equivalent of saying that there exists a large subspace on which $f$ is constant.

\begin{maintheorem}[Structural Result I]\label{thm:structural result}
For any $n, d$, let $k$ be the least integer such that
\begin{equation}\label{eq:bound}
n \le k + (d+1) \cdot \sum_{j=0}^{d-1} {(d-j) \cdot \binom{k+j-1}{j}}\;.
\end{equation}
Let $q$ be a prime power. Let $f \colon \F_q^n \to \F_q$ be a degree $d$ polynomial, and let $u_0 \in \F_q^n$. Then, there exists a subspace $U \subseteq \F_q^n$ of dimension $k$ such that $f|_{u_0+U}$ is constant.

In particular, there exists a universal constant $\ConstStruct \in (0,1)$ such that for all $n,d,q$, it holds that $k_q(n,d) \ge \ConstStruct \cdot n^{1/(d-1)}$. Moreover, for $d \le \log(n)/10$ it holds that $k_q(n,d) = \Omega(d \cdot n^{1/(d-1)})$.
\end{maintheorem}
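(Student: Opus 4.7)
The plan is to build $U$ incrementally by constructing a flag $\{0\} = U_0 \subsetneq U_1 \subsetneq \cdots \subsetneq U_k$ with $\dim U_m = m$ and $f \equiv f(u_0)$ on $u_0 + U_m$ at every stage. To pass from $U_m$ to $U_{m+1}$, I need a direction $v \in \F_q^n \setminus U_m$ satisfying $f(u_0 + u + tv) = f(u_0)$ for every $u \in U_m$ and $t \in \F_q$. Fixing a basis $w_1, \ldots, w_m$ of $U_m$ and writing $u = \sum_l \bar u_l w_l$, this amounts to requiring that the polynomial $P_v(\bar u, t) := f(u_0 + \sum_l \bar u_l w_l + tv) - f(u_0 + \sum_l \bar u_l w_l)$, viewed as an element of $\F_q[\bar u_1, \ldots, \bar u_m, t]$ whose coefficients are polynomials in the coordinates of $v$, vanishes as a function on $\F_q^{m+1}$.

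Each monomial $\bar u^\beta t^\gamma$ of $P_v$ with $\gamma \ge 1$ and $|\beta|+\gamma \le d$ yields one polynomial equation on $v$: its coefficient is a polynomial in the coordinates of $v$ of degree exactly $\gamma$, which must vanish. Summing the degrees of all such constraints gives $\sum_{\gamma=1}^{d} \gamma \binom{m+d-\gamma}{d-\gamma}$, and the substitution $j = d-\gamma$ converts this to $\sum_{j=0}^{d-1}(d-j)\binom{m+j}{j}$; in the notation of (\ref{eq:bound}) this is precisely the sum appearing there, evaluated at $k = m+1$. The largest single-equation degree is $d$.

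To actually produce such a $v$ outside $U_m$, I would invoke the Ax--Katz refinement of Chevalley--Warning: the common zero set of the system over $\F_q$ has cardinality divisible by $q^s$, where $s = \lceil (n-D)/e \rceil$ with $D$ the sum of the degrees and $e$ the maximum degree. Every vector of $U_m$ trivially satisfies all constraints (since $v \in U_m$ makes $u_0 + U_m + \spn(v)$ coincide with $u_0 + U_m$), so the zero set already contains at least $q^m$ vectors; hence it suffices to guarantee $s \ge m+1$, i.e.\ $n > m d + D$. Writing $S_k := \sum_{j=0}^{d-1}(d-j)\binom{k+j-1}{j}$ for brevity, the preceding count gives $D = S_{m+1}$, while the minimality of $k$ in (\ref{eq:bound}) forces $n > (k-1) + (d+1) S_{k-1}$. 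A direct algebraic comparison, expanding $S_k - S_{k-1}$ via Pascal's identity, shows that the latter quantity is at least $(k-1) d + S_k$, which is the worst-case Ax--Katz condition at the final step $m = k-1$; the factor $(d+1)$ in (\ref{eq:bound}) is just large enough to absorb both $D$ and the $(k-1) d$ term. Since $m d + S_{m+1}$ is increasing in $m$, the same bound on $n$ handles every intermediate step as well.

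The main obstacles I anticipate are (i) the case $d \ge q$, where ``vanishing as a polynomial'' and ``vanishing as a function'' on $\F_q^{m+1}$ diverge --- one must first reduce $P_v$ modulo the ideal generated by $\bar u_l^q - \bar u_l$ and $t^q - t$, which can only decrease the number of independent constraints and so only helps the argument --- and (ii) the bookkeeping check that (\ref{eq:bound}) really does imply the Ax--Katz condition at each step $m \le k-1$. The asymptotic claims $k_q(n,d) \ge \ConstStruct \cdot n^{1/(d-1)}$ and, for $d \le \log(n)/10$, $k_q(n,d) = \Omega(d \cdot n^{1/(d-1)})$, then follow by standard Stirling-type estimates of the dominant binomial $\binom{k+d-1}{d-1}$ in the right-hand side of (\ref{eq:bound}).
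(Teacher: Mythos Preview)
Your approach is correct and genuinely different from the paper's. Both arguments are inductive, adding one direction at a time, and both reduce the extension step to showing that a certain system of polynomial equations in the new direction $v$ has more than $q^m$ solutions. The paper packages all the constraints into a single ``indicator'' polynomial $s(y)=\prod_{r\in R}\prod_\alpha(1-f_\alpha(u_0+ry)^{q-1})$ (with $|R|=\min(q,d+1)$) and applies the DeMillo--Lipton--Schwartz--Zippel bound to it; the factor $(d+1)$ in Equation~(\ref{eq:bound}) is exactly the cost of this product over $R$. You instead keep the coefficient equations $c_{\beta,\gamma}(v)=0$ separate and invoke Ax--Katz. This is less elementary --- Ax--Katz is a considerably deeper input than DLSZ --- but it actually buys you a sharper step condition: your requirement $n>md+S_{m+1}$ is strictly weaker than the paper's $n>m+(d+1)S_m$ (using the identity $S_k=\binom{k+d}{d-1}$, one checks $(d+1)S_{k-1}-S_k=\tfrac{dk+1}{k+1}\binom{k+d-1}{d-1}\ge dk+1>(k-1)(d-1)$), so the inequality you flag under obstacle~(ii) does go through, and in fact with room to spare.

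Two small points. First, the coefficient $c_{\beta,\gamma}(v)$ has degree \emph{at most} $\gamma$, not exactly $\gamma$; this only helps the Ax--Katz count. Second, for $d\ge q$ your remark that reduction modulo $\bar u_l^q-\bar u_l$ and $t^q-t$ ``only decreases the number of independent constraints'' is not quite the right invariant: what matters for Ax--Katz is the \emph{sum of degrees} (and the maximum degree), and it is that sum which can only decrease under reduction, since each reduced coefficient has degree bounded by the maximum --- hence at most the sum --- of the degrees of the unreduced coefficients that merge into it. Alternatively, you can sidestep reduction entirely: the unreduced system already has $v=0$ as a solution, so Ax--Katz applied to it gives $\ge q^{m+1}$ solutions to the stronger (unreduced) system, each of which is automatically a solution to the functional condition you want.
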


Few remarks are in order. First, we note that Theorem~\ref{thm:structural result} is tight for $d\le \log(n)/10$. Indeed, one can show that, with probability at most $q^{-\binom{k}{d}}$~\footnote{The expression $\binom{k}{d}$ in the exponent can be replaced by the number of solutions to the equation $r_1 + \ldots + r_k \le d$, where $r_i \in \{0,\ldots, q-1\}$.}, a random degree $d$ polynomial on $n$ variables over $\F_q$ is constant on any fixed affine subspace of dimension $k$. There are at most $q^{(k+1)n}$ affine subspaces of dimension $k$, so by the union bound, $k_q(n,d)$ must be smaller than any $k$ such that $\binom{k}{d} > (k+1)n$. Hence, $k_q(n,d) < d^{1+1/(d-1)} \cdot n^{1/(d-1)}$. For $d \le \log(n)/10$, the ratio between our upper and lower bound is $d^{O(1/d)} = 1 + O(\log(d)/d)$.

For the special case $q=2$, based on the work of Ben-Eliezer~\etal~\cite{BHL09}, one can say something stronger regarding the tightness of Theorem~\ref{thm:structural result}. Namely, for every $d \ge 1$, there exists a degree $d$ polynomial $f \colon \F_2^n \to \F_2$ that has bias $2^{-\Omega(k/d)}$ on any affine subspace of dimension $k \ge \Omega(d \cdot n^{1/(d-1)})$ (see Section~\ref{sec:tightness}). In the language of pseudorandomness, Theorem~\ref{thm:structural result} states that a degree $d \le \log{(n)}/10$ polynomial is not an affine disperser for dimension $o(d \cdot n^{1/(d-1)})$, and in particular, polynomials with constant degree are not affine dispersers for sub-polynomial dimension. The tightness results mentioned above, imply that there exists a degree $d$ polynomial which is an affine disperser for dimension $k = O(d \cdot n^{1/(d-1)})$, over any finite field. Moreover, for the special case $q=2$, there exists a degree $d$ polynomial that is an affine extractor for the same dimension $k = O(d \cdot n^{1/(d-1)})$, with bias $2^{-\Omega(k/d)}$.

While the results of Barrington and Tardos~\cite{TB98} concern the ring $\Z_q$, where $q$ is a prime power, our results concern the field $\F_q$, making the results incomparable in that sense. However, \cite{TB98} guarantees the existence of a cube (as defined above), which is weaker than the existence of an affine subspace guaranteed by Theorem~\ref{thm:structural result}. These two notions are equivalent only for the special case $q = 2$. Furthermore, the dimension of the affine subspace obtained by Theorem~\ref{thm:structural result} is $\Omega(n^{1/(d-1)})$, which is larger than $\Omega(n^{1/d})$ -- the dimension of the cube obtained by Barrington and Tardos. Although this difference may seem small, it is crucial for one of our applications concerning a reduction from affine extractors to affine dispersers (see Section~\ref{sec:reduction from extractors to dispersers}). On the other hand, the cube obtained by Barrington and Tardos has a structure that is necessary for their application (the latter concerns the minimum degree of a polynomial over rings representing the $\OR$ function), namely, the vectors $\Delta_1, \ldots, \Delta_k$ are not only linearly independent over $\Z_q$, but in fact have disjoint supports.

Note that the bound on $k_q(n,d)$ in Theorem~\ref{thm:structural result} is independent of $q$. That is, when considering bounded degree polynomials, the field size does not affect $k_q(n,d)$. To be more precise, one can replace the term $(d+1)$ that multiplies the sum in Equation~\eqref{eq:bound} with $\min(d+1,q)$. In any case, the term $\min(d+1,q)$ has no affect over $k_q(n,d)$ since the $d-1$ root is taken to isolate $k$ in the equation. Throughout the paper we focus on low degree polynomials -- polynomials of degree up to $\log(n)/10$. In this range of parameters, Theorem~\ref{thm:structural result} and the fact that it is tight, allow us to suppress the field size and write $k(n,d)$ instead of $k_q(n,d)$, as we do from here on.

When the degree of the polynomial is unbounded (which  boils down to the question of understanding the parameters of optimal affine dispersers), things behave differently. In other words, by increasing the field size, one can obtain affine dispersers for smaller dimension. For example, it is known that any function $f : \F_2^n \to \F_2$ is constant on some affine subspace with dimension $\Omega(\log{n})$. Namely, $k_2(n,\infty) = \Omega(\log{n})$ (this is, in fact, tight). On the other hand, Gabizon and Raz~\cite{GR08} noted that the polynomial $x_1^1 + x_2^2 + \cdots + x_n^n$ over the field with $n+1$ elements is not constant on any dimension $1$ affine subspace (see also~\cite{DG10}). Thus, $k_{n+1}(n,\infty) = 1$. Understanding the correct value of $k_3(n,\infty)$ seems to be an interesting open problem.

\paragraph{Partition of $\F^n$ to affine subspaces, induced by a low degree polynomial.}

\medskip
Theorem~\ref{thm:structural result} states that for any degree $d$ polynomial $f$ on $n$ variables, there exists at least one large affine subspace, restricted to which, $f$ is constant. However, for some of our applications we need a stronger structural result. More specifically, we ask what is the maximum number $\mathcal{K} = \mathcal{K}_q(n,d)$, such that any degree $d$ polynomial on $n$ variables over $\F_q$, induces a \emph{partition} of $\F_q^n$ to dimension $\mathcal{K}$ affine subspaces, on each of which $f$ is constant. Using Theorem~\ref{thm:structural result}, we show that $\mathcal{K}_q(n,d) = \Omega(n^{1/(d-1)!})$. That is, we obtain the following result.

\begin{maintheorem}[Structural Result II]\label{thm:second structural result over Fq}
There exists a universal constant $\ConstStructTwo > 0$ such that the following holds. Let $q$ be a prime power. Let $f \colon \F_q^n \to \F_q$ be a degree $d$ polynomial. Then, there exists a partition of $\F_q^n$ to affine subspaces (not necessarily shifts of the same subspace), each of dimension $\ConstStructTwo \cdot n^{1/(d-1)!}$, such that $f$ is constant on each part.
\end{maintheorem}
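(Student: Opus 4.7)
The plan is to prove Theorem~\ref{thm:second structural result over Fq} by induction on the degree $d$, using Theorem~\ref{thm:structural result} once per recursion level to reduce to a lower-degree problem on the cosets of a fixed subspace. The base case $d=1$ is immediate: since $f$ is affine, its (at most $q$) level sets partition $\F_q^n$ into affine hyperplanes of dimension $n-1 = \Omega(n^{1/0!})$, as required.

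For the inductive step, I would write the homogeneous decomposition $f = f_d + f_{d-1} + \cdots + f_0$ and apply Theorem~\ref{thm:structural result} to the top part $f_d$ with the distinguished point $u_0 = 0$. This produces a linear subspace $V \subseteq \F_q^n$ of dimension $k = \Omega(n^{1/(d-1)})$ on which $f_d$ is constant; since $V$ contains $0$ and $f_d(0) = 0$ (as $f_d$ is homogeneous of positive degree), that constant is $0$, so in fact $f_d \equiv 0$ on $V$. The payoff is a degree drop on every translate at once: for any $y \in \F_q^n$, writing $f(x) = \sum_\alpha c_\alpha x^\alpha$ and expanding $(y+u)^\alpha$ shows that the degree-$d$ homogeneous part of $u \mapsto f(y+u)$ is exactly $f_d(u)$ (the coefficient of $u^\alpha$ coming from $(y+u)^\alpha$ is $1$). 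Since $f_d|_V \equiv 0$, the function $u \mapsto f(y+u)$ is a polynomial of degree at most $d-1$ on $V$, simultaneously for every $y$.

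I would then fix a system of coset representatives for $V$, and for each such $y$ invoke the induction hypothesis on the degree-$(d-1)$ polynomial $u \mapsto f(y+u)$ living on $V \cong \F_q^k$. This partitions each coset $y + V$ into affine subspaces of dimension $\Omega(k^{1/(d-2)!}) = \Omega(n^{1/((d-1)(d-2)!)}) = \Omega(n^{1/(d-1)!})$ on each of which $f$ is constant, and taking the union of these per-coset partitions over all cosets yields a partition of $\F_q^n$ with the claimed bound, completing the induction.

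The step I expect to be the main obstacle is the conceptual move of applying Theorem~\ref{thm:structural result} to $f_d$ rather than to $f$ itself: applying it directly to $f$ would give constancy on a single affine subspace but no uniform control on the other translates, whereas forcing the top-degree part to vanish on $V$ is precisely what enables the recursion coset-by-coset. A secondary, routine check is that the absolute constant stays bounded below: if $c^{(d)}$ denotes the constant at depth $d$, unrolling the recursion gives $c^{(d)} = c^{(1)} \cdot \ConstStruct^{\sum_{j=0}^{d-2} 1/j!}$, and since $\sum_{j=0}^{\infty} 1/j! = e$, we obtain $c^{(d)} \ge c^{(1)} \cdot \ConstStruct^{e}$, supplying a universal $\ConstStructTwo > 0$ independent of $d$.
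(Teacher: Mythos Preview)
Your proof is correct and follows the same inductive scheme as the paper: find a large subspace on which the degree drops by one, observe the drop propagates to every coset, recurse, and track the constant via $\sum_{j\ge 0} 1/j! = e$. The only difference is how the degree drop on \emph{all} cosets is established. You apply Theorem~\ref{thm:structural result} to the top homogeneous part $f_d$ (at the origin) and use that the degree-$d$ part of $u\mapsto f(y+u)$ is $f_d(u)$ regardless of $y$. The paper instead applies Theorem~\ref{thm:structural result} to $f$ itself, obtaining an affine subspace $u_0+U$ on which $f$ is constant (hence of degree $\le d-1$), and then invokes a one-line derivative identity (Claim~\ref{claim:partition to lower degree}): $f(u_1+u)=\frac{\partial f}{\partial(u_1-u_0)}(u_0+u)+f(u_0+u)$, both terms of degree $\le d-1$, so the degree drop transfers to every coset $u_1+U$. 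Thus your remark that applying Theorem~\ref{thm:structural result} directly to $f$ ``would give constancy on a single affine subspace but no uniform control on the other translates'' is not quite right --- that is exactly what the paper does, with Claim~\ref{claim:partition to lower degree} supplying the uniform control. The two devices encode the same fact (the leading-degree behavior of $f$ is translation-invariant); your route is slightly more self-contained, while the paper isolates Claim~\ref{claim:partition to lower degree} as a reusable lemma.
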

We do not know whether the lower bound in Theorem~\ref{thm:second structural result over Fq} for $\mathcal{K}_q(n,d)$ is tight or not for all $d$ (note that it is tight for $d \le 3$), and leave this as an open problem.

\begin{openproblem}
What is the asymptotic behavior of $\mathcal{K}_q(n,d)$? Does it depend on $q$ for, say, constant $d$?
\end{openproblem}

\paragraph{Generalization of the structural results to many polynomials.}
Being a natural generalization and also necessary for some of our applications, we generalize the two structural results to the case of any number of polynomials (see Section~\ref{sec:many poly}). Let $f_1, \ldots, f_t \colon \F_q^n \to \F_q$ be polynomials of degree at most $d$. The generalization of the first structural result states that there exists an affine subspace of dimension $\Omega((n/t)^{1/(d-1)})$ on which \emph{each} of the $t$ polynomials is constant (see Theorem~\ref{thm:structural result I for many polynomials}). By applying a probabilistic argument, one can show that the dependency in $t$ is tight. For the second structural result, the promised dimension in Theorem~\ref{thm:second structural result over Fq} is replaced by $\Omega(n^{1/(d-1)!} / t^e)$, where $e$ is the base of the natural logarithm (see Theorem~\ref{thm:structural result II for many polynomials}).

\paragraph{The algorithmic aspect.}
We further study the algorithmic aspect of the structural results (see Section~\ref{sec:alg}). We devise a $\poly(n)$-time algorithm (see Theorem~\ref{thm:efficient structural result I}), that given a degree $d$ polynomial $f \colon \F_2^n \to \F_2$ as a black-box, performs $\poly(n)$ queries, and outputs a subspace of dimension $\Omega(k(n,d))$, restricted to which, $f$ has degree at most $d-1$. By applying this algorithm recursively $d$ times, one can efficiently obtain a subspace of dimension $\Omega(n^{1/(d-1)!})$ on which $f$ is constant. Our algorithm only works for the binary field. Devising an algorithm for general fields is a natural problem.

Note that there is a gap between $k(n,d)$ and the dimension of the affine subspace that our algorithm produce. A natural open problem is whether  this gap can be eliminated.

\begin{openproblem}\label{op:algorithm}
Is there a $\poly(n)$-time algorithm that, given a black-box access to a degree $d$ polynomial $f \colon \F_2^n \to \F_2$, finds an affine subspace with dimension $k(n,d)$ on which $f$ is constant ?
\end{openproblem}

Whether there exists an algorithm as in Problem~\ref{op:algorithm} is not at all clear to us. Verifying that a degree $d$ polynomial is constant on a given affine subspace with dimension $k(n,d)$ can be done in time $O(k(n,d)^d) \le O(n^2)$, and it might be the case that this problem is expressive enough to be $\NP$-hard. We show that the latter scenario is unlikely, at least for constant $d$, by devising an $\exp(n^{1 - \frac1{d-1}}) \cdot n^d$-time algorithm that outputs an affine subspace with dimension $\Omega(k(n,d))$ on which $f$ is constant (see Theorem~\ref{thm:subexp alg}). We note that the naive algorithm iterates over all $\binom{2^n}{k(n,d)} = \exp(n^{1+\frac1{d-1}})$ affine subspaces with dimension $k(n,d)$. It is also worth mentioning that this algorithm works for all finite fields.

\paragraph{Sparse polynomials.}

We further give an analog of the first structural result to sparse polynomials (regardless of their degree) over any finite field. We have the following.

\begin{maintheorem}\label{thm:sparse}
Let $q$ be a prime power. For any integer $c \ge 1$ the following holds. Let $f$ be a polynomial on $n$ variables over $\F_q$, with at most $n^c$ monomials. Then, there exists an affine subspace of dimension $\Omega\left(n^{1/(4(q-1)c)}\right)$ on which $f$ is constant.
\end{maintheorem}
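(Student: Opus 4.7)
The plan is to split the monomials of $f$ into a ``low degree'' part and a ``high degree'' part, use a probabilistic argument to find a coordinate subset $F$ that avoids every high-degree monomial, and then apply Theorem~\ref{thm:structural result} to the low-degree restriction on $F$. Set $D \deff 2(q-1)c + 1$. Call a monomial of $f$ \emph{high} if its total degree exceeds $D$, and \emph{low} otherwise. Since every variable has individual degree at most $q-1$ (after reducing modulo $x_i^q - x_i$), a high monomial $M$ must involve at least $\lceil D/(q-1)\rceil \ge 2c+1$ distinct variables; write $T_M \subseteq [n]$ for its support.

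The first step is to find a set $F \subseteq [n]$ of size $k \deff \lfloor \sqrt{n} \rfloor$ such that $T_M \not\subseteq F$ for every high monomial $M$. Picking $F$ uniformly among $k$-subsets of $[n]$, and bounding, for each high $M$,
\[
\Pr[T_M \subseteq F] \;=\; \frac{\binom{n-|T_M|}{k-|T_M|}}{\binom{n}{k}} \;\le\; \left(\frac{k}{n}\right)^{|T_M|} \;\le\; \left(\frac{k}{n}\right)^{2c+1},
\]
the union bound over the at most $n^c$ monomials gives
\[
\Pr[\exists\, \text{high}\, M: T_M \subseteq F] \;\le\; n^c \cdot \left(\frac{k}{n}\right)^{2c+1} \;=\; \frac{k^{2c+1}}{n^{c+1}} \;\le\; n^{-1/2} \;<\; 1.
\]
Hence such an $F$ exists (and the argument is trivial when $2c+1 > k$, since then no high monomial can possibly be contained in $F$).

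Fix such an $F$ and set all variables $x_i$ with $i \notin F$ to $0$. In the resulting polynomial $g \colon \F_q^F \to \F_q$, only monomials $M$ with $T_M \subseteq F$ survive, and by choice of $F$ all such monomials are low, i.e.\ $\deg g \le D$. Now apply Theorem~\ref{thm:structural result} to $g$ on $k = \lfloor \sqrt n \rfloor$ variables with degree bound $D$: it yields an affine subspace $U \subseteq \F_q^F$ of dimension
\[
\Omega\!\left(k^{1/(D-1)}\right) \;=\; \Omega\!\left(n^{\,1/(2 \cdot 2(q-1)c)}\right) \;=\; \Omega\!\left(n^{\,1/(4(q-1)c)}\right)
\]
on which $g$ is constant. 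Lifting $U$ back to $\F_q^n$ by padding the coordinates outside $F$ with zeros produces an affine subspace of the same dimension on which $f$ is constant.

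The only non-routine step is the choice of $D$: it has to be small enough so that $k^{1/(D-1)} \ge n^{1/(4(q-1)c)}$, yet large enough so that the expected number of bad high monomials contained in $F$ is below $1$. Setting $D = 2(q-1)c+1$ balances these two requirements exactly, and the field size $q$ enters only through the conversion between total degree and number of distinct variables used by a monomial, which is where the $(q-1)$ factor in the exponent originates.
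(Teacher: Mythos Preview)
Your proof is correct and follows essentially the same route as the paper: kill the monomials involving many variables by a random coordinate restriction, leaving a polynomial of degree $O((q-1)c)$ on $\Omega(\sqrt{n})$ variables, and then invoke Theorem~\ref{thm:structural result}. The only cosmetic difference is that the paper samples each coordinate independently (Bernoulli with survival probability $\approx n^{-1/2}$) and uses Chernoff to control the number of survivors, whereas you sample a uniformly random $k$-subset of size $\lfloor\sqrt n\rfloor$; the resulting degree thresholds differ by $1$ but both land at the same final bound $\Omega(n^{1/(4(q-1)c)})$.
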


We note that unlike in the case of low degree polynomials, the field size $q$ does affect the dimension of the affine subspace promised by Theorem~\ref{thm:sparse}. Some sort of dependency cannot be avoided. Indeed, as mentioned above, the polynomial $x_1^1 + x_2^2 + \cdots + x_n^n$ over the field with $n+1$ elements is not constant on any dimension $1$ affine subspace, even though it has only $n$ monomials. On the other hand, Theorem~\ref{thm:sparse} gives no guarantee already for $q = \Omega(\log{n})$, while the example above requires fields of size $\Omega(n)$. We leave open the problem of improving upon the dependency of Theorem~\ref{thm:sparse} in the field size $q$, or proving that this dependency is optimal.

\begin{openproblem}
What is the correct dependency in the field size $q$ for the class of sparse polynomials ?
\end{openproblem}

We note that for the special case $q = 2$, the lower bound in Theorem~\ref{thm:sparse} is $\Omega\left(n^{1/(4c)}\right)$, which is essentially tight up to the constant $4$ in the exponent, as implied by our tightness result for degree $d$ polynomials. We do not know whether the constant $4$ is necessary. Indeed, for degree $d$ polynomials (which may have $n^d$ monomials), the guarantee given by Theorem~\ref{thm:structural result} is stronger, namely, $\Omega\left(n^{1/(d-1)}\right)$.

\paragraph{Functions that are close to low degree polynomials.}

Theorem~\ref{thm:structural result} implies that any function that is close to a low degree polynomial, is constant on some large affine subspace.

\begin{corollary}\label{cor:approx}
Let $q$ be a prime power. Let $g : \F_q^n \to \F_q$ be a function that agrees with some degree $d$ polynomial $f : \F_q^n \to \F_q$ on all points but for some subset $B \subseteq \F_q^n$. Then, there exists an affine subspace with dimension
$
 \Omega( (n - \log_q(|B|))^{1/(d-1)})
$
on which $g$ is constant.
\end{corollary}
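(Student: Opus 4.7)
The plan is to reduce to Theorem~\ref{thm:structural result} applied to $f$ by first passing to a large affine subspace of $\F_q^n$ that is entirely disjoint from the exceptional set $B$. Inside such a subspace, $g$ and $f$ agree, so any affine subspace on which $f$ is constant automatically witnesses the same property for $g$.

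First, I would carry out a standard first-moment argument. Fix any linear subspace $W \subseteq \F_q^n$ of dimension $m$ and sample $u_0 \in \F_q^n$ uniformly at random. For each fixed $x \in B$, the event $x \in u_0 + W$ has probability exactly $q^{m-n}$, so the expected size of $B \cap (u_0+W)$ is $|B|\cdot q^{m-n}$. Choosing $m$ to be the largest nonnegative integer for which this expectation is strictly less than $1$ gives $m \ge n - \log_q|B| - 1$, and guarantees the existence of a shift $u_0$ with $(u_0 + W)\cap B = \emptyset$.

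Second, I would apply Theorem~\ref{thm:structural result} within $u_0 + W$. Identifying $u_0+W$ with $\F_q^m$ via any affine isomorphism, the restriction $\tilde f \colon \F_q^m \to \F_q$ of $f$ is again a polynomial of degree at most $d$. Theorem~\ref{thm:structural result} then yields an affine subspace $A \subseteq u_0 + W$ of dimension $\Omega(m^{1/(d-1)}) = \Omega\bigl((n-\log_q|B|)^{1/(d-1)}\bigr)$ on which $\tilde f$, and hence $f$, is constant; since $A$ is disjoint from $B$ we have $g|_A = f|_A$, so $g$ is constant on $A$. There is no real obstacle here -- the only thing to track is that $m$ be large enough for Theorem~\ref{thm:structural result} to give a nontrivial bound, but when $n-\log_q|B|$ is bounded by a constant the asymptotic conclusion is vacuous (a single point is a $0$-dimensional affine subspace on which $g$ is trivially constant).
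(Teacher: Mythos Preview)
Your proposal is correct and follows essentially the same approach as the paper: both use an averaging (first-moment) argument to find an affine subspace of dimension roughly $n - \log_q|B|$ disjoint from $B$, and then apply Theorem~\ref{thm:structural result} to the restriction of $f$ to that subspace. The paper's argument is stated more tersely, but the content is identical.
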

To see that, note that by averaging argument there is an affine subspace $w + W$ of dimension $n-\log_q(|B|)-1$ on which $f$ and $g$ agrees.
Applying Theorem~\ref{thm:structural result} to $f|_{w+W}$ gives an affine subspace $u+U \subseteq w + W$ on which $f$, and thus $g$, is constant on.  We suspect that better parameters can be achieved.
%
%

\subsection{Applications}

We now present several applications of our structural results. 

\subsubsection*{Extractors and Dispersers for Varieties over all Finite Fields}

Let $\F$ be some finite field. An affine subspace of $\F^n$ can be thought of as the set of common zeros of one or more degree 1 polynomials with coefficients in $\F$. Recall that an affine extractor over the field $\F$ is a function $f \colon \F^n \to \F$ that has small bias on every large enough affine subspace. In \cite{Dvir12}, the study of the following natural generalization was initiated: construct a function that has small bias on the set of common zeros of one or more degree $d>1$ polynomials. In general, the set of common zeros of one or more polynomials is called a \emph{variety}. For a set of polynomials $g_1, \ldots, g_t$ on $n$ variables over $\F$, we denote their variety by
$$
\Variety(g_1, \ldots, g_t) = \left\{ x \in \F^n : g_1(x) = \cdots = g_t(x) = 0 \right\}.
$$
A function $f \colon \F^n \to \F$ as above is called an extractor for varieties.

In \cite{Dvir12}, two explicit constructions of extractors for varieties were given. For simplicity, we suppress here both the bias of the extractor and the number of output bits. Dvir's first construction works under no assumption on the variety size (more precisely, some assumption is made, but that assumption is necessary). The downside of this construction is that the underlining field is assumed to be quite large, more precisely, $|\F| > d^{\Omega(n^2)}$. The second construction works for fields with size as small as $\poly(d)$, however the construction is promised to work only for varieties with size at least $|\F|^{n/2}$. Dvir applies tools from algebraic geometry for his constructions.

Even the construction of affine extractors, which is a special case of extractors for varieties, is extremely challenging. Indeed, the (far from optimal) constructions known today use either very sophisticated exponential sum estimates~\cite{B07,Y11} or involved composition techniques~\cite{Li11}, where the correctness relies, among other results, on deep structural results from additive combinatorics~\cite{Vin11} and on XOR lemmas for low degree polynomials~\cite{VW07, BKSSZ10}. The same can be said about the constructions of affine dispersers.

Given the difficulties in constructing affine extractors and dispersers, one may suspect that the construction of extractors and dispersers for varieties will be substantially more challenging, especially for small fields that seem to be immune against algebraic geometry based techniques. Nevertheless, based on our structural results, the following theorem states that any affine extractor is also an extractor for varieties with related parameters.


\begin{maintheorem}\label{thm:from affine extractors to varieties for small t intro version}
Let $q$ be a prime power. For any integers $n,d,t$ the following holds.  Let $f \colon \F_q^n \to \F_q$ be an affine extractor for dimension $\Omega(n^{1/(d-1)!} / t^e)$ with bias $\eps$. Then, $f$ is an extractor with bias $\eps$ for varieties that are the common zeros of any $t$ polynomials, each of degree at most $d$.
\end{maintheorem}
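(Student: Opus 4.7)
The plan is to combine the multi-polynomial version of the second structural result (Theorem~\ref{thm:structural result II for many polynomials}) with the definition of an affine extractor, via the observation that a variety cut out by polynomials that are constant on every part of a partition is itself a union of parts.

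First, given polynomials $g_1,\dots,g_t$ of degree at most $d$, I would apply Theorem~\ref{thm:structural result II for many polynomials} to obtain a partition $\F_q^n = \bigsqcup_i A_i$ into affine subspaces, each of dimension $k = \Omega(n^{1/(d-1)!}/t^e)$, such that every $g_j$ is \emph{constant} on every part $A_i$. The key observation is then that each $A_i$ is either entirely contained in the variety $V = \Variety(g_1,\dots,g_t)$, or entirely disjoint from it: indeed, if all $g_j$ take the value $0$ on $A_i$ then $A_i \subseteq V$, and otherwise at least one $g_j$ is a nonzero constant on $A_i$, so $A_i \cap V = \emptyset$. Consequently $V$ is a disjoint union of affine subspaces, each of dimension $k$.

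Next, I would write the uniform distribution on $V$ as a convex combination of the uniform distributions on the parts lying inside $V$, weighted by $|A_i|/|V|$. By hypothesis, $f$ is an affine extractor for dimension $k$ with bias $\eps$, so for each part $A_i \subseteq V$, the distribution $f(\U{A_i})$ is $\eps$-close in statistical distance to $\U{\F_q}$. Since statistical distance is convex in the input distribution, the distribution $f(\U{V})$ is also $\eps$-close to $\U{\F_q}$, which is exactly what it means for $f$ to be an extractor for $V$ with bias $\eps$.

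The only technical point to be careful about is the empty case: if $V$ is empty, the statement is vacuous (there is nothing to extract from), so I would note this aside at the start and otherwise assume $V \neq \emptyset$, which guarantees at least one $A_i \subseteq V$ of the promised dimension. The main substance of the argument is the structural result itself; once that is in hand, the reduction is a short convexity argument, and there is no real obstacle beyond invoking Theorem~\ref{thm:structural result II for many polynomials} with the right parameters.
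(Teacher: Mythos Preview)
Your proposal is correct and follows essentially the same argument as the paper: apply Theorem~\ref{thm:structural result II for many polynomials} to partition $\F_q^n$ into affine subspaces of dimension $\Omega(n^{1/(d-1)!}/t^e)$ on which all the $g_j$ are constant, observe that the variety is a union of parts, and conclude by convexity of statistical distance. The paper's proof is organized identically, differing only in notation.
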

In fact, one can view Theorem~\ref{thm:from affine extractors to varieties for small t intro version} as an explanation for the difficulty of constructing affine extractors for dimension $n^\delta$ for constant $\delta < 1$. 

We also obtain a reduction that does not depend on the number of polynomials defining the variety, but rather on the variety size (see Theorem~\ref{thm:from affine extractors to varieties for any t}). The proof idea in this case is to ``approximate'' the given variety by a variety induced by a small number of low degree polynomials, and then apply Theorem~\ref{thm:from affine extractors to varieties for small t intro version}.

The state of the art explicit constructions of affine extractors for the extreme case $q = 2$, work only for dimension $\Omega(n/\sqrt{\log{\log{n}}})$ \cite{B07,Y11,Li11}, and thus the reduction in Theorem~\ref{thm:from affine extractors to varieties for small t intro version} only gives an explicit construction of an extractor for varieties defined by quadratic polynomials (and in fact, up to $(\log{\log{n}})^{1/(2e)}$ quadratic polynomials). However, a similar reduction to that in Theorem~\ref{thm:from affine extractors to varieties for small t intro version} also holds for dispersers.

\begin{maintheorem}\label{thm:from affine disperser to varieties for small t}
Let $n,d,t$ be integers such that $d < \log(n/t)/10$. Let $f \colon \F_q^n \to \F_q$ be an affine disperser for dimension $\Omega(d \cdot (n/t)^{1/(d-1)})$. Then, $f$ is a disperser for varieties that are the common zeros of any $t$ polynomials of degree at most $d$.
\end{maintheorem}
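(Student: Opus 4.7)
The plan is to reduce the variety case directly to the affine case by finding, inside any non-empty variety $V = \Variety(g_1,\ldots,g_t)$, a large affine subspace that lies entirely inside $V$; since $f$ is promised to be a non-constant (disperser) on every subspace of the required dimension, it must then be non-constant on $V$ itself. Concretely, I would proceed by contrapositive / direct argument: fix any $t$ polynomials $g_1,\ldots,g_t \colon \F_q^n \to \F_q$ each of degree at most $d$, and assume $V = \Variety(g_1,\ldots,g_t)$ is non-empty (otherwise there is nothing to disperse from). Pick an arbitrary anchor point $u_0 \in V$, so that $g_i(u_0)=0$ for every $i \in [t]$.

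The next step is to invoke the multi-polynomial version of Structural Result I (Theorem~\ref{thm:structural result I for many polynomials}) on the polynomials $g_1,\ldots,g_t$, with the given base point $u_0$. Here I rely on the fact, emphasized in the statement of Theorem~\ref{thm:structural result}, that one may prescribe the point $u_0$ rather than merely getting an affine subspace somewhere in $\F_q^n$; this anchoring is exactly what is needed for the reduction to go through. The conclusion produces a linear subspace $U \subseteq \F_q^n$ of dimension $k = \Omega\bigl(d\cdot (n/t)^{1/(d-1)}\bigr)$ on which each $g_i$ is constant on $u_0 + U$. Combined with $g_i(u_0) = 0$, this upgrades ``constant'' to ``identically zero'', so $g_i$ vanishes on $u_0 + U$ for every $i$, which means $u_0 + U \subseteq V$.

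Finally, since $u_0 + U$ is an affine subspace of dimension $k = \Omega\bigl(d\cdot (n/t)^{1/(d-1)}\bigr)$, and $f$ is by assumption an affine disperser for exactly this dimension, $f$ is non-constant on $u_0 + U$ and therefore non-constant on the larger set $V \supseteq u_0 + U$. This establishes that $f$ is a disperser for varieties cut out by $t$ polynomials of degree at most $d$.

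The only real obstacle is quantitative: the dimension guaranteed by Structural Result~I in its stated ``any $n, d$'' form would give only $\Omega((n/t)^{1/(d-1)})$, whereas the theorem here asks for the sharper bound with the extra factor of $d$. This is precisely why the hypothesis $d < \log(n/t)/10$ appears: in that regime Theorem~\ref{thm:structural result}, together with its many-polynomial extension, yields the improved lower bound $\Omega\bigl(d\cdot (n/t)^{1/(d-1)}\bigr)$ on $\dim U$, which matches the disperser parameter and closes the argument.
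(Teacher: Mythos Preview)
Your argument is correct and matches the paper's proof essentially line for line: pick $u_0$ in the (non-empty) variety, apply Theorem~\ref{thm:structural result I for many polynomials} anchored at $u_0$ to get an affine subspace $u_0+U \subseteq \Variety(g_1,\ldots,g_t)$ of dimension $\Omega(d\cdot(n/t)^{1/(d-1)})$, and conclude by the affine disperser property of $f$. Your remarks about the role of the anchoring of $u_0$ and the hypothesis $d < \log(n/t)/10$ for the extra factor of $d$ are also exactly on point.
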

Over $\F_2$, an explicit construction of an affine disperser for dimension as small as $2^{\log^{0.9}{n}}$ is known~\cite{S11}. Thus, we obtain the first disperser for varieties over $\F_2$.

\begin{maintheorem}\label{thm:explicit disperser intro version}
For any $n,d,t$ such that $d < (1-o_n(1)) \cdot \frac{\log{(n/t)}}{\log^{0.9}{n}}$, there exists an explicit construction of an affine disperser for varieties which are the common zeros of any $t$ polynomials of degree at most $d$. In particular, when $t \le n^\alpha$ for some constant $\alpha < 1$, the requirement on the degree is $d < (1-\alpha - o_n(1)) \cdot \log^{0.1}{n}$.
\end{maintheorem}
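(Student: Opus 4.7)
The plan is to invoke Theorem~\ref{thm:from affine disperser to varieties for small t} with the explicit affine disperser of Shaltiel~\cite{S11}, which over $\F_2$ is an affine disperser for dimension $2^{\log^{0.9} n}$. By Theorem~\ref{thm:from affine disperser to varieties for small t}, to conclude that Shaltiel's function $f \colon \F_2^n \to \F_2$ is a disperser for varieties defined by $t$ polynomials of degree at most $d$, it suffices to verify (i) that $d < \log(n/t)/10$, and (ii) that
\begin{equation*}
2^{\log^{0.9} n} \;\le\; C \cdot d \cdot (n/t)^{1/(d-1)}
\end{equation*}
for the universal constant $C$ implicit in that theorem, so that Shaltiel's construction disperses on a dimension no larger than $\Omega(d \cdot (n/t)^{1/(d-1)})$.

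For (ii), I would take $\log_2$ of both sides and obtain the equivalent inequality
\begin{equation*}
\log^{0.9} n \;\le\; \log_2(Cd) \,+\, \frac{\log(n/t)}{d-1}.
\end{equation*}
In the parameter regime of interest, $d$ is at most $\log^{0.1} n$, so $\log_2(Cd) = O(\log \log n) = o(\log^{0.9} n)$, and the displayed condition reduces to $d - 1 \le (1+o_n(1)) \cdot \frac{\log(n/t)}{\log^{0.9} n}$. The hypothesis $d < (1 - o_n(1)) \cdot \frac{\log(n/t)}{\log^{0.9} n}$ implies this (the two $o_n(1)$ terms together absorb both the additive $-1$ and the lower-order $\log_2(Cd)$ contribution). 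Condition~(i) is then automatic, since in this range $d < \frac{\log(n/t)}{\log^{0.9} n} \ll \frac{\log(n/t)}{10}$ for all sufficiently large $n$.

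For the ``in particular'' clause, when $t \le n^\alpha$ with $\alpha < 1$ constant, one simply plugs in $\log(n/t) \ge (1-\alpha) \log n$, turning the hypothesis into
\begin{equation*}
d \;<\; (1 - o_n(1)) \cdot \frac{(1-\alpha) \log n}{\log^{0.9} n} \;=\; (1 - \alpha - o_n(1)) \cdot \log^{0.1} n,
\end{equation*}
which matches the stated bound. I do not foresee a substantive obstacle: the statement is essentially a direct consequence of composing Shaltiel's explicit affine disperser with the black-box reduction provided by Theorem~\ref{thm:from affine disperser to varieties for small t}, and the only real work is the elementary parameter calculation above.
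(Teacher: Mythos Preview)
Your proposal is correct and matches the paper's approach: the paper notes that Theorem~\ref{thm:explicit disperser intro version} follows immediately by plugging Shaltiel's affine disperser for dimension $2^{\log^{0.9} n}$ into Theorem~\ref{thm:from affine disperser to varieties for small t}, and that the only thing to check is that $(n/t)^{1/(d-1)} = \Omega(2^{\log^{0.9} n})$. You carry out exactly this verification, with a somewhat more explicit treatment of the lower-order terms than the paper provides.
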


A few words regarding the limitation of the reduction in Theorem~\ref{thm:from affine disperser to varieties for small t} are in order. Note that even if $f$ is an optimal affine disperser, that is, a disperser for dimension $O(\log{n})$, Theorem~\ref{thm:from affine disperser to varieties for small t} only guarantees that $f$ is a disperser for varieties defined by degree $O(\log{n})$ polynomials. One cannot expect much more from the reduction. Indeed, there exists a degree $O(\log{n})$ polynomial that computes an optimal affine disperser (this can be proven via a probabilistic argument. See also Theorem~\ref{thm:affine construction}). However, this affine disperser is clearly not a disperser for varieties defined by even a single degree $O(\log{n})$ polynomial.

Thus, the reduction in Theorem~\ref{thm:from affine disperser to varieties for small t} is useful only for varieties defined by degree $o(\log{n})$ polynomials. A recent work of Hrube\v{s} and Rao~\cite{HR14} shows that it would be challenging to construct an explicit $f$ which is an extractor (or even a disperser) for varieties of size $2^{\rho n}$ defined by degree $n^{\eps}$ polynomials over $\F_2$, for any constants $0 < \eps, \rho<1$. Indeed, such a function would solve Valiant's problem \cite{Val77}, since $f$ cannot be computed by Boolean circuits of logarithmic depth and linear size.

\subsubsection*{From Affine Dispersers to Affine Extractors}
Constructing an affine disperser is, by definition, an easier task than constructing an affine extractor. Nevertheless, Ben-Sasson and Kopparty~\cite{BSK12} proved (among other results) that any degree $3$ affine disperser is also an affine extractor with comparable parameters.~\footnote{A reduction from ``low rank'' extractors to dispersers in the context of two sources was also obtained, by Ben-Sasson and Zewi~\cite{BZ11}, conditioned on the well-known Polynomial Freiman-Ruzsa conjecture from additive combinatorics.}
Using the extension of Theorem~\ref{thm:structural result} to many polynomials, we are able to generalize the reduction of Ben-Sasson and Kopparty, over prime fields, to any degree $d \ge 3$.

\begin{maintheorem}\label{thm:reduction from extractors to dispersers intro version}
Let $p$ be a prime number. For all $d \ge 3$ and $\delta > 0$, there exists $c = c(d,\delta)$ such that the following holds.
Let $f \colon \F_p^n \to \F_p$ be an affine disperser for dimension $k$, which has degree $d$ as a polynomial over $\F_p$.
Then, $f$ is also an affine extractor for dimension $k' \deff c \cdot k^{d-2}$ with bias $\delta$.
\end{maintheorem}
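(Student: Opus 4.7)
The plan is to mimic the degree-$3$ argument of Ben-Sasson and Kopparty and push it inductively via structural theorems for biased polynomials, using Structural Result I for many polynomials (Theorem~\ref{thm:structural result} extended to $t$ polynomials) as the key geometric ingredient. Suppose for contradiction that $f$ is an affine disperser for dimension $k$ but is \emph{not} an affine extractor for dimension $k' \deff c \cdot k^{d-2}$ with bias $\delta$, where $c = c(d,\delta)$ will be chosen. Then there exists an affine subspace $A \subseteq \F_p^n$ of dimension $k'$ on which $f|_A$ has bias exceeding $\delta$. Since $f$ has degree $d$ over $\F_p$, so does $f|_A$, and it is a biased degree $d$ polynomial over a prime field on $k'$ variables.

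At this point I would invoke the structure theorem of Kaufman and Lovett (together with the quantitative refinements of Haramaty and Shpilka), which states that any degree $d$ polynomial over a prime field whose bias exceeds $\delta$ can be written as $F(g_1,\ldots,g_t)$ where each $g_i$ has degree at most $d-1$ and $t=t(d,\delta)$ depends only on $d$ and $\delta$. Applying this to $f|_A$, we obtain polynomials $g_1,\ldots,g_t \colon A \to \F_p$ of degree $\le d-1$ such that $f|_A$ is a function of the $g_i$'s.

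Now apply the generalization of Structural Result I to many polynomials (Theorem~\ref{thm:structural result I for many polynomials}) to the $t$ polynomials $g_1,\ldots,g_t$ on the $k'$-dimensional ambient space $A$. This produces an affine subspace $A' \subseteq A$ of dimension
\[
\Omega\!\left( (k'/t)^{1/(d-2)} \right)
\]
on which each $g_i$ is constant. Consequently $f|_{A'}$ is itself constant. Choosing the constant $c$ so that $c \cdot k^{d-2} \ge c' \cdot t(d,\delta) \cdot k^{d-2}$ for the absolute constant $c'$ hidden in the $\Omega(\cdot)$ above, we ensure $\dim A' \ge k$, contradicting the assumption that $f$ is an affine disperser for dimension $k$. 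Thus the desired $c(d,\delta)$ is essentially $c' \cdot t(d,\delta)$, which is finite by the inverse Gowers-type theorems of~\cite{HS10} and the rank-bias connection of Kaufman--Lovett.

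The main obstacle, and the reason the reduction is restricted to prime fields, is the need for an inverse theorem converting bias of a degree $d$ polynomial into a small-rank decomposition with polynomials of degree $d-1$: such an inverse theorem is currently available only over prime fields, and its quantitative parameters $t(d,\delta)$ propagate directly into the constant $c(d,\delta)$. A secondary subtlety is that we apply Theorem~\ref{thm:structural result I for many polynomials} inside the subspace $A$, so one must verify that restriction does not increase the degrees of the $g_i$ (it does not) and that the ``many polynomial'' version delivers the $(n/t)^{1/(d-1)}$-type bound even when the $g_i$ are of possibly different degrees all bounded by $d-1$; both are routine, but they are what make the exponent $d-2$ in $k' = c\cdot k^{d-2}$ come out correctly.
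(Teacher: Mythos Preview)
Your proposal is correct and follows essentially the same approach as the paper: contrapositive, then Kaufman--Lovett to decompose the biased restriction $f|_A$ as a function of $t=t(d,\delta)$ polynomials of degree $\le d-1$, then Theorem~\ref{thm:structural result I for many polynomials} to find a dimension-$\Omega((k'/t)^{1/(d-2)})$ affine subspace on which all of them (hence $f$) are constant. One minor note: the Haramaty--Shpilka refinements you mention are not used in the paper's proof of the general statement---they appear only in the separate treatment of the special cases $d=3,4$ (Theorem~\ref{thm:degree 3 4}); for the general $d$ the paper relies solely on Kaufman--Lovett, exactly as you do.
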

Note that Theorem~\ref{thm:reduction from extractors to dispersers intro version} is only interesting in the case where $k^{d-2} < n$.
However, this case is achievable since a random polynomial of degree $d$ is an affine disperser for dimension $O(d \cdot n^{1/(d-1)})$. In particular, Theorem~\ref{thm:reduction from extractors to dispersers intro version} implies that an explicit construction of an optimal affine disperser that has a constant degree as a polynomial, suffices to break the current natural barrier in the construction of affine extractors, namely, constructing affine extractors for dimension $n^{1-\delta}$ for some constant $\delta>0$ (here $\delta = 1/(d-1)$).

On top of Theorem~\ref{thm:structural result}, the key ingredient we use in the proof of Theorem~\ref{thm:reduction from extractors to dispersers intro version} is the work of Kaufman and Lovett~\cite{KL08}, generalizing a result by Green and Tao~\cite{GT09} (see Section~\ref{sec:reduction from extractors to dispersers}).
For $d=4$, we get a better dependency between $k$ and $k'$ based on the work of Haramaty and Shpilka \cite{HS10} (see Theorem~\ref{thm:degree 3 4}).



\subsubsection*{$\boldsymbol{\ACZEROPARITY}$ Circuits and Affine Extractors / Dispersers}\label{intro:circuits affine extractors dispersers}

Constructing affine dispersers, and especially affine extractors, is a challenging task. As mentioned, the state of the art explicit constructions for affine extractors over $\F_2$ work only for dimension $\Omega(n/\sqrt{\log{\log{n}}})$.
By a probabilistic argument however, one can show the existence of affine extractors for dimension $(1+o(1)) \log{n}$ (see Claim~\ref{claim:affine extractors}). Thus, there is an exponential gap between the non-explicit construction and the explicit ones.

It is therefore tempting to try and utilize this situation and prove circuit lower bounds for affine extractors. This idea works smoothly for $\ACZERO$ circuits. Indeed, by applying the work of H{\aa}stad~\cite{Hastad86}, one can easily show that an $\ACZERO$ circuit on $n$ inputs cannot compute an affine disperser for dimension $o(n/\polylog(n))$ (see Corollary~\ref{cor:ACZERO}). However, strong lower bounds for $\ACZERO$ circuits are known, even for much simpler and more explicit functions such as Parity and Majority. Thus, it is far more interesting to prove lower bounds against circuit families for which the known lower bounds are modest. One example would be to show that a De Morgan  formula of size $O(n^3)$ cannot compute a good affine extractor, improving upon the best known lower bound~\cite{H98}.~\footnote{The property of being an affine extractor meets the largeness condition of the natural proof barrier~\cite{RR94}. However, it does not necessarily get in the way of improving existing polynomial lower bounds.}

Somewhat surprisingly, we show that even depth $3$ $\ACZEROPARITY$ circuit (that is, $\ACZERO$ circuits with $\XOR$ gates) can compute an optimal affine extractor over $\F_2$. In fact, the same construction can also be realized by a polynomial-size De Morgan formula and has degree $(1+o(1)) \log{n}$ as polynomial over $\F_2$ (see Theorem~\ref{thm:affine construction}).


Theorem~\ref{thm:affine construction} is implicit in the works of~\cite{Razborov88, Savicky95} who studied a similar problem in the context of bipartite Ramsey graphs (that is, two-source dispersers). We give an alternative proof in Appendix~\ref{sec:depth3 can be affine dispersers}, which can be extended to work also in the context of bipartite Ramsey graphs.

Given that depth $3$ $\ACZEROPARITY$ circuits exhibit the surprising computational power mentioned above, it is natural to ask whether depth $2$ $\ACZEROPARITY$ circuit can compute a good affine extractor. We stress that even depth $2$ $\ACZEROPARITY$ circuits should not be disregarded easily! For example, such circuits \emph{can} compute, in a somewhat different setting, optimal Ramsey graphs (see~\cite{J12}, Section 11.7). Moreover, any degree $d$ polynomial $f \colon \F_2^n \to \F_2$ can be computed by a depth $2$ $\ACZEROPARITY$ circuit with size $n^d$. Nevertheless, we complement the above result by showing that a depth $2$ $\ACZEROPARITY$ circuit cannot compute an affine disperser for sub-polynomial dimension. The proof is based on the following reduction.

\begin{lemma}\label{lemma:reduction from depth 2 to polynomials intro version}
Let $C$ be a depth $2$ $\ACZEROPARITY$ circuit on $n$ inputs, with size $n^c$. Let $k < n/10 - c\log(n)$. If $C$ computes an affine disperser for dimension $k$, then there exists a degree $2c$ polynomial over $\F_2$ on $\sqrt{n} / 5$ variables which is an affine disperser for dimension $k$.
\end{lemma}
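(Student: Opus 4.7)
The plan is as follows: first show the top gate of $C$ must be $\XOR$, and then restrict $C$ to a random coordinate affine subspace of dimension $\sqrt{n}/5$ on which $C$ becomes a polynomial of degree at most $2c$; the resulting polynomial then automatically inherits the affine disperser property from $C$.

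For the first step, suppose for contradiction that the top gate of $C$ is $\AND$ (the $\OR$ case follows by the de Morgan dual, applied to $\neg C$, which is also a disperser of the same size). If any bottom gate is a $\XOR$ or an $\AND$ of literals, then that gate vanishes on an affine subspace of codimension $1$, forcing $C \equiv 0$ on an affine subspace of dimension $n-1 > k$ and contradicting the disperser hypothesis. Hence $C$ must be a CNF $\bigwedge_i (\OR_{j \in T_i} \ell_j)$, and by the same reasoning each clause has fan-in $|T_i| > n - k > 9n/10$. Now pick a uniformly random coordinate affine subspace $V$ of dimension $k+1$ by selecting $k+1$ ``free'' coordinates and fixing the remaining $n-k-1$ coordinates to independent uniform bits. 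For each clause the number of fixed literals is at least $|T_i| - (k+1)$, so the probability (over the bits) that at least one fixed literal equals $1$ is at least $1 - 2^{-(|T_i| - k - 1)} \ge 1 - 2^{-(n - 2k - 1)}$. Union-bounding over the at most $n^c$ clauses gives $\Pr[C|_V \equiv 1] \ge 1 - n^c \cdot 2^{-(n - 2k - 1)} > 0$, where positivity uses the hypothesis $k < n/10 - c\log n$. Thus some such $V$ exists on which $C$ is constant, contradicting $C$ being an affine disperser for dimension $k$. Hence the top gate is $\XOR$.

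For the second step, write $C = \bigoplus_i g_i$ with $g_i$ a bottom gate of fan-in $w_i$, and call a bottom $\AND$ or $\OR$ gate \emph{bad} if $w_i > 2c$. Pick $T \subseteq [n]$ uniformly of size $m = \sqrt{n}/5$ and independent uniform bits $x^*_j \in \F_2$ for $j \notin T$, and set $V = \{x : x_j = x^*_j \text{ for } j \notin T\}$, an affine subspace of dimension $m$. A bad gate $g_i$ contributes degree $> 2c$ to $C|_V$ only if $F_i := |T_i \cap T| > 2c$ and no fixed literal kills $g_i$ (a fixed literal equal to $0$ for $\AND$, or $1$ for $\OR$). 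Set the threshold $w^* := (2c+2)\log n$. For bad gates with $w_i \le w^*$, $\Pr[F_i > 2c] \le \binom{w_i}{2c+1}(m/n)^{2c+1} \le (w^* m/n)^{2c+1}$, and summing over the at most $n^c$ such gates yields $n^c (w^* m/n)^{2c+1} = O((\log n)^{2c+1}/\sqrt{n}) = o(1)$. For bad gates with $w_i > w^*$, the hypergeometric moment generating bound $\E[2^{F_i}] \le (1 + m/n)^{w_i}$ gives $\Pr[g_i \text{ not killed}] = 2^{-w_i} \E[2^{F_i}] \le ((1 + m/n)/2)^{w_i} \le 2^{-w_i/2} \le n^{-(c+1)}$ for large $n$, and summing over the at most $n^c/w^*$ such gates gives $O(n^{-1}/\log n) = o(1)$. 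By the union bound, some choice of $(T, x^*)$ leaves no bad gate spoiled; for that choice every bottom gate contributes degree $\le 2c$ to $C|_V$, and hence $C|_V$ is a polynomial of degree at most $2c$ on the $m = \sqrt{n}/5$ free coordinates of $V$.

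Finally, every affine subspace of $V$ of dimension $k$ is an affine subspace of $\F_2^n$ on which $C$ is non-constant by hypothesis, so $C|_V$ is an affine disperser for dimension $k$, yielding the desired polynomial. The main obstacle is the two-regime probabilistic argument in the second step: the small-fan-in regime is controlled by the hypergeometric tail $\Pr[F_i > 2c]$, while the large-fan-in regime uses the ``not-killed'' probability $\E[2^{-(w_i - F_i)}]$; balancing these requires the threshold $w^* = \Theta(c\log n)$, which in turn dictates the $c\log n$ slack in the hypothesis $k < n/10 - c\log n$ (also used to make the Phase~1 union bound positive).
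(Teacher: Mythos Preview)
Your proof is correct and follows the same two-phase plan as the paper (show the top gate must be $\XOR$, then apply a random restriction), but both phases are implemented differently.

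For Phase~1, the paper dispatches the non-$\XOR$ top-gate case more quickly: once bottom $\XOR$ gates are ruled out (exactly as you do), $C$ is a depth-$2$ $\ACZERO$ circuit, and the paper simply invokes the H{\aa}stad-based Corollary~\ref{cor:ACZERO}, which is precisely where the hypothesis $k<n/10-c\log n$ enters. Your direct argument---forcing each clause to have fan-in $>n-k$ and then satisfying all clauses by a random assignment of $n-k-1$ coordinates---is more elementary and self-contained, at the cost of a few extra lines.

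For Phase~2, the paper's argument is notably simpler than yours. It uses the standard \emph{independent} random restriction (each variable survives with probability $p=\tfrac{1}{4\sqrt n}$, otherwise set uniformly) and bounds, for an $\AND$ gate of original fan-in $m$,
\[
\Pr[m'\ge 2c]=\sum_{i=2c}^{m}\binom{m}{i}p^i\Bigl(\tfrac{1-p}{2}\Bigr)^{m-i}\le (2p)^{2c}\le (4n)^{-c},
\]
followed by a single union bound over the $\le n^c$ gates and a Chernoff bound for the number of survivors. The point is that the factor $\bigl(\tfrac{1-p}{2}\bigr)^{m-i}$ already encodes the ``not-killed'' event, so no two-regime split or hypergeometric MGF bound is needed. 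Your fixed-size restriction with the $w^*$ threshold also works; two small remarks: the count ``at most $n^c/w^*$'' of large-fan-in gates would require size to mean number of wires, whereas the paper defines size as the number of gates---replacing $n^c/w^*$ by $n^c$ still gives $n^c\cdot n^{-(c+1)}=o(1)$, so nothing breaks; and your closing comment that the threshold $w^*$ ``dictates'' the $c\log n$ slack is off---in both your proof and the paper's, the hypothesis $k<n/10-c\log n$ is used only in Phase~1.
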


The proof of Lemma~\ref{lemma:reduction from depth 2 to polynomials intro version} uses ideas from our proof of the structural result for sparse polynomials (see Lemma~\ref{lem:reduction from sparse}). Lemma~\ref{lemma:reduction from depth 2 to polynomials intro version} together with Theorem~\ref{thm:structural result} imply the following theorem.

\begin{maintheorem}\label{thm:depth 2 are not dispersers intro version}
Let $C$ be a depth $2$ $\ACZEROPARITY$ circuit on $n$ inputs, with size $n^c$, which is an affine disperser for dimension $k$. Then, $k  > k(\sqrt{n}/5,2c) = \Omega(n^{1/4c})$.
\end{maintheorem}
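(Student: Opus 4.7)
The theorem is stated precisely so that it follows by combining the two named results, and my plan is to carry out exactly this combination, with one small case split to handle the parameter hypothesis of the reduction lemma.

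First I would record what the target bound demands: by the quantitative form of Theorem~\ref{thm:structural result}, we have
\[
k(\sqrt{n}/5,\,2c) \;=\; \Omega\!\left(2c \cdot (\sqrt{n}/5)^{1/(2c-1)}\right) \;=\; \Omega\!\left(n^{1/(4c-2)}\right) \;=\; \Omega\!\left(n^{1/4c}\right),
\]
so it suffices to establish $k > k(\sqrt{n}/5,\,2c)$. Before invoking Lemma~\ref{lemma:reduction from depth 2 to polynomials intro version}, I would dispose of the trivial regime separately. If $k \geq n/10 - c\log(n)$, then $k = \Omega(n)$, which already dominates $\Omega(n^{1/4c})$ for every constant $c \geq 1$ and all sufficiently large $n$; so the conclusion is immediate. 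In the remaining case we have $k < n/10 - c\log(n)$, which is precisely the hypothesis under which the reduction lemma applies.

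Now apply Lemma~\ref{lemma:reduction from depth 2 to polynomials intro version} to the circuit $C$: it yields a polynomial $p \colon \F_2^{\sqrt{n}/5} \to \F_2$ of degree at most $2c$ which is itself an affine disperser for dimension $k$. To finish, I would apply Theorem~\ref{thm:structural result} to $p$ on $m = \sqrt{n}/5$ variables with $d = 2c$: the theorem produces an affine subspace of $\F_2^m$ of dimension $k(m,d)$ on which $p$ is constant. If we had $k \leq k(\sqrt{n}/5,\,2c)$, this constant subspace would witness that $p$ fails to be an affine disperser for dimension $k$, contradicting the output of the reduction lemma. Therefore $k > k(\sqrt{n}/5,\,2c) = \Omega(n^{1/4c})$, which is the desired bound.

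The main obstacle is not at this level but one level down, inside Lemma~\ref{lemma:reduction from depth 2 to polynomials intro version}: extracting a low-degree $\F_2$-polynomial on $\sqrt{n}/5$ variables from a depth-$2$ $\ACZEROPARITY$ circuit of size $n^c$ while preserving the affine-disperser property, following the sparse-polynomial technique of Lemma~\ref{lem:reduction from sparse}. At the step treated here the only subtleties are bookkeeping—verifying the parameter regime $k < n/10 - c\log(n)$ for the reduction and correctly substituting $m = \sqrt{n}/5$, $d = 2c$ into the lower bound on $k(m,d)$ provided by Theorem~\ref{thm:structural result}.
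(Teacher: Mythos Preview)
Your proposal is correct and follows exactly the approach the paper intends: the paper simply states that Lemma~\ref{lemma:reduction from depth 2 to polynomials intro version} together with Theorem~\ref{thm:structural result} imply Theorem~\ref{thm:depth 2 are not dispersers intro version} and leaves the deduction implicit, while you spell out the combination explicitly, including the necessary case split on whether $k < n/10 - c\log n$ so that the reduction lemma applies. One minor wording point: in the trivial case $k \ge n/10 - c\log n$, the exact inequality $k > k(\sqrt{n}/5,2c)$ follows because $k(\sqrt{n}/5,2c) \le \sqrt{n}/5 < n/10 - c\log n$ for large $n$, not merely from the asymptotic $k = \Omega(n)$ versus $\Omega(n^{1/4c})$.
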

\noindent

\subsubsection*{Good Affine Extractors are Hard to Approximate by Low Degree Polynomials}

Using our second structural result, Theorem~\ref{thm:second structural result over Fq}, we obtain an average-case hardness result, or in other words, correlation bounds for low degree polynomials. Namely, we show that any affine extractor with very good parameters cannot be approximated by low degree polynomials over $\F_2$.

\begin{corollary}\label{cor:average_case intro}
Let $f \colon \F_2^n \to \F_2$ be an affine extractor for dimension $k$ with bias $\eps$.
Then, for any polynomial $g \colon \F_2^n \to \F_2$ of degree $d$ such that $k = \Omega(n^{1/(d-1)!})$, it holds that
\[
\mathrm{Cor}(f, g) \deff \E_{x \sim \F_2^n}{\left[ (-1)^{f(x)} \cdot (-1)^{g(x)} \right]} \le \eps.
\]
\end{corollary}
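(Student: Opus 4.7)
The plan is to apply Theorem~\ref{thm:second structural result over Fq} directly to the polynomial $g$. Since $g$ has degree $d$, the second structural result gives a partition $\F_2^n = S_1 \sqcup S_2 \sqcup \cdots \sqcup S_m$ into affine subspaces, each of dimension at least $\ConstStructTwo \cdot n^{1/(d-1)!}$, such that $g$ is constant on each $S_i$. By the hypothesis $k = \Omega(n^{1/(d-1)!})$, we may choose the hidden constant so that each $S_i$ has dimension at least $k$, which is the key point that lets us invoke the affine extractor property of $f$ on every part.

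Next, I would decompose the correlation over the parts. Writing $g(x) \equiv a_i \in \F_2$ on $S_i$, we have
\[
\mathrm{Cor}(f,g) \;=\; \E_{x \sim \F_2^n}\!\left[ (-1)^{f(x)+g(x)} \right]
\;=\; \sum_{i=1}^{m} \frac{|S_i|}{2^n} \cdot (-1)^{a_i} \cdot \E_{x \sim S_i}\!\left[ (-1)^{f(x)} \right].
\]
Since $f$ is an affine extractor for dimension $k$ with bias $\eps$, and each $S_i$ is an affine subspace of dimension at least $k$, the distribution of $f(x)$ for $x$ uniform in $S_i$ is $\eps$-close in statistical distance to the uniform distribution on $\F_2$, which translates to $\bigl|\E_{x \sim S_i}[(-1)^{f(x)}]\bigr| \le \eps$ (with the standard convention on bias; otherwise a harmless factor of $2$ appears and can be absorbed into $\eps$).

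Finally, applying the triangle inequality:
\[
|\mathrm{Cor}(f,g)| \;\le\; \sum_{i=1}^{m} \frac{|S_i|}{2^n} \cdot \bigl|\E_{x \sim S_i}[(-1)^{f(x)}]\bigr| \;\le\; \eps \cdot \sum_{i=1}^{m} \frac{|S_i|}{2^n} \;=\; \eps,
\]
where the last equality uses that the $S_i$ form a partition of $\F_2^n$. There is really no obstacle here beyond the invocation of Theorem~\ref{thm:second structural result over Fq}: the whole content of the corollary is that the partition into affine subspaces exists, after which the averaging-over-parts argument is entirely mechanical. The only small subtlety is matching the normalization in the definition of ``bias $\eps$'' (statistical distance vs.\ $|\E[(-1)^f]|$), which at worst changes the constant by a factor of $2$.
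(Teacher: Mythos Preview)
Your proposal is correct and follows essentially the same approach as the paper: apply Theorem~\ref{thm:second structural result over Fq} to partition $\F_2^n$ into affine subspaces on which $g$ is constant, then average the per-part bias bounds using the affine-extractor property of $f$. The only cosmetic difference is that the paper writes the average as $\E_{i\sim[\ell]}$ (using that all parts have the same dimension, hence the same size) while you weight by $|S_i|/2^n$; your version is slightly more robust but the content is identical.
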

\begin{proof}
Let $g$ be a degree $d$ polynomial over $\F_2$ on $n$ variables.
By Theorem~\ref{thm:second structural result over Fq}, there exists a partition of $\F_2^n$ to affine subspaces $P_1, P_2,
\ldots, P_\ell$, each of dimension $k = \Omega(n^{1/(d-1)!})$, such that for all $i\in[\ell]$, $g|_{P_i}$ is some constant $g(P_i)$. Thus,
\[
\mathrm{Cor}(f,g) =
\left|\E_{x \sim \F_2^n}[(-1)^{f(x)+g(x)}]\right| =
\left|\E_{i\sim[\ell]} \E_{x\sim P_i} [(-1)^{f(x)+g(P_i)}]\right| \le
\E_{i\sim[\ell]} \left| (-1)^{g(P_i)} \cdot \E_{x\sim P_i} [(-1)^{f(x)}]\right|\;,
\]
which is at most $\eps$ since $f$ is an affine extractor for dimension $k$ with bias $\eps$.
\end{proof}

As mentioned, explicit constructions of affine extractors for dimension $\Omega(n/\sqrt{\log{\log{n}}})$ are known. Corollary~\ref{cor:average_case intro} implies that these extractors cannot be approximated by quadratic polynomials. Corollary~\ref{cor:average_case intro} also implies that for any constant $\beta \in (0,1)$, affine extractors for dimension $k \le 2^{(\log n)^{\beta}}$ with bias $\eps$ have correlation $\eps$ with degree $d \le O_\beta\left( \log {\log {n}} / \log {\log{\log{n}}}\right)$ polynomials.~\footnote{This is the best $d$ we can guarantee for any $k$, and we gain nothing more by taking $k = O(\log{n})$.} Unfortunately, an explicit construction for extractors with such parameters has not yet been achieved.


We also note that stronger correlation bounds are known in the literature for explicit (and simple) functions (see~\cite{V09} and references therein). Nevertheless, we find the fact that \emph{any} affine extractor has small correlation with low degree polynomials interesting.

\subsubsection*{The Granularity of the Fourier Spectrum of Low-Degree Polynomials over $\F_2$}

The bias of an arbitrary function $f \colon \F_2^n \to \F_2$ is clearly some integer multiplication of $2^{-n}$. Theorem~\ref{thm:second structural result over Fq} readily implies that the bias of a degree $d$ polynomial on $n$ variables has a somewhat larger granularity -- the bias is a multiplication of $2^{\Omega(n^{1/(d-1)!})}/2^n$ by some integer.~\footnote{Throughout the paper, for readability, we supress flooring and ceiling. In the last expression, however, it should be noted that we mean $2^{k-n}$, where $k$ is some integer such that $k = \Omega(n^{1/(d-1)!})$.} In fact, Theorem~\ref{thm:second structural result over Fq} implies that \emph{all} Fourier coefficients of a low degree polynomial has this granularity. To see this, apply Theorem~\ref{thm:second structural result over Fq} to obtain a partition $P_1,\ldots,P_\ell$ of $\F_2^n$ to affine subspaces of dimension $k = \Omega(n^{1/(d-1)!})$, such that for each $i \in [\ell]$, $f|_{P_i}$ is some constant $f(P_i)$. Let $\beta \in \F_2^n$. Then,
$$
2^n \cdot \widehat{f}(\beta) = \sum_{x \in \F_2^n}{(-1)^{\left< \beta, x \right>} \cdot (-1)^{f(x)} }
= \sum_{i=1}^{\ell}{\sum_{x \in P_i}{(-1)^{\left< \beta, x \right>} \cdot (-1)^{f(x)}}}
= \sum_{i=1}^{\ell}{(-1)^{f(P_i)} \cdot \sum_{x \in P_i}{(-1)^{\left< \beta, x \right>}}}.
$$
The proof then follows as for all $i \in [\ell]$, the inner sum $\sum_{x \in P_i}{(-1)^{\left< \beta, x \right>}}$ is either $0$ or $\pm 2^k$.

\subsection{Proof Overview}
In this section we give proof sketches for some of our structural results. We start with Theorem~\ref{thm:structural result}, and for simplicity, consider first the special case $q = 2$. In fact, in Appendix~\ref{sec:binary field} we give a full proof for this special case, as it is slightly simpler than the proof for the general case, and conveys some of the ideas used in the proof for the more general case. Our proof is rather elementary, in spite of what one should expect considering previous works in this area, which apply machinery from additive combinatorics and Fourier analysis.

We are given a point $u_0 \in \F_2^n$ and assume, without loss of generality, that $f(u_0) = 0$. We iteratively construct affine subspaces, restricted to which, $f$ is zero.
We start with affine subspaces of dimension $0$, which are just the singletons $\{x\}$, where $x\in \F_2^n$ is such that $f(x)=0$.
Assume that we were able to find basis vectors $\Delta_1, \ldots, \Delta_k$ for a subspace $U$ such that $f$ restricted $u_0 + U$ is constantly $0$. Consider all cosets $x + U$, restricted to which $f$ is constantly $0$. We call such cosets \emph{good}. Clearly the coset $u_0 + U$ is good.
If at least one more good coset $x + U$ exists, then we can pick a new direction $\Delta_{k+1}$ to be $x+u_0$, and get that $f$ is zero on $u_0 + \spa\{\Delta_1, \ldots, \Delta_{k+1}\}$, as indeed
\begin{align*}
u_0 + \spa\{\Delta_1, \ldots, \Delta_{k+1}\} &=
\left( u_0 + \spa\{\Delta_1, \ldots, \Delta_{k}\} \right) \cup \left( u_0 + \Delta_{k+1} + \spa\{\Delta_1, \ldots, \Delta_{k}\} \right) \\ &=
\left( u_0 + \spa\{\Delta_1, \ldots, \Delta_{k}\} \right) \cup \left( x + \spa\{\Delta_1, \ldots, \Delta_{k}\} \right).
\end{align*}

The main observation that allows us to derive Theorem~\ref{thm:structural result} is the following. Given $\Delta_1, \ldots, \Delta_k$, there exists a degree $D \le d^2 \cdot k^{d-1}$ polynomial $t : \F_2^n \to \F_2$, such that $x+U$ is a good coset if and only if $t(x) = 1$. Since we know that $t$ is not the constant $0$ function (as $t(u_0) = 1$), the DeMillo-Lipton-Schwartz-Zippel lemma (see Lemma~\ref{lem:dlsz}) implies that there are at least $2^{n-D}$ $x$'s such that $t(x) =1$, namely, $2^{n-D}$ good cosets.  So in each iteration, by our choice of $\Delta_{k+1}$, we ensure that one coset in the next iteration is good, and then use DeMillo-Lipton-Schwartz-Zippel to claim that many other cosets are good as well. We can continue expanding our subspace $U$ until $n \le D$, which completes the proof.
\medskip

For a general finite field, $\F_q$, we similarly define a polynomial $t(x)$ over $\F_q$ that attains only the values $0$ and $1$, and whose $1$'s capture the good cosets. The polynomial $t(x)$ is of degree at most $q \cdot d^2 \cdot k^{d-1}$.
We wish to find a new direction $\Delta_{k+1}$, linearly independent of $\Delta_1, \ldots, \Delta_k$, such that all cosets along the line $\{u_0 + \Delta_{k+1} \cdot a\}_{a\in\F_q}$, i.e. $\{u_0 + \Delta_{k+1}\cdot a + U \}_{a\in \F_q}$, are good. Over $\F_2$ this task was easy since $u_0 + U$ and $x+ U$ define such a line. The main new idea needed over $\F_q$ is to consider a polynomial
\[
s(y) = \prod_{a \in \F_q} {t(u_0 + y \cdot a)}.
\]
Note that $s(y)$ has degree at most $q \cdot \deg(t)$ and that $s(y) = 1$ if and only if $t(u_0 + y \cdot a) = 1$ for all $a\in \F_q$. 
Thus, $s(y)=1$ iff $f$ is zero on all cosets $\{u_0 + y \cdot a + U\}_{a\in \F_q}$, whose union is a dimension $k+1$ affine subspace as long as $y \notin U$.
As before, since $s(0) = 1$, by a generalized DeMillo-Lipton-Schwartz-Zippel lemma, it holds that $s(\cdot)$ has many $1$'s, and as long as $k \ll n^{1/(d-1)}$ there is some $y \in s^{-1}(1)$  such that $y \notin U$. We can now pick such a $y$ as $\Delta_{k+1}$.
A slightly more careful argument shows that actually there is no dependency of the dimension $k$ in the field size $q$.


\remove{The proof above relies on properties of polynomial derivatives. To extend this proof to non-prime fields, one needs to work with a variant of \emph{Hasse derivatives} rather than with the more common notion of derivative.
}
\medskip
The proof of the second structural result (Theorem~\ref{thm:second structural result over Fq}) can be described informally as follows. Consider a degree $d$ polynomial $f$. Theorem~\ref{thm:structural result} implies the existence of an affine subspace $u_0 + U$ with dimension $\Omega(n^{1/(d-1)})$ on which $f$ is constant. One can then show (see Claim~\ref{claim:partition to lower degree}) that restricting $f$ to any affine shift of $U$ yields a degree (at most) $d-1$ polynomial. 
Thus, one can partition each such affine subspace recursively to obtain a partition of $\F_q^n$ to affine subspaces (not necessarily shifts of one another), such that $f$ is constant on each one of them.

In fact, to prove Theorem~\ref{thm:second structural result over Fq}, one is not required to find an affine subspace on which $f$ is constant, and it suffices to find an affine subspace on which the degree of $f$ decreases. In order to obtain the first algorithmic result (Theorem~\ref{thm:efficient structural result I}), we devise an algorithm that finds such an affine subspace and proceed similarly to the proof of Theorem~\ref{thm:second structural result over Fq}. To obtain the second algorithmic result (Theorem~\ref{thm:subexp alg}), we observe that the polynomial $t$ described above has many linear factors. This structure of $t$ allows us to save on the running time.

The generalization of Theorems~\ref{thm:structural result} and~\ref{thm:second structural result over Fq} to more than one polynomial is quite straightforward.

\section{Preliminaries}\label{sec:prelim}

We shall denote prime numbers with the letter $p$ and prime powers with $q$.
The set $\{1, \ldots, n\}$ is denoted by $[n]$. We denote by $\log(\cdot)$ the logarithm to the base $2$. Throughout the paper, for readability sake, we suppress flooring and ceiling. For $x,y \in \F_q^n$ we denote by $\langle x,y\rangle$ their scalar product over $\F_q$, i.e., $\langle x,y\rangle = \sum_{i=1}^n{x_i \cdot y_i}$. The vector $e_i$ is the unit vector defined as having $1$ in the $i^\text{th}$ entry and $0$ elsewhere. For a set $T\subseteq[n]$, we denote by $\one_T$ the indicating vector of $T$ with $1$ in the $i^\text{th}$ entry if $i \in T$ and $0$ otherwise.
For a vector $\alpha \in \N^m$, we denote its \emph{weight} by $\wt(\alpha) \triangleq \sum_{i}{\alpha_i}$.

The statistical distance between two random variables $X,Y$, over the same domain $D$, denoted by $\SD(X,Y)$, is defined as $\SD(X,Y) = \max_{A \subseteq D}{|\Pr[X \in A] - \Pr[Y \in A]|}$. It is known that $\SD(X,Y)$ is a metric. More precisely, it is (up to a multiplicative constant factor of $2$) the $\ell_1$ norm of the vector $(\Pr[d \in X] - \Pr[d \in Y])_{d \in D} \in \mathbb{R}^{|D|}$. In particular, we have the triangle inequality: for $X,Y,Z$ over $D$, $\SD(X,Z) \le \SD(X,Y) + \SD(Y,Z)$. Moreover, if $X$ can be written as a convex combination of two random variables $Y,Z$ as follows $X = (1-\gamma) \cdot Y + \gamma \cdot Z$, where $\gamma \in [0,1]$, then $\SD(X,Y) \le \gamma$. We sometimes abuse notation, and for a set $S \subseteq D$, consider $S$ also as the random variable that is uniformly distributed over the set $S$.

 \paragraph{Restriction to an affine subspace.}
 Let $f \colon \F_q^n \to \F_q$ be a function, $U \subseteq \F_q^n$ a subspace of dimension $k$ and $u_0 \in \F_q^n$ some vector. We denote by $f|_{u_0 + U} : (u_0 + U) \to \F_q$ the restriction of $f$ to $u_0 + U$. The degree of $f|_{u_0 + U}$ is defined as the minimal degree of a polynomial (from $\F_q^n$ to $\F_q$) that agrees with $f$ on $u_0 + U$.
For recursive arguments, it will be very useful to fix some basis $u_1, \ldots, u_k$ for $U$ and to consider the function $g: \F_q^k \to \F_q$ defined by
\[
g(x_1, \ldots, x_k) = f\(u_0 + \sum_{i=1}^{k}{x_i \cdot u_i}\).
\]
Note that the $\deg(g) = \deg(f|_{u_0 + U})$ regardless of the choice for the basis.

\paragraph{Polynomials.} We review some definitions and known facts about polynomials that we use.

The degree of a function $f:\F_q^n \to \F_q$, denoted by $\deg(f)$, is the degree of the unique multivariate
polynomial over $\F_q$, where each individual degree is at most $q-1$, which agrees with $f$ on $\F_q^n$. In the special case $q = 2$, such polynomials are called multi-linear. We will abuse notation and interchange between a function and its unique polynomial over $\F_q$ that agrees with $f$ on $\F_q^n$.

\begin{definition}
Let $f \colon \F_q^n \to \F_q$ be a polynomial of degree $d$, and let $\Delta \in \F_q^n$. The polynomial
\[
\frac{\partial f}{\partial{\Delta}} (x) \deff f(x+\Delta) - f(x),
\]
is called the \emph{derivative of $f$ in direction $\Delta$}.
\end{definition}
It is easy to verify that $\deg\(\frac{\partial{f}}{\partial{\Delta}}\) \le \deg(f) - 1$.
Let $\Delta_1, \ldots, \Delta_k \in \F_q^n$ then
\[
\frac{\partial^k{f}}{\partial\Delta_1 \ldots \partial\Delta_k}(x) = \sum_{S \subseteq[k]}{(-1)^{1+|S|} \cdot f\(x + \sum_{i\in S}{\Delta_i}\)}
\]
is a degree $\le \deg(f)-k$ polynomial.
The following lemma is a variant of the well-known the DeMillo-Lipton-Schwartz-Zippel lemma \cite{DL78,Schwartz80,Zippel79}.
\begin{lemma}[DeMillo-Lipton-Schwartz-Zippel]\label{lem:dlsz}
Let $q$ be a prime power. Let $f \in \F_q[x_1, \ldots, x_n]$ be a degree $d$ non-zero polynomial. Then,
\[
\Pr_{x \sim \F_q^n}[f(x_1, \ldots, x_n) \neq 0] \ge q^{-d/(q-1)}.
\]
\end{lemma}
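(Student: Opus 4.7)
The plan is to proceed by induction on the number of variables $n$. For the base case $n=0$, a non-zero polynomial is a non-zero constant, so the probability it is non-zero is $1 = q^{-0/(q-1)}$, matching the claim (with $d = 0$). For the inductive step, I would exploit the convention from the preliminaries that individual degrees are at most $q-1$, and write
\[
f(x_1, \ldots, x_n) = \sum_{i=0}^{q-1} x_n^i \cdot f_i(x_1, \ldots, x_{n-1}).
\]
Let $i^* = \max\{i : f_i \not\equiv 0\}$, so that $\deg(f_{i^*}) \le d - i^*$.

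I would then bound the success probability in two stages. By the induction hypothesis applied to $f_{i^*}$ (a non-zero polynomial on $n-1$ variables of degree at most $d - i^*$),
\[
\Pr_{x_1,\ldots,x_{n-1}}[f_{i^*}(x_1, \ldots, x_{n-1}) \ne 0] \ge q^{-(d - i^*)/(q-1)}.
\]
Conditioned on this event, the univariate map $a \mapsto f(x_1, \ldots, x_{n-1}, a)$ has leading coefficient $f_{i^*}(x_1, \ldots, x_{n-1}) \ne 0$, so it has degree exactly $i^* \le q-1$ and hence at most $i^*$ roots; thus the conditional probability over a uniform $x_n$ that $f(x) \ne 0$ is at least $(q - i^*)/q$. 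Multiplying,
\[
\Pr_{x \sim \F_q^n}[f(x) \ne 0] \ge \frac{q - i^*}{q} \cdot q^{-(d - i^*)/(q-1)}.
\]

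The only remaining obstacle, and where all the real work sits, is to show that this lower bound is at least $q^{-d/(q-1)}$. After rearranging, the task reduces to the inequality $(q - i^*)/q \ge q^{-i^*/(q-1)}$ for every integer $i^* \in \{0, 1, \ldots, q-1\}$, or equivalently, substituting $j = q - i^*$,
\[
\log j \ge \frac{j - 1}{q-1} \, \log q \quad \text{for } j \in \{1, 2, \ldots, q\}.
\]
Both sides agree at the endpoints $j = 1$ and $j = q$, and since $\log j$ is concave in $j$ while the right-hand side is the linear interpolation between those endpoints, the inequality holds on the entire interval $[1, q]$ by concavity. This closes the induction and finishes the proof.
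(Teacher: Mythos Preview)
Your proof is correct and follows essentially the same approach as the paper's: induction on $n$, expand in one variable, isolate the top nonvanishing coefficient, and combine the inductive bound with the univariate root count via the inequality $1-i/q \ge q^{-i/(q-1)}$. The only cosmetic differences are that the paper takes $n=1$ as its base case (packaging the key inequality there and then invoking it in the inductive step), expands in $x_1$ rather than $x_n$, and defers the inequality to ``basic calculus'' where you give the explicit concavity argument.
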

\noindent
For completeness, we give the proof of Lemma~\ref{lem:dlsz} in Appendix~\ref{sec:dlsz}. The following folklore fact about polynomials over $\F_2$ is easy to verify.
\begin{fact}[M\"{o}bius inversion formula]\label{fact:mobius}
Let $f(x_1, \ldots, x_n) = \sum_{S \subseteq [n]} {a_S \cdot \prod_{i\in S}{x_i}}$ be a polynomial over $\F_2$. Then, its coefficients are given by the formula: $ a_S = \sum_{T \subseteq S} f(\one_T)$.
\end{fact}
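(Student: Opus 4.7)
The plan is to first evaluate $f$ at each indicator vector $\one_T$, obtaining a triangular system that expresses $f(\one_T)$ in terms of the $a_S$'s, and then to invert this system, exploiting the fact that we work in characteristic $2$ so even multiplicities vanish.

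First I would observe that for any $S, T \subseteq [n]$, the monomial $\prod_{i \in S}(\one_T)_i$ equals $1$ if $S \subseteq T$ and $0$ otherwise, since $(\one_T)_i = 1$ iff $i \in T$. Plugging this into the definition of $f$ yields the ``downward'' relation
\[
f(\one_T) \;=\; \sum_{S \subseteq T} a_S.
\]

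Next I would verify the claimed inversion by direct substitution. Plugging the above expression into $\sum_{T \subseteq S} f(\one_T)$ and swapping the order of summation gives
\[
\sum_{T \subseteq S} f(\one_T) \;=\; \sum_{R \subseteq S} a_R \cdot \bigl|\{T : R \subseteq T \subseteq S\}\bigr| \;=\; \sum_{R \subseteq S} 2^{|S| - |R|} \cdot a_R.
\]
Over $\F_2$, the coefficient $2^{|S|-|R|}$ vanishes unless $|S| = |R|$, i.e.\ unless $R = S$, so the right-hand side collapses to $a_S$, as desired. Equivalently, one could invoke the standard Möbius inversion on the Boolean lattice --- which over $\Z$ would yield $a_S = \sum_{T \subseteq S}(-1)^{|S|-|T|} f(\one_T)$ --- and simply note that the signs reduce to $1$ in characteristic $2$.

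There is essentially no obstacle here; the only care needed is the parity/sign collapse over $\F_2$, which is what makes the unsigned formula $a_S = \sum_{T \subseteq S} f(\one_T)$ hold in place of the signed Möbius formula one would have over $\Z$.
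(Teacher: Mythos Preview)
Your proof is correct; the paper does not actually prove this fact at all, merely stating that it ``is easy to verify.'' Your argument---evaluating $f$ at $\one_T$ to get $f(\one_T)=\sum_{R\subseteq T}a_R$ and then collapsing the double sum via $2^{|S|-|R|}\equiv [R=S]\pmod 2$---is exactly the intended easy verification.
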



\paragraph{Circuits.} A Boolean circuit is an unbounded fan-in circuit composed of $\OR$ and $\AND$ gates, and literals $x_i$, $\neg x_i$. The size of such a circuit is the number of gates in it. A Boolean formula is a Boolean circuit such that every $\OR$ and $\AND$ gate has fan-out $1$. De Morgan formula is a Boolean formula where each gate has fan-in at most $2$. We recall that an $\ACZERO$ circuit is a Boolean circuit of polynomial size and constant depth. An $\ACZEROPARITY$ circuit is an $\ACZERO$ circuit with unbounded fan-in $\XOR$ gates as well.



\section{Structural Results}

This section contains the proofs of all the structural results in this paper. In Section~\ref{sec:proof of first structural result} we give a proof for Theorem~\ref{thm:structural result}. Section~\ref{sec:proof of second structural result} contains the proof for Theorem~\ref{thm:second structural result over Fq}. The tightness of the first structural result is given in Section~\ref{sec:tightness}. In Section~\ref{sec:many poly} we describe the generalization of the two structural results to many polynomials.
In Section~\ref{sec:sparse} we prove Theorem~\ref{thm:sparse}. 

\subsection{Proof of Theorem~\ref{thm:structural result}}\label{sec:proof of first structural result}






In this section we prove Theorem~\ref{thm:structural result}. For a slightly simpler proof, for the special case $q=2$, we refer the reader to Appendix~\ref{sec:binary field}. The proof of Theorem~\ref{thm:structural result} is based on the following lemma.

\begin{lemma}\label{lemma:hasse}
Let $f \colon \F_q^n \to \F_q$ be some function, and let $U$ be a subspace of $\F_q^n$ with basis vectors $\Delta_1, \ldots, \Delta_k$. Then, there exist polynomials $(f_\alpha)_{\alpha \in \{0,1,\ldots, q-1\}^k}$ such that
\begin{enumerate}
\item $\deg(f_\alpha) \le \deg(f) - \wt(\alpha)$ for all $\alpha \in \{0,1,\ldots, q-1\}^k$.\label{item:1}
\item Let $x \in \F_q^n$, then $f|_{x + U}  \equiv 0$ if and only if $f_\alpha(x) = 0$ for all $\alpha \in \{0,1,\ldots, q-1\}^k$.\label{item:2}
\end{enumerate}
\end{lemma}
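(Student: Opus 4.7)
The plan is to expand
$Q(x;y) \deff f(x + y_1 \Delta_1 + \cdots + y_k \Delta_k)$,
viewed as a polynomial in $y_1,\ldots,y_k$ with coefficients in $\F_q[x_1,\ldots,x_n]$. Since each substitution $z_j \mapsto x_j + \sum_i y_i (\Delta_i)_j$ has total $(x,y)$-degree $1$ and the canonical polynomial representing $f$ has total degree $d \deff \deg(f)$, the formal expansion
\[
Q(x;y) \;=\; \sum_{\mu \in \N^k,\, \wt(\mu) \le d} g_\mu(x)\, y^\mu
\]
has coefficients $g_\mu(x) \in \F_q[x]$ of total degree at most $d - \wt(\mu)$. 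By definition $f|_{x+U} \equiv 0$ if and only if the function $y \mapsto Q(x;y)$ from $\F_q^k$ to $\F_q$ is identically zero, so the task reduces to extracting useful coefficients from $Q$ after reducing it modulo the relations $y_i^q = y_i$ to make every individual $y_i$-degree strictly less than $q$.

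Next I would describe this reduction explicitly. Over $\F_q$ the identity $y_i^q = y_i$ folds an exponent $j$ to $\rho(j) \in \{0,1,\ldots,q-1\}$, where $\rho(0) = 0$ and $\rho(j) = ((j-1) \bmod (q-1)) + 1$ for $j \ge 1$. Applying $\rho$ componentwise and collecting like monomials, the reduced form of $Q$ is
\[
\sum_{\alpha \in \{0,\ldots,q-1\}^k} f_\alpha(x)\, y^\alpha \qquad \text{with} \qquad f_\alpha(x) \deff \sum_{\mu \in \N^k :\, \rho(\mu) = \alpha} g_\mu(x).
\]
Since any polynomial in $y_1,\ldots,y_k$ with each individual degree below $q$ vanishes on $\F_q^k$ if and only if it is formally zero, item~\ref{item:2} follows immediately: $f|_{x+U} \equiv 0$ iff $f_\alpha(x) = 0$ for every $\alpha \in \{0,\ldots,q-1\}^k$.

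The degree bound (item~\ref{item:1}) hinges on the combinatorial observation that every $\mu$ with $\rho(\mu) = \alpha$ satisfies $\mu_i \ge \alpha_i$ for all $i$: indeed $\rho^{-1}(0) = \{0\}$, and for $a \in \{1,\ldots,q-1\}$, $\rho^{-1}(a) = \{a,\, a + (q-1),\, a + 2(q-1),\,\ldots\}$. Hence $\wt(\mu) \ge \wt(\alpha)$ for every $\mu$ contributing to $f_\alpha$, so
\[
\deg(f_\alpha) \;\le\; \max_{\mu :\, \rho(\mu) = \alpha,\, \wt(\mu) \le d} \deg(g_\mu) \;\le\; d - \wt(\alpha),
\]
and when $\wt(\alpha) > d$ the defining sum is empty, so $f_\alpha \equiv 0$.

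The only delicate point is the modular reduction step itself. In characteristic zero, or whenever $d < q$, each $\mu$ with $\wt(\mu) \le d$ already lies in $\{0,\ldots,q-1\}^k$, so the $g_\mu$ play the role of multivariate Hasse coefficients and no folding is necessary; for small fields one must track that many $\mu$ may collapse onto a single $\alpha$, but the weight-monotonicity of $\rho$ is precisely the property that prevents this collapse from degrading the degree bound $d - \wt(\alpha)$.
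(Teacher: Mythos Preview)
Your proof is correct and takes a genuinely different route from the paper's. The paper first extends $\Delta_1,\ldots,\Delta_k$ to a full basis of $\F_q^n$ and applies the corresponding change of coordinates, so that $U$ becomes the span of the first $k$ standard basis vectors; writing the transformed polynomial $g(y) = f(Ay)$ in its canonical reduced form and grouping by monomials in the first $k$ variables yields coefficients $g_\alpha$, which are then pulled back along the linear projection to give the $f_\alpha$. Because $g$ is already reduced (each individual degree below $q$), no folding modulo $y_i^q = y_i$ is ever needed, and the bound $\deg(g_\alpha) \le \deg(g) - \wt(\alpha)$ is read off directly from the monomial structure. Your approach stays in the original coordinates and expands $f(x + \sum_i y_i \Delta_i)$ as a polynomial in the auxiliary variables $y$, which forces you to handle the reduction $y_i^q = y_i$ explicitly; the weight-monotonicity of your map $\rho$ is exactly what rescues the degree bound through this collapse. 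The paper's route is slicker in that the change of basis silently absorbs the reduction step, while yours is more intrinsic (your $f_\alpha$ depend only on $\Delta_1,\ldots,\Delta_k$, not on an arbitrary completion to a basis) and makes the Hasse-derivative interpretation explicit, as your final paragraph notes.
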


\begin{proof}

Complete $\Delta_1, \ldots, \Delta_k$ into a basis of $\F_q^n$ by picking vectors $\Delta_{k+1}, \ldots, \Delta_n \in \F_q^n$. Let $A$ be the linear transformation which maps the standard basis into $\Delta_1, \ldots ,\Delta_n$, and let $g(y) := f(Ay)$ (alternatively, $f(x) = g(A^{-1}x)$).
Write $g$ as a polynomial over $\F_q$:
\[
g(y) = \sum_{\gamma \in \{0,1,\ldots, q-1\}^n}{c_\gamma \cdot \prod_{i=1}^{n}{y_i^{\gamma_i}}}\;.
\]
Since both $f$ and $g$ can be obtained from one another by applying a linear transformation to the inputs, we have $\deg(f) = \deg(g)$.
Think of the input to $g$ as a concatenation of two parts $y = z \circ w$, where $z \in \F_q^k$, $w \in \F_q^{n-k}$.
Let $P_z: \F_q^n \to \F_q^{k}$ be the projection of a vector of length $n$ to the first $k$ coordinates and
let $P_w: \F_q^n \to \F_q^{n-k}$ be the projection to the last $n-k$ coordinates.
We may rewrite $g$ as
\[
g(z \circ w) = \sum_{\alpha \in \{0,1,\ldots, q-1\}^k}\sum_{\beta \in \{0,1,\ldots, q-1\}^{n-k}}{c_{\alpha \circ \beta} \cdot \prod_{i=1}^{k}{z_i^{\alpha_i}} \cdot \prod_{i=1}^{n-k}{w_i^{\beta_i}}}\;.
\]
By reordering the summations we get
\[
g(z \circ w) = \sum_{\alpha \in \{0,1,\ldots, q-1\}^k} {g_\alpha(w) \cdot \prod_{i=1}^{k}{z_i^{\alpha_i}}} \;,
\]
where \[g_\alpha(w) = \sum_{\beta \in \{0,1,\ldots, q-1\}^{n-k}}{c_{\alpha \circ \beta} \cdot   \prod_{i=1}^{n-k}{w_i^{\beta_i}}}\;.\]
Note that $\deg(g_\alpha) \le \deg(g) - \wt(\alpha)$.
We have
\begin{align*}
f|_{x+U} \equiv 0 \quad  &\iff \quad  g|_{A^{-1}x + A^{-1}U} \equiv 0\\
&\iff \quad g|_{A^{-1}x + \spa\{e_1, \ldots, e_k\}} \equiv 0 \;(*)\;.
\end{align*}
Writing $(z,w) = (P_z(A^{-1}x),P_w(A^{-1}x))$ gives
\begin{align*}
(*) \quad &\iff \quad \forall{z' \in \F_q^k}: g(z' \circ w) = 0\\
&\iff  \quad \forall \alpha: g_\alpha(w) = 0\\
&\iff  \quad \forall \alpha: g_\alpha(P_w(A^{-1}x)) = 0\;.
\end{align*}
Taking $f_\alpha$ to be the composition $g_{\alpha} \circ P_w \circ A^{-1}$ we obtain Item~\ref{item:2}. As $P_w \circ A^{-1}$ is simply a linear transformation, it is clear that
\[\deg(f_\alpha) \le \deg(g_\alpha) \le \deg(g) - \wt(\alpha) \le \deg(f) - \wt(\alpha)\;,
\] which completes the proof.
\end{proof}

\begin{proof}[Proof of Theorem \ref{thm:structural result}]

Assume without loss of generality that $f(u_0) = 0$, as otherwise we can look at the polynomial $g(x) = f(x)-f(u_0)$ which is of the same degree.
The proof is by induction.
Let $k$ be such that
\begin{equation}\label{eq:condition on k}
n > k + (d + 1) \cdot \sum_{j=0}^{d-1} {(d-j) \cdot \binom{k+j-1}{j}} \;.
\end{equation}
We assume by induction that there exists an affine subspace $u_0 + \spn\{\Delta_1, \ldots, \Delta_k\} \subseteq \F_q^n$, where the $\Delta_i$'s are linearly independent vectors, on which $f$ evaluates to 0.
Assuming Equation~\ref{eq:condition on k} holds, we show there exists a vector $\Delta_{k+1}$, linearly independent of $\Delta_1, \ldots, \Delta_k$, such that $f \equiv 0$ on $u_0 + \spn\{\Delta_1, \ldots, \Delta_{k+1}\}$. To this aim, consider the set
\[
A = \left\{
x\in \F_q^n\;\;\; \bigg\vert\;\;\; f|_{x + \spn\{\Delta_1, \ldots, \Delta_k\}} \equiv 0
\right\}.
\]
By the induction hypothesis, $u_0 \in A$. By Lemma~\ref{lemma:hasse}, for any $x \in \F_q^n$,
\[
f|_{x + \spn\{\Delta_1, \ldots, \Delta_k\}} \equiv 0 \quad \iff \quad
\forall{\alpha \in \{0,1,\ldots,q-1\}^k}: f_\alpha(x) = 0 \;,
\]
where $f_\alpha$ is of degree at most $d - \wt(\alpha)$.
\remove{\[
f_\alpha(x) \deff \frac{\partial{f}}{\partial^{\alpha_1}(\Delta_1) \ldots \partial^{\alpha_k}(\Delta_k)} (x).
\]
In particular, $\deg(f_\alpha) \le d-\wt(\alpha)$.}
Thus $f_\alpha \equiv 0$ for $\wt(\alpha) > d$, and we may write $A$ as
\[
A = \left\{
x \in \F_q^n \;\mid\; \forall{\alpha: \wt(\alpha)\le d} ,\; f_\alpha(x) = 0
\right\}.
\]
Hence, $A$ is the set of solutions to a system of $\le \binom{k+d}{d}$ polynomial equations,
where there are at most $\binom{k+j-1}{j}$ equations which correspond to $\alpha$'s of weight $j$ and thus to degree (at most) $d-j$ polynomials.
One can also write $A$ as the set of non-zeros to the single polynomial
\[
t(x) := \prod_{\alpha: \wt(\alpha) \le d}{\!\!\! (1-f_\alpha(x)^{q-1})} \,\,,
\]
which is of degree
\[
\deg(t) \le (q-1) \cdot \sum_{j=0}^{d-1} {(d-j) \cdot \binom{k+j-1}{j}}\;.
\]
Note that $t(x)$ obtains only the values $0$ and $1$. Let $R \subseteq \F_q$ be an arbitrary subset of $\F_q$ with size $|R| = \min(q, d+1)$.
Define a polynomial
\[
s(y) := \prod_{r \in R}{t(u_0 + r \cdot y)}
\]
We claim that any non-zero of $s$ not in the span of $\{\Delta_1, \ldots, \Delta_k\}$ can be taken to be the desired $\Delta_{k+1}$. Indeed, if $y$ is such that $s(y) = 1$, then $t(u_0 + r \cdot y) = 1$ for all $r \in R$. That is, for every $z \in \spn(\Delta_1, \ldots, \Delta_k)$ and any $r \in R$ it follows that $f(u_0 + z + r \cdot y) = 0$. Namely, $f$ obtains $|R|$ roots on the affine line with offset $u_0 + z$ and direction $y$. If $R = \F_q$ then clearly this implies that $f$ is the zero function restricted to the line. Otherwise, $|R| = d+1$ and thus $f$, which is a degree $d$ polynomial, obtains $d+1$ zeros on the line. Thus, again $f$ is the zero function on this line. Hence, $f(u_0 + z + r \cdot y) = 0$ for all $r \in \F_q$.



Thus, we just have to show that there exists some non-zero of $s$ which is linearly independent of $\{\Delta_1, \ldots, \Delta_{k}\}$.
Since the trivial solution $y = 0$ is a non-zero of $s$, we get that $s$ is not the constant $0$ function. Thus, by Lemma~\ref{lem:dlsz} it holds that
\[
\Pr[s(y) \neq 0] \ge q^{-\deg(s)/(q-1)}.
\]
The above equation implies that $s$ has at least $q^{n-\deg(s)/(q-1)}$ ones. Since we need to avoid $q^{k}$ linear combinations of the previous $\Delta_1, \ldots, \Delta_{k}$, it is enough to have
\begin{equation}\label{eq:k is small enough}
n - \frac{\deg(s)}{q-1} > k\;.
\end{equation}
Since
\[
\deg(s) \le (d+1) \cdot (q-1) \cdot \sum_{j=0}^{d-1} {(d-j) \cdot \binom{k+j-1}{j}}
\]
and by the assumption on $k$ in Equation~\eqref{eq:condition on k} we have that Equation~\eqref{eq:k is small enough} holds.
\end{proof}

\subsection{Proof of Theorem~\ref{thm:second structural result over Fq}}\label{sec:proof of second structural result}

In this section we prove Theorem~\ref{thm:second structural result over Fq}. To this end we use the following claim.

\begin{claim}\label{claim:partition to lower degree}
Let $q$ be a prime power. Let $f \colon \F_q^n \to \F_q$ be a degree $d$ polynomial. Assume there exists an affine subspace $u_0 + U$ of dimension $k$, restricted to which $f$ has degree at most $d-1$. Then, the degree of $f$ restricted to any affine shift of $U$ is at most $d-1$.
\end{claim}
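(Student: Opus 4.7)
The plan is to leverage the decomposition from the proof of Lemma~\ref{lemma:hasse} to show that the ``top-degree'' coefficients in any local expansion of $f$ are in fact globally constant, and hence vanish everywhere once they vanish on a single shift.

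First I would fix a basis $\Delta_1,\ldots,\Delta_k$ of $U$, extend it to a basis $\Delta_1,\ldots,\Delta_n$ of $\F_q^n$, and let $A$ be the change-of-basis linear map sending the standard basis to the $\Delta_i$'s. Define $g(y) \deff f(Ay)$, so $\deg(g)=\deg(f)=d$. Splitting the input as $y = z\circ w$ with $z\in\F_q^k$, $w\in\F_q^{n-k}$, and reducing each variable modulo $y_i^q - y_i$, write
\[
g(z\circ w) \;=\; \sum_{\alpha\in\{0,\ldots,q-1\}^k} g_\alpha(w)\cdot \prod_{i=1}^k z_i^{\alpha_i},
\]
with $\deg(g_\alpha)\le d - \wt(\alpha)$, exactly as in the proof of Lemma~\ref{lemma:hasse}.

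Next I would translate the restriction $f|_{x+U}$ into this language. For any $x\in\F_q^n$, set $w_x \deff P_w(A^{-1}x)$. Since $x + U$ is obtained as $\{A(z\circ w_x) : z\in\F_q^k\}$ (shifting $z$ by a fixed vector does not change the set of attained $z$'s), the restriction $f|_{x+U}$ is represented, in the $\Delta_i$-coordinates, by the polynomial $z\mapsto \sum_\alpha g_\alpha(w_x)\cdot \prod z_i^{\alpha_i}$. Because each variable has individual degree at most $q-1$, these monomials are linearly independent as functions on $\F_q^k$, so the degree of $f|_{x+U}$ equals $\max\{\wt(\alpha) : g_\alpha(w_x)\ne 0\}$.

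The decisive step is to observe that the coefficients corresponding to $\wt(\alpha) = d$ are globally constant. Indeed, for such $\alpha$ we have $\deg(g_\alpha)\le d-\wt(\alpha)=0$, so $g_\alpha$ is a constant polynomial. The hypothesis $\deg(f|_{u_0+U})\le d-1$ forces $g_\alpha(w_{u_0}) = 0$ for every $\alpha$ with $\wt(\alpha)=d$, and since $g_\alpha$ is constant this gives $g_\alpha \equiv 0$ identically. Consequently, for any $v\in\F_q^n$ and every $\alpha$ of weight $d$, $g_\alpha(w_v)=0$, which by the previous paragraph shows $\deg(f|_{v+U})\le d-1$, proving the claim.

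The main conceptual point — and the only nontrivial obstacle — is the observation that the coefficients of the top-degree monomials in the local expansion around $u_0$ are not merely small but are in fact constants (because Lemma~\ref{lemma:hasse} bounds their degree by $d-\wt(\alpha) = 0$); everything else is bookkeeping about a change of basis and the multilinearization over $\F_q$.
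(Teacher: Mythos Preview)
Your proof is correct, but it takes a genuinely different route from the paper's. The paper gives a two-line argument using the derivative identity: for any $u_1\in\F_q^n$ and $u\in U$,
\[
f(u_1+u) \;=\; \frac{\partial f}{\partial(u_1-u_0)}(u_0+u) \;+\; f(u_0+u),
\]
and both summands on the right are (as functions of $u$) of degree at most $d-1$ --- the first because a derivative drops degree, the second by hypothesis.

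By contrast, you pass through the coordinate decomposition of Lemma~\ref{lemma:hasse} and observe that the weight-$d$ coefficients $g_\alpha$ are constants (since $\deg(g_\alpha)\le d-\wt(\alpha)=0$), hence vanish everywhere once they vanish at $w_{u_0}$. This is a perfectly valid argument and in fact exposes a bit more structure (it identifies the top-degree part of the restriction as globally independent of the shift), but it costs the full change-of-basis setup, whereas the paper's derivative trick is coordinate-free and fits in one line.
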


\begin{proof}
Fix $u_1 \in \F_q^n$. Now, for any $u \in U$
$$
f(u_1 + u) = f(u_1 + u) - f(u_0 + u) + f(u_0+u) = \frac{\partial f}{\partial (u_1 - u_0)}(u_0 + u) + f(u_0+u).
$$
Since the degree of the partial derivative of $f$ is at most $d-1$ and the degree of $f|_{u_0 + U}$ is also at most $d-1$, we get that  $f|_{u_1+U}$ has degree at most $d-1$.
\end{proof}

\begin{proof}[Proof of Theorem~\ref{thm:second structural result over Fq}]
Let $\ConstStruct\in \(0,1\)$ be the constant from Theorem~\ref{thm:structural result}. Define the sequence $\{\beta_d \}_{d=1}^{\infty}$ as follows.
$$
\beta_d =
\left\{
  \begin{array}{ll}
    1/2, & \hbox{$d=1$;} \\
    \beta_{d-1} \cdot \ConstStruct^{\frac1{(d-2)!}}, & \hbox{$d > 1$.}
  \end{array}
\right.
$$
We will prove by induction on $d$, the degree of a given polynomial $f$, that there exists a partition of $\F_q^n$ to affine subspaces of dimension $\ge \beta_d \cdot n^{1/(d-1)!}$, such that $f$ restricted to each part is constant. The proof then follows by noting that for all $d \ge 1$,
$$
\beta_d = \frac1{2} \cdot \ConstStruct^{\frac1{(d-2)!} + \cdots + \frac1{1!} + \frac1{0!}} \ge \frac{\ConstStruct^{e}}{2},
$$
and thus one can take $\ConstStructTwo = \ConstStruct^{e} / 2$ to be the constant in the theorem statement.

The base case of the induction, namely $d=1$, trivially follows as $f$ is an affine function, and we can partition $\F_q^n$ to $q$ affine subspaces of dimension $n-1 \ge n/2 = \beta_1 n$, such that on each of which $f$ is constant. Assume now that $f$ is a degree $d > 1$ polynomial. By Theorem~\ref{thm:structural result} and Claim~\ref{claim:partition to lower degree}, there exists a partition of $\F_q^n$ to affine subspaces of dimension $k \ge \ConstStruct \cdot n^{1/(d-1)}$, such that $f$ restricted to any affine subspace in the partition has degree at most $d-1$.
Fix some affine subspace $u_0 + U$ in this partition, and apply the induction hypothesis to the polynomial $f' = f|_{u_0 + U}$, which has degree $d' \le d-1$.~\footnote{We may apply the induction because there exists a linear bijection from $U$ to $\F_q^{\dim{U}}$. More precisely, if $A$ is an $n \times k$ matrix over $\F_q$ that maps $U$ to $\F_q^k$ bijectively, then one can apply the induction to the polynomial $f''(x) = f'(u_0 + Ax)$, defined on $k$ variables, and then induce a partition of $u_0 + U$ from the partition of $\F_q^k$ obtained by the induction. The induction can be carried on $f''$ since $\deg{f''} \le \deg{f'} \le d-1$, where the first inequality holds because the variables of $f''$ are linear combinations of the variables of $f'$.}
By the induction hypothesis, we obtain a partition of $u_0 + U$ such that $f$ is constant on each part. Moreover, the dimension of each such part is at least
\[
\beta_{d'} \cdot k^{\frac1{(d'-1)!}} \ge
\beta_{d-1} \cdot k^{\frac1{(d-2)!}}
\ge \beta_{d-1} \cdot \left(\ConstStruct \cdot n^{\frac1{d-1}} \right)^{\frac1{(d-2)!}}
= \beta_{d-1} \cdot \ConstStruct^{\frac1{(d-2)!}} \cdot n^{\frac1{(d-1)!}}
= \beta_d \cdot n^{\frac1{(d-1)!}}\;,
\]
where the first inequality follows since $\{\beta_d\}_{d=1}^{\infty}$ is monotonically decreasing and $d'\le d-1$, and the last equality follows by the definitions of the $\beta_d$'s.
\end{proof}

\subsection{On the Tightness of Structural Result I}\label{sec:tightness}

Roughly speaking, Theorem~\ref{thm:structural result} states that for any prime power $q$, a degree $d$ polynomial over $\F_q$ in $n$ variables is not an affine disperser for dimension $k = \Omega(n^{1/(d-1)})$. We mentioned that this result is tight in the sense that by increasing $k$ a bit, there exists a degree $d$ polynomial which is an affine disperser. In this section we show, that in the special case $q = 2$, a stronger claim can be proven. Namely, by increasing $k$ a bit, there exists a degree $d$ polynomial which is an affine extractor.

\begin{theorem}\label{thm:tightness}
There exists a constant $c$ such that the following holds. Let $n,d$ be such that $d < n/2$. There exists a degree $d$ polynomial $f \colon \F_2^n \to \F_2$, such that for every affine subspace $u_0 + U \subseteq \F_2^n$ of dimension $k \ge c d \cdot n^{1/(d-1)}$, $\bias(f|_{u_0 + U}) \le 2^{-\Omega(k/d)}$.
\end{theorem}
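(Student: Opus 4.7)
The plan is probabilistic: draw $f$ uniformly at random from the space $\F_2[x_1,\ldots,x_n]_{\leq d}$ of degree-$\leq d$ polynomials, and show that with positive probability no sufficiently large affine subspace witnesses bias above the claimed $2^{-\Omega(k/d)}$.

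First I would reduce a bias bound on $f|_S$ to a tail bound on the bias of a random polynomial in $k$ variables. Fix an affine subspace $S=u_0+U$ of dimension $k$. After choosing a basis for $U$ to identify $S$ with $\F_2^k$, the restriction $f|_S$ is a polynomial of degree $\leq d$ in $k$ variables. The restriction map $f\mapsto f|_S$ is a surjective linear map from $\F_2[x_1,\ldots,x_n]_{\leq d}$ onto $\F_2[y_1,\ldots,y_k]_{\leq d}$, since any target polynomial in $k$ variables can be extended to an $n$-variable polynomial of the same degree that agrees with it on $S$. Hence when $f$ is uniform over its domain, $f|_S$ is uniformly distributed over degree-$\leq d$ polynomials on $k$ variables.

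Next I would invoke the Ben-Eliezer--Hod--Lovett~\cite{BHL09} concentration result: for a uniformly random $g\in\F_2[y_1,\ldots,y_k]_{\leq d}$, the distance from every polynomial of degree $\leq d-1$ is at least $\tfrac{1}{2}-2^{-\gamma k/d}$ with probability at least $1-2^{-c'\binom{k}{d}}$, for absolute constants $\gamma,c'>0$. Specializing to degree-$0$ polynomials (constants) yields
\[
\Pr_g\!\left[\bias(g)\geq 2^{-\gamma k/d}\right]\;\leq\; 2^{-c'\binom{k}{d}}.
\]
Combined with the uniformization step, this is the probability that any single fixed affine subspace of dimension $k$ is bad. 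Union-bounding over all affine subspaces of every dimension $k'\geq k_0:=cd\cdot n^{1/(d-1)}$ (of which there are at most $2^{n(k'+1)}$ for each $k'$), the total failure probability is at most
\[
\sum_{k'\geq k_0} 2^{n(k'+1)-c'\binom{k'}{d}}.
\]
Using $\binom{k'}{d}\geq (k')^d/d!$ together with $k_0=cd\cdot n^{1/(d-1)}$, the exponent $c'\binom{k'}{d}-n(k'+1)$ is $\Omega(nk_0)$ at $k'=k_0$ and strictly increasing in $k'$ once $c$ is a sufficiently large absolute constant; the sum is then well below $1$, producing the desired $f$.

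The main obstacle is supplying the concentration bound invoked above. A plain second-moment calculation gives $\E_g[\bias(g)^2]=2^{-k}$, hence only $\Pr[\bias(g)\geq 2^{-k/2}]=O(1)$ via Markov, which is hopelessly weak: since the number of dimension-$k$ affine subspaces of $\F_2^n$ is $2^{\Theta(nk)}$, any workable tail bound must decay as $2^{-\omega(nk)}$. The $2^{-\Omega(\binom{k}{d})}$ decay from~\cite{BHL09}, proved inductively on $d$ via restrictions to random hyperplanes, is exactly what makes the union bound close.
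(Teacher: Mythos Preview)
Your approach is essentially identical to the paper's: both draw a uniformly random degree-$\le d$ polynomial, observe that its restriction to any fixed $k$-dimensional affine subspace is a uniformly random degree-$\le d$ polynomial in $k$ variables, invoke the Ben-Eliezer--Hod--Lovett tail bound $\Pr[\bias > 2^{-c_1 k/d}] \le 2^{-c_2\binom{k}{\le d}}$, and union-bound over all sufficiently large affine subspaces. One small slip: your inequality $\binom{k'}{d}\ge (k')^d/d!$ goes the wrong way---use $\binom{k'}{d}\ge (k'/d)^d$ instead---but this does not affect the argument.
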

\noindent
To prove Theorem \ref{thm:tightness} we apply the following lemma due to Ben-Eliezer, Hod and Lovett \cite{BHL09}.
\begin{lemma}[\cite{BHL09}, Lemma 2]\label{lemma:bhl}
Fix $\eps > 0$ and let $f \colon \F_2^n \to \F_2$ be a random degree $d$ polynomial~\footnote{That is, every monomial of degree at most $d$ appears in $f$ with probability $1/2$, independently of all other monomials.} for $d \le (1-\eps)n$. Then,
$$
\Pr_f\left[ \bias(f) > 2^{-c_1 n / d} \right] \le 2^{-c_2 \binom{n}{\le d}},
$$
where $0 < c_1, c_2 < 1$ are constants depending only on $\eps$.
\end{lemma}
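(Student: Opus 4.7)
The plan is to use the method of moments. Setting $B := 2^n\cdot\bias(f) = \sum_{x \in \F_2^n}(-1)^{f(x)}$, it suffices to show that $\Pr_f[|B| > 2^{n - c_1 n/d}]$ is at most $2^{-c_2 \binom{n}{\le d}}$. For this I would bound the $m$-th moment $\E_f[B^m]$ for a suitably chosen even integer $m$, and then apply Markov's inequality $\Pr[|B|>\tau] \le \E[B^m]/\tau^m$. The choice of $m$ should be on the order of $2^{\Theta(n/d)}$, since this is the scale that matches both the target deviation $2^{n-c_1 n/d}$ and the target failure probability $2^{-c_2 \binom{n}{\le d}}$.

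First I would expand, writing $f(x) = \sum_{|S|\le d} c_S\chi_S(x)$ with $\chi_S(x) = \prod_{i\in S} x_i$ and independent uniform $c_S \in \F_2$, and compute
\[
\E_f[B^m] \;=\; \sum_{x_1,\dots,x_m \in \F_2^n} \E_f\!\left[(-1)^{f(x_1)+\cdots+f(x_m)}\right].
\]
Since $\sum_i f(x_i) = \sum_{|S|\le d} c_S \sum_i \chi_S(x_i)$ is a linear form in the $c_S$'s, the inner expectation equals $1$ exactly when this form vanishes identically, i.e.\ when for every $S\subseteq [n]$ with $|S|\le d$ the cardinality $|\{i\in[m] : S\subseteq \operatorname{supp}(x_i)\}|$ is even, and is $0$ otherwise. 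Denote by $M_m$ the number of $m$-tuples satisfying all of these parity constraints, so $\E_f[B^m] = M_m$.

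The heart of the proof is a combinatorial upper bound on $M_m$. Introducing the evaluation map $W \colon \F_2^n \to \F_2^N$ with $N = \binom{n}{\le d}$ and $W(x)_S = \chi_S(x)$, the condition becomes $\sum_{i=1}^m W(x_i) = 0$ in $\F_2^N$. The degree-$1$ constraints alone force $\sum_i x_i = 0$ in $\F_2^n$, and the higher-degree constraints progressively pin down finer coordinates of the tuple. I would bound $M_m$ by fixing $x_1,\dots,x_{m/2}$ arbitrarily and then arguing that the linear-algebraic constraints on the remaining $x_{m/2+1},\dots,x_m$ limit the number of valid completions to at most $m! \cdot 2^{(m/2)(n - c_1 n/d)}$, essentially forcing a near-pairing of the $x_i$'s modulo a small slack governed by the rank of the relevant system. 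Combined with Markov and the choice $m \approx 2^{c_1 n/d}$, this yields $\Pr[|B| > 2^{n - c_1 n/d}] \le M_m / 2^{m(n - c_1 n/d)} \le 2^{-c_2 \binom{n}{\le d}}$, as desired.

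The main obstacle is the bound on $M_m$: the parity constraints from monomials of degree $\ge 2$ are polynomial, not linear, in the coordinates of the $x_i$'s, so one cannot directly argue ``$\binom{n}{\le d}$ independent constraints cut by a factor of $2^{\binom{n}{\le d}}$.'' Overcoming this will require either a clever ordering of the constraints level-by-level in degree (each level giving fresh rank against the previous partial assignment, and using the hypothesis $d \le (1-\eps)n$ to ensure one still has room to maneuver up to degree $d$), or an additive-combinatorial argument showing that the image $W(\F_2^n)$ is sufficiently spread in $\F_2^N$ that few $m$-fold sums can land at the origin. Granted this bound, the rest of the argument is a routine Markov-and-tune-parameters calculation.
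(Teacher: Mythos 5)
First, a framing point: this lemma is stated in the paper as a citation to~\cite{BHL09} and is not re-proven there, so there is no ``paper's own proof'' to compare against. Evaluating your proposal on its own, the setup is fine: the identity $\E_f[B^m] = M_m$, where $M_m$ is the number of $m$-tuples $(x_1,\dots,x_m)$ with $\sum_i \chi_S(x_i) \equiv 0 \pmod 2$ for all $|S|\le d$, is correct, and so is the reformulation as $\sum_i W(x_i)=0$, equivalently $\sum_i \delta_{x_i}\in RM(n-d-1,n)$.

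The genuine gap is in the bound on $M_m$, and it is not just ``left to the reader'' --- the route you sketch cannot produce the stated bound. For $m < 2^{d+1}$ (the minimum distance of $RM(n-d-1,n)$) the only solutions are tuples in which every value has even multiplicity, giving $M_m \approx (m-1)!!\,2^{nm/2}$, and then Markov yields $\Pr[|B|>2^{n-c_1n/d}] \le \bigl(m\cdot 2^{-n(1-2c_1/d)}\bigr)^{m/2}$. To make this as small as $2^{-c_2\binom{n}{\le d}}$ one is forced to take $m$ of order $\binom{n}{\le d}/n$, which is far beyond $2^{d+1}$ in both extremes of $d$: for constant $d$ one needs $m = \Omega(n^{d-1})$ while $2^{d+1}=O(1)$, and for $d=(1-\eps)n$ one has $\binom{n}{\le d}=(1-o(1))2^n$ while the best exponent achievable with $m<2^{d+1}$ is only $\Theta(\eps n\cdot 2^{(1-\eps)n})\ll 2^n$. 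Once $m\ge 2^{d+1}$, the ``near-pairing'' description of $M_m$ breaks: there are sporadic $m$-tuples whose odd-multiplicity support is a nonzero low-weight codeword of $RM(n-d-1,n)$, and bounding their contribution is precisely the Reed--Muller weight-distribution problem, i.e.\ the hard content of the lemma. Your two suggested fixes (level-by-level rank, or spreadness of $W(\F_2^n)$) would need to amount to a quantitative weight-distribution bound, and nothing in the sketch supplies one. Relatedly, the heuristic $m = 2^{\Theta(n/d)}$ is not the right scale in either regime; it does not come from matching the Markov exponent against $\binom{n}{\le d}$, and for $d\lesssim\sqrt n$ it also exceeds the safe threshold $2^{d+1}$.

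So as written the proposal reproduces the problem rather than solving it. The actual argument in~\cite{BHL09} does not go through a raw moment computation; it proceeds via structural/counting information about biased low-degree polynomials (in the spirit of what becomes the Kaufman--Lovett / Green--Tao regularity results the present paper invokes elsewhere), which is how one gets the exponent to scale with $\binom{n}{\le d}$ rather than with $2^d\cdot n$. If you want to pursue the moment route you would need, at a minimum, an explicit upper bound on the number of codewords of $RM(n-d-1,n)$ of each weight $w\ge 2^{d+1}$ and then sum the resulting contributions to $M_m$; that is a substantial coding-theoretic ingredient that the sketch does not acknowledge.
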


\begin{proof}[Proof of Theorem \ref{thm:tightness}]
Let $f \colon \F_2^n \to \F_2$ be a random polynomial of degree at most $d$. Fix an affine subspace $u_0 + U \subseteq \F_2^n$ of dimension $k$. One can easily show that $f|_{u_0 + U}$ is equidistributed as a random polynomial on $k$ variables, of degree at most $d$. Therefore, by Lemma \ref{lemma:bhl},
$$
\Pr_f\left[ \bias(f|_{u_0 + U}) > 2^{-c_1 k / d} \right] \le 2^{-c_2\binom{k}{\le d}},
$$
where $c_1,c_2$ are the constants from Lemma \ref{lemma:bhl} suitable for the (somewhat arbitrary) choice $\eps = 1/2$. By taking the union bound over all $\le 2^n \cdot \binom{2^n}{k}$ affine subspaces of $\F_2^n$ of dimension $k$, it is enough to require that
$$
2^{-c_2 \binom{k}{\le d}} \cdot 2^n \cdot \binom{2^n}{k} < 1
$$
so to conclude the proof of the theorem. It is easy to verify that one can choose $c$, as a function of $c_2$, such that the above equation does hold for $k$ as defined in the theorem statement.
\end{proof}

\subsection{Generalization of the Structural Results to Many Polynomials}\label{sec:many poly}
\begin{theorem}[Structural Result I for many polynomials]\label{thm:structural result I for many polynomials}
Let $q$ be a prime power. Let $f_1, \ldots, f_t : \F_q^n \to \F_q$ be polynomials of degree $d_1, \ldots, d_t$ respectively. Let $k$ be the least integer satisfying the inequality
\[
n \le k + \sum_{i=1}^{t}{(d_i+1) \cdot \sum_{j=0}^{d_i-1}{(d_i-j) \cdot \binom{k + j - 1}{j} }}\;.
\]
Then, for every $u_0\in \F_q^n$ there exists a subspace $U \subseteq \F_q^n$ of dimension $k$, such that for all $i \in [t]$, $f_i$ restricted to $u_0 + U$ is a constant function. In particular, if $d_1, \ldots, d_t \le d$ then $k = \Omega((n/t)^{1/(d-1)})$. Moreover, for $d\le \log(n/t)/10$,  $k = \Omega(d \cdot (n/t)^{1/(d-1)})$.
\end{theorem}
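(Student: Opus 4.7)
The plan is to run the inductive argument of Theorem~\ref{thm:structural result} almost verbatim, with the single polynomial $f$ replaced by the collection $f_1,\ldots,f_t$. After replacing each $f_i$ by $f_i - f_i(u_0)$ (which preserves its degree), we may assume $f_i(u_0)=0$ for every $i$. We then build linearly independent vectors $\Delta_1,\Delta_2,\ldots$ one at a time, maintaining the invariant that each $f_i$ vanishes on $u_0 + \spn\{\Delta_1,\ldots,\Delta_k\}$. The base case $k=0$ is immediate from $f_i(u_0)=0$.

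At the inductive step, apply Lemma~\ref{lemma:hasse} separately to each $f_i$ to obtain polynomials $(f_{i,\alpha})_{\alpha\in\{0,\ldots,q-1\}^k}$ with $\deg(f_{i,\alpha}) \le d_i - \wt(\alpha)$, such that, writing $U_k = \spn\{\Delta_1,\ldots,\Delta_k\}$, the restriction $f_i|_{x+U_k}$ is identically zero iff $f_{i,\alpha}(x)=0$ for every $\alpha$ with $\wt(\alpha)\le d_i$. Package these into the $\{0,1\}$-valued indicator polynomial
\[
t_i(x) \;=\; \prod_{\alpha : \wt(\alpha) \le d_i} \(1 - f_{i,\alpha}(x)^{q-1}\),\qquad \deg(t_i) \le (q-1)\sum_{j=0}^{d_i-1}(d_i-j)\binom{k+j-1}{j}.
\]
For each $i$ fix $R_i\subseteq\F_q$ with $|R_i|=\min(q,d_i+1)$, and combine everything into
\[
s(y) \;=\; \prod_{i=1}^{t}\prod_{r\in R_i} t_i(u_0 + r\cdot y).
\]
If $s(y)=1$ then for each $i$, the polynomial $f_i$ vanishes on the $|R_i|$ cosets $u_0 + ry + U_k$ ($r\in R_i$); since $f_i$ restricted to any line through $u_0+z$ ($z\in U_k$) in direction $y$ is a univariate polynomial of degree at most $d_i$ with $\min(q,d_i+1)$ zeros, it vanishes on the entire line. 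Hence $f_i$ is zero on $u_0+\spn\{\Delta_1,\ldots,\Delta_k,y\}$ for every $i$, so any $y\notin U_k$ with $s(y)=1$ is a valid $\Delta_{k+1}$.

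Since $s(0)=1$, the polynomial $s$ is nonzero, and Lemma~\ref{lem:dlsz} provides at least $q^{n-\deg(s)/(q-1)}$ nonzeros of $s$. The degree is bounded by
\[
\deg(s) \;\le\; (q-1)\cdot\sum_{i=1}^{t}(d_i+1)\sum_{j=0}^{d_i-1}(d_i-j)\binom{k+j-1}{j},
\]
so the induction can be extended from dimension $k$ to $k+1$ as long as the number of nonzeros of $s$ strictly exceeds the $q^k$ elements of $U_k$, i.e.\ as long as $n - \deg(s)/(q-1) > k$. Equivalently, termination occurs at precisely the least $k$ satisfying the inequality displayed in the theorem. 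The ``in particular'' bounds follow by plugging in $d_i\le d$, using the standard estimate $\sum_{j=0}^{d-1}(d-j)\binom{k+j-1}{j}=O\!\(d\cdot\binom{k+d-1}{d-1}\)$, and isolating $k$, exactly as in the derivation following Theorem~\ref{thm:structural result}. The only point requiring care is the degree accounting: one must check that the factor of $(d_i+1)$ multiplying the Hasse sum in $\deg(s)$ matches the $(d_i+1)$ appearing on the right-hand side of the theorem's inequality, but this is immediate from $|R_i|\le d_i+1$. Everything else is a line-by-line parallel of the single-polynomial proof.
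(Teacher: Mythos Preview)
Your proposal is correct and follows essentially the same inductive approach as the paper's proof. The one refinement worth noting is your use of a separate set $R_i$ with $|R_i|=\min(q,d_i+1)$ for each polynomial $f_i$, rather than a single $R$ of size $\min(q,d+1)$ for $d=\max_i d_i$; this is exactly what is needed to recover the $(d_i+1)$ factors in the displayed inequality of the theorem, and the paper's sketch is slightly imprecise on this point.
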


Before proving Theorem~\ref{thm:structural result I for many polynomials} we note that by applying a probabilistic argument, it can be shown that the theorem is tight. In particular, it has the right dependency in the number of polynomials~$t$.

\begin{proof}
The proof is very similar to that of Theorem~\ref{thm:structural result}, so we only highlight the differences. As in the proof of~Theorem~\ref{thm:structural result}, we may assume that $f_1,\ldots,f_t$ evaluate to $0$ at $u_0$.
We build by induction an affine subspace $u_0 + U$ on which all the $t$ polynomials evaluate to $0$.
Given we already picked basis vectors $\Delta_1, \ldots, \Delta_{k}$, we consider the set $A$ to be the following:
\[
A = \left\{
x\in \F_q^n\;\;\; \bigg\vert\;\;\; \forall{i \in t},\,\,\, f_i|_{x + \spn\{\Delta_1, \ldots, \Delta_k\}} \equiv 0
\right\}.
\]
As in the proof of Theorem~\ref{thm:structural result}, $A$ can be written as the set of solutions
to a single polynomial equation $t(x) = 1$, where
\[
\deg(t) \le (q-1) \cdot \sum_{i=1}^{t} {(d_i+1) \sum_{j=0}^{d_i-1} {(d_i-j) \cdot \binom{k+j-1}{j}}},
\]
Similarly to Theorem~\ref{thm:structural result}, the polynomial $s$ is now defined, where $\deg(s) \le (d+1) \cdot \deg(t)$ and such that any non-zero of $s$, that is independent of $\Delta_1, \ldots, \Delta_k$, can be taken to be $\Delta_{k+1}$. By DeMillo-Lipton-Schwartz-Zippel lemma, it follows that as long $k$ is not too large, such a root can be found.
\end{proof}
\noindent
Similarly to the way we deduced Theorem~\ref{thm:second structural result over Fq} from Theorem~\ref{thm:structural result}, one can deduce the following theorem from Theorem~\ref{thm:structural result I for many polynomials}. We omit the proof.

\begin{theorem}[Structural Result II for many polynomials]\label{thm:structural result II for many polynomials}
Let $q$ be a prime power. Let $f_1, \ldots, f_t : \F_q^n \to \F_q$ be polynomials of degree at most $d$. Then, there exists a partition of $\F_q^n$ to affine subspaces, each of dimension $\Omega(n^{1/(d-1)!} / t^e)$, such that $f_1,\ldots,f_t$ are all constant on each part.
\end{theorem}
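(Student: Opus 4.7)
The plan is to mimic the proof of Theorem~\ref{thm:second structural result over Fq}, replacing the single-polynomial structural result with its many-polynomials analogue (Theorem~\ref{thm:structural result I for many polynomials}), and doing the same recursive degree-reduction argument. I would proceed by induction on $d$. For the base case $d=1$, each $f_i$ is an affine map and its level sets already partition $\F_q^n$ into at most $q$ affine subspaces of dimension $\geq n-1$; intersecting the $t$ partitions produces a common refinement whose parts are affine subspaces of dimension $\geq n-t$, which is $\Omega(n/t^e)$ in the intended regime.

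For the inductive step, I would apply Theorem~\ref{thm:structural result I for many polynomials} to obtain a subspace $U \subseteq \F_q^n$ of dimension $k_1 = \Omega((n/t)^{1/(d-1)})$ on which every $f_i$ is constant. Partitioning $\F_q^n$ into cosets of $U$ and applying Claim~\ref{claim:partition to lower degree} to each of $f_1,\ldots,f_t$ separately (the claim only needs $f_i|_{u_0+U}$ to have degree at most $d-1$, which is automatic once $f_i$ is constant on $u_0+U$), I conclude that on every coset $v+U$ each $f_i$ has degree at most $d-1$. Viewing a coset as $\F_q^{k_1}$ through a linear bijection, I recursively apply the theorem on $k_1$ variables to the restricted polynomials, now with degree bound $d-1$ and still $t$ polynomials.

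The bookkeeping is straightforward. Writing the promised dimension as $\alpha_d \cdot n^{1/(d-1)!}/t^{\gamma_d}$, the recursion gives
\[
\alpha_d \;=\; \alpha_{d-1}\cdot c^{1/(d-2)!}, \qquad \gamma_d \;=\; \gamma_{d-1}+\tfrac{1}{(d-1)!},
\]
where $c$ is the constant from Theorem~\ref{thm:structural result I for many polynomials}. Exactly as in the proof of Theorem~\ref{thm:second structural result over Fq}, the product $\alpha_d$ converges to a positive constant $\alpha_\infty \geq c^e/2$ because $\sum_{j\geq 0} 1/j! = e$. The key new estimate is that
\[
\gamma_d \;=\; \sum_{i=1}^{d-1}\frac{1}{i!} \;<\; e,
\]
so the $t$-exponent accumulated across all $d-1$ recursive calls is strictly less than $e$, matching the statement of the theorem.

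The argument is essentially routine once the right recursion is in place, and the main conceptual point is that the $t^e$ factor is precisely what one pays for telescoping the $(n/t)^{1/(d-1)}$ loss through the iterated root extractions, rather than arising from any delicate combinatorics. The only minor point that needs care is verifying the base case and the $n \gg t$ regime implicit in the $\Omega$-notation; outside this regime the statement is vacuous. I do not anticipate any genuine obstacle beyond carefully writing down the telescoping estimate.
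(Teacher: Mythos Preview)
Your proposal is correct and is precisely the argument the paper has in mind: the authors omit the proof and say explicitly that it is deduced from Theorem~\ref{thm:structural result I for many polynomials} in the same way Theorem~\ref{thm:second structural result over Fq} was deduced from Theorem~\ref{thm:structural result}. Your recursion and the bookkeeping for the accumulated $t$-exponent $\gamma_d=\sum_{i=1}^{d-1}1/i!<e$ are exactly the computation that fills in the omitted details.
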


\subsection{Sparse Polynomials}\label{sec:sparse}

In this section we prove Theorem~\ref{thm:sparse}. To this end, we prove the following lemma.

\begin{lemma}\label{lem:reduction from sparse}
Let $f$ be a polynomial on $n$ variables over $\F_q$, with $n^c$ monomials. If $f$ is an affine disperser for dimension $k$, then there exists a subspace $U$ of dimension $\Omega(\sqrt{n})$ on which $f|_U$ is of degree at most $2(q-1)c$.
\end{lemma}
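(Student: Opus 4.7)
The plan is to find a coordinate subspace $U \subseteq \F_q^n$ of dimension $\Omega(\sqrt n)$ by zeroing out a carefully chosen set $V \subseteq [n]$ of coordinates, taking $U = \{x \in \F_q^n : x_i = 0 \text{ for all } i \in V\}$, such that every monomial of $f$ involving many distinct variables vanishes on $U$. The surviving monomials then all have small support, hence small total degree. Concretely, let $\mathrm{supp}(M) \subseteq [n]$ denote the set of variables actually appearing in a monomial $M$; since each individual exponent is at most $q-1$, any monomial with $|\mathrm{supp}(M)| \le 2c$ has total degree at most $2c(q-1) = 2(q-1)c$. So it suffices to choose $V$ that intersects $\mathrm{supp}(M)$ for every monomial $M$ of $f$ with $|\mathrm{supp}(M)| > 2c$, while keeping $|[n] \setminus V| = \Omega(\sqrt n)$.

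I would produce such a $V$ via the probabilistic method. Place each coordinate $i \in [n]$ into $V$ independently with probability $p = 1 - 1/\sqrt n$. Then $\E[|[n]\setminus V|] = \sqrt n$, and by a Chernoff bound $|[n]\setminus V| \ge \sqrt n / 2$ except with probability $e^{-\Omega(\sqrt n)}$. Meanwhile, for any fixed monomial $M$ with $|\mathrm{supp}(M)| > 2c$, the probability that $V$ misses $\mathrm{supp}(M)$, i.e., that its entire support lies in $[n]\setminus V$, is at most $(1-p)^{2c+1} = n^{-(2c+1)/2}$. Since $f$ has at most $n^c$ monomials, a union bound gives that the probability that some large-support monomial is missed is at most $n^c \cdot n^{-(2c+1)/2} = n^{-1/2}$. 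For $n$ sufficiently large both events hold simultaneously with positive probability, so the desired $V$, and hence $U$, exists.

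The main step is the calibration: setting $1 - p = n^{-1/2}$ is precisely tuned so that the expected size of $[n]\setminus V$ is $\Theta(\sqrt n)$ while $n^c \cdot (1-p)^{2c+1} = n^{-1/2}$ tends to $0$, so that the union bound over monomials still succeeds. A slightly larger $1-p$ would produce too many missed monomials, while a smaller one would collapse the dimension of $U$. Note that the affine-disperser hypothesis on $f$ is not directly used in the construction of $U$; it becomes relevant only when the lemma is combined with Theorem~\ref{thm:structural result} applied to $f|_U$ to ultimately locate an affine subspace on which $f$ is constant, contradicting the disperser property and thereby constraining $k$.
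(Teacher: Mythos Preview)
Your proof is correct and takes essentially the same approach as the paper: both use a random coordinate restriction (keeping each variable alive with probability $\Theta(n^{-1/2})$), apply a union bound over the $n^c$ monomials to kill all those with support larger than $2c$, and use Chernoff to ensure $\Omega(\sqrt{n})$ variables survive. Your observation that the affine-disperser hypothesis plays no role in this lemma's proof is also correct and matches the paper.
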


Lemma~\ref{lem:reduction from sparse} implies Theorem~\ref{thm:sparse}. Indeed, the above lemma states that for any polynomial $f$ on $n$ variables and $n^c$ monomials over $\F_q$, there exists an affine subspace of $\F_q^n$, with dimension $k(\Omega(\sqrt{n}), 2(q-1)c)$, on which $f$ is constant. By Theorem~\ref{thm:structural result}, $k(\Omega(\sqrt{n}), 2(q-1)c) = \Omega(n^{1/(4(q-1)c)})$, as desired.

\begin{proof}[Proof of Lemma~\ref{lem:reduction from sparse}]
We perform a random restriction to all variables $x_1, \ldots, x_n$. For each $i \in [n]$, independently, with probability $1-(2 \cdot n^c)^{-1/(2c)}$, we set $x_i$ to $0$. Consider a monomial that has at least $2c$ distinct variables. The probability that such a monomial survives the restriction is at most $1/(2 \cdot n^c)$. Thus, by the union bound, with probability at least $1/2$, no monomial with more than $2c$ distinct variables survived the restriction. Restricting ourselves to this event, since we may assume that the individual degree of each variable in the original polynomial is at most $q-1$, any surviving monomial has degree at most $2(q-1)c$.

The expected number of variables that survived the random restriction is $n \cdot (2 \cdot n^c)^{-1/(2c)} = \Omega(\sqrt{n})$. Thus, by the Chernoff bound, with probability at least, say, $3/4$, the number of surviving variables is $\Omega(\sqrt{n})$.

Thus, there exists a restriction of the variables that keeps $\Omega(\sqrt{n})$ of them alive, and such that the resulting polynomial has degree at most $2(q-1)c$. 
\end{proof}

\remove{
\subsection{Functions that are Close to Low Degree Polynomials}\label{sec:approx}

As mentioned above, the proof of Theorem~\ref{thm:structural result} is more robust than stated, and with minor modifications, implies that any function that is close to a low degree polynomial, is constant on some large affine subspace (see Theorem~\ref{thm:approx}).

\begin{proof}
Let $k$ be the largest integer such that
\[
k + (d+1) \cdot \sum_{j=0}^{d-1}{(d-j)\binom{k+j-1}{j}} < n - \log_q(|B|)\;.
\]
A simple calculation shows that $k = \Theta\(d \cdot (n - \log_q{|B|})^{1/(d-1)}\)$, for $d < \log(n)/10$.
Assume, without loss of generality, that $f(0) = 0$. By Theorem~\ref{thm:structural result}, there exists a subspace $U$ of dimension $k$ such that $f|_{U} \equiv 0$.
As the set of ``good'' offsets $\{x:f|_{x+U} \equiv 0 \}$ may be written as the non-zeros of a degree
$D \le (q-1)\sum_{j=0}^{d-1}{(d-j)\binom{k+j-1}{j}}$
polynomial, we have at least $q^{n-D/(q-1)} > q^k \cdot |B|$ good offsets.
Thus, we have more than $|B|$ ``good'' disjoint cosets $x+U$ on which $f\equiv0$,
and there is such a coset which does not intersect $B$.
\end{proof}
}
\remove{
In this section we explain the changes that need to be made in the proof of Theorem~\ref{thm:structural result} so to deduce Theorem~\ref{thm:approx}. Recall the proof of Theorem~\ref{thm:structural result}. Given $u_0 \in \F_q^n$, we constructed the directions $\Delta_1, \ldots, \Delta_k$ step by step, where the invariant we maintained was that $f$, the given degree $d$ polynomial, is constant on $u_0 + \spa(\Delta_1, \ldots, \Delta_k)$. Without loss of generality we assume this constant is $0$. For the purpose of finding a new direction $\Delta_{k+1}$, we defined the set $A \subseteq \F_q^n$ to be the set of ``good'' offsets, that is, $x \in A$ if and only if $f$ is the constant $0$ on $x + \spa(\Delta_1, \ldots, \Delta_k)$. The main observation was that membership in $A$ can be expressed as the set of $1$'s of a polynomial with degree at most $D \sim k^{d-1}$. The argument then continued as follows: since $u_0 \in A$, by the DeMillo-Lipton-Schwartz-Zippel lemma, $|A| \ge 2^{n-D}$, and so as long $2^{n-D} > 2^k$, there exists $x \in A$ such that $x + u_0 \notin \spa(\Delta_1, \ldots, \Delta_k)$. One can then take the new direction to be $\Delta_{k+1} = u_0 + x$.

For simplicity, we consider only the binary field $\F_2$. Let $g : \F_2^n \to \F_2$ be a function that agrees with a degree $d$ polynomial $f$ on all $\F_2^n$ but for some set $B \subset \F_2^n$. As in the proof of Theorem~\ref{thm:structural result}, we construct the directions $\Delta_1, \ldots, \Delta_k$ step by step, with respect to the \emph{polynomial} $f$. On top of the original invariant, we further maintain the invariant $(u_0 + \spa(\Delta_1, \ldots, \Delta_k)) \cap B = \emptyset$. So, if the polynomial $f$ is the constant $0$ on $u_0 + \spa(\Delta_1, \ldots, \Delta_k)$ then so does the function $g$. Notice that as long as
\begin{equation}\label{eq:B}
2^{n-D} > 2^k \cdot |B|,
\end{equation}
there exists $x \in A$ such that $x \notin B + \spa(\Delta_1, \ldots, \Delta_k)$. Consider now the new direction $\Delta_{k+1} = u_0 + x$. Then,
\begin{align*}
u_0 + \spa(\Delta_1, \ldots, \Delta_{k+1}) &=
\left( u_0 + \spa(\Delta_1, \ldots, \Delta_k) \right) \cup \left( u_0+\Delta_{k+1} + \spa(\Delta_1, \ldots, \Delta_k) \right) \\ &=
\left( u_0 + \spa(\Delta_1, \ldots, \Delta_k) \right) \cup \left( x + \spa(\Delta_1, \ldots, \Delta_k) \right).
\end{align*}
Since both $u_0, x \in A$, the polynomial $f$ is the constant $0$ on $u_0 + \spa(\Delta_1, \ldots, \Delta_k)$ and on $x + \spa(\Delta_1, \ldots, \Delta_k)$. Thus, $f$ is the constant $0$ on the entire affine subspace $u_0 + \spa(\Delta_1, \ldots, \Delta_{k+1})$. All that is left to show is that
\begin{equation}\label{eq:B does not intersect}
(u_0 + \spa(\Delta_1, \ldots, \Delta_{k+1})) \cap B = \emptyset.
\end{equation}
By the induction hypothesis, $B$ does not intersect $u_0 + \spa(\Delta_1, \ldots, \Delta_k)$. Moreover, since $x \notin B + \spa(\Delta_1, \ldots, \Delta_k)$, it holds that $(x + \spa(\Delta_1, \ldots, \Delta_k)) \cap B = \emptyset$. Thus, Equation~\eqref{eq:B does not intersect} holds. Since $D \sim k^{d-1}$, then as long as $k < (n-\log{|B|})^{1/(d-1)}$ Equation~\eqref{eq:B} holds, and we can continue adding new directions, expanding the affine subspace.
}

\section{The Algorithmic Aspect}\label{sec:alg}

\subsection{Efficient Algorithm for Finding a Somewhat Large Subspace}

\begin{theorem}\label{thm:efficient structural result I}
Let $f \colon \F_2^n \to \F_2$ be a polynomial of degree $d\le \log(n)/3$ given as a black-box.
Then, there exists an algorithm that makes $\poly(n)$ queries to $f$, runs in time $\poly(n)$, and finds an affine subspace $U$ of dimension $\Omega(d \cdot n^{1/(d-1)})$ such that $\deg(f|_{U}) \le d-1$.
\end{theorem}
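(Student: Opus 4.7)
The plan is to convert the iterative construction underlying Theorem~\ref{thm:structural result} into a polynomial-time algorithm, exploiting the fact that the target here is much weaker: we only need to decrease the degree by one, not to make $f$ constant. In particular, whereas the existence proof had to solve a nonlinear system of polynomial equations (the $f_\alpha$'s with $\wt(\alpha) < d$), here we will only need the vanishing of the $f_\alpha$'s with $\wt(\alpha) = d$, and these are constants and therefore yield \emph{linear} constraints on each new basis vector.

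Concretely, I would first record the following reformulation, which follows directly from Lemma~\ref{lemma:hasse}: if $\Delta_1,\ldots,\Delta_k$ are linearly independent and $U = \spa\{\Delta_1,\ldots,\Delta_k\}$, then $\deg(f|_U) \le d-1$ iff for every $d$-subset $\{i_1,\ldots,i_d\}\subseteq[k]$ the iterated derivative $\partial_{\Delta_{i_1}}\cdots\partial_{\Delta_{i_d}} f$ (which is a constant since $\deg f = d$) is zero. The algorithm builds $\Delta_1,\Delta_2,\ldots$ greedily, maintaining this condition. Given $\Delta_1,\ldots,\Delta_i$, a candidate $\Delta_{i+1}$ preserves the property iff for every $(d-1)$-subset $S = \{j_1,\ldots,j_{d-1}\}\subseteq[i]$ the polynomial $g_S(x) := \partial_{\Delta_{j_1}}\cdots\partial_{\Delta_{j_{d-1}}} f(x)$, which is affine in $x$, satisfies $\partial_{\Delta_{i+1}} g_S \equiv 0$. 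Writing $g_S(x) = c_S + \langle b_S, x\rangle$, this becomes the \emph{single linear equation} $\langle b_S, \Delta_{i+1}\rangle = 0$. Together with the independence requirement $\Delta_{i+1} \notin \spa\{\Delta_1,\ldots,\Delta_i\}$, finding a valid $\Delta_{i+1}$ reduces to solving a homogeneous linear system of $\binom{i}{d-1}$ equations in $\F_2^n$ and choosing a solution outside an $i$-dimensional subspace. This succeeds whenever $\binom{i}{d-1} + i < n$, so the process continues until $i = \Omega\bigl(((d-1)!\,n)^{1/(d-1)}\bigr) = \Omega(d \cdot n^{1/(d-1)})$ by Stirling, matching the claimed bound.

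For the algorithmic/query complexity, each coefficient $b_S$ can be extracted by evaluating $g_S$ at $0$ and at each standard basis vector $e_1,\ldots,e_n$, setting $b_S[j] = g_S(e_j) - g_S(0)$; each such evaluation of $g_S$ is an alternating sum of $2^{d-1}$ black-box values of $f$ via the standard expansion of the iterated derivative. Thus building all constraints at step $i$ costs at most $\binom{i}{d-1}\cdot(n+1)\cdot 2^{d-1}$ queries and an equal amount of time for Gaussian elimination and independence check. Summing over $i \le k = O(d \cdot n^{1/(d-1)})$, and using $d \le \log(n)/3$ so that $2^{d-1} \le n^{1/3}$, the total query and time complexity is $\poly(n)$.

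The main obstacle, and the place I would spend most care, is the reformulation step: verifying that the condition $\deg(f|_U)\le d-1$ really does reduce to the vanishing of the $\binom{k}{d}$ scalar $d$-th directional derivatives, and in particular that (i) these are genuine constants (which fails for higher iterated derivatives of a degree-$d$ polynomial only because they are identically zero), and (ii) adding a new direction really only introduces constraints of the linear form $\langle b_S,\Delta_{i+1}\rangle = 0$. Once this is in place, the algorithm is a straightforward loop of Gaussian-elimination plus black-box queries; nothing more subtle than the existence proof is needed, and the weaker goal buys us the linearity that makes it efficient.
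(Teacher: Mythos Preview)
Your proposal is correct and essentially identical to the paper's proof: the paper likewise builds $\Delta_1,\ldots,\Delta_k$ greedily, observes that $\deg(f|_U)\le d-1$ is equivalent to the vanishing of the $d$-fold directional derivatives (proved there via M\"obius inversion rather than via Lemma~\ref{lemma:hasse}, but this is only a presentational difference), writes each $(d-1)$-fold derivative as an affine function $\langle \ell_S,x\rangle + c_S$, and chooses $\Delta_{k+1}$ in the orthogonal complement of $\spa\{\ell_S\}$ and outside $\spa\{\Delta_1,\ldots,\Delta_k\}$, terminating when $\binom{k}{d-1}+k\ge n$. The query-complexity and running-time analyses also match.
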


\noindent
The proof of Theorem~\ref{thm:efficient structural result I} is deferred to Appendix~\ref{sec:binary field algorithmic} as it relies on notations and ideas from the proof of the first structural result for the binary field, which can be found in Appendix~\ref{sec:binary field}. We advise the reader to look at the latter section before reading the proof of Theorem~\ref{thm:efficient structural result I}.

Theorem~\ref{thm:efficient structural result I} yields the following corollary.

\begin{corollary}
There exists an algorithm that given a degree $d$ polynomial $f: \F_2^n \to \F_2$ as a black box, runs in $\poly(n)$-time and finds an affine subspace of dimension $\Omega(n^{1/(d-1)!})$ on which $f$ is constant.
\end{corollary}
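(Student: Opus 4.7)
The plan is to apply Theorem~\ref{thm:efficient structural result I} recursively $d$ times, so that at each iteration the degree of $f$ drops by at least one on an appropriately chosen affine subspace, and after $d$ iterations the degree has dropped to $0$ (i.e., $f$ is constant on the final subspace).

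Concretely, I would set $U_0 = \F_2^n$, $n_0 = n$, and $f_0 = f$, and then loop for $i = 0, 1, \ldots, d-1$ as follows. At step $i$, I have an affine subspace $U_i \subseteq \F_2^n$ of dimension $n_i$ together with an explicit basis, and the restriction $f_i = f|_{U_i}$ has degree at most $d-i$. Using the basis, I identify $U_i$ linearly with $\F_2^{n_i}$ and give $f_i$ as a black-box on $\F_2^{n_i}$ (one evaluation requires one linear combination in $\F_2^n$ followed by one oracle call to $f$). I then invoke Theorem~\ref{thm:efficient structural result I} on $f_i$ to obtain, in $\poly(n_i) = \poly(n)$ time and queries, an affine subspace $U_{i+1} \subseteq U_i$ of dimension $n_{i+1} = \Omega((d-i)\cdot n_i^{1/(d-i-1)})$ on which $f_i$, and hence $f$, has degree at most $d-i-1$.

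Unrolling the recursion $n_{i+1} \ge c \cdot (d-i) \cdot n_i^{1/(d-i-1)}$ in the standard way (the same telescoping used to derive Theorem~\ref{thm:second structural result over Fq} from Theorem~\ref{thm:structural result}) yields
\[
n_d \;\ge\; c' \cdot n^{1/((d-1)(d-2)\cdots 1)} \;=\; \Omega\!\left(n^{1/(d-1)!}\right),
\]
for some constant $c'>0$ depending only on $d$. The total running time and number of black-box queries are $d \cdot \poly(n) = \poly(n)$, since each invocation only blows up the work by a polynomial factor and we can always translate queries on $U_i$ back to queries on the original $f$ using stored bases.

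The main obstacle is verifying that the hypothesis $d-i \le \log(n_i)/3$ of Theorem~\ref{thm:efficient structural result I} remains valid throughout the recursion, since $n_i$ shrinks as $i$ grows. Using the bound $n_i \ge c_i \cdot n^{1/((d-1)\cdots(d-i))}$, one checks that $\log(n_i) = \Omega(\log n / (d-1)!)$, so the hypothesis holds at every step as long as $d$ is small enough relative to $n$ (which is the regime in which the conclusion $\Omega(n^{1/(d-1)!})$ is nontrivial to begin with). This is the only place where care is needed; everything else is bookkeeping.
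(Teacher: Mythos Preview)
Your proposal is correct and takes essentially the same approach as the paper: the paper does not spell out a proof of this corollary at all, merely stating that Theorem~\ref{thm:efficient structural result I} ``yields'' it, and in the introduction it says explicitly that ``by applying this algorithm recursively $d$ times, one can efficiently obtain a subspace of dimension $\Omega(n^{1/(d-1)!})$ on which $f$ is constant.'' Your write-up simply fills in the bookkeeping (tracking the dimension recursion exactly as in the proof of Theorem~\ref{thm:second structural result over Fq}, and noting the degree hypothesis of Theorem~\ref{thm:efficient structural result I} stays satisfied), which the paper leaves implicit.
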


\remove{\begin{corollary}\label{cor:partition algorithm}
Let $f \colon \F_2^n \to \F_2$ be a degree $d$ polynomial given as a black-box, then there is a $2^{n-k} \cdot \poly(n)$-time $\poly(n)$-space algorithm, which partitions $\F_2^n$ to affine subspace of dimension $k$ on each of which $f$ is constant, where $k = \Omega(n^{1/(d-1)!})$.
\end{corollary}
\noindent
In particular, one can compute the number of satisfying assignments for $f$ using Corollary~\ref{cor:partition algorithm}.
}

\subsection{Subexponential-Time Algorithm for Finding an Optimal Subspace}

\begin{theorem}\label{thm:subexp alg}
There exists a constant $\beta>0$ such that the following holds.
There is an algorithm that given $f \colon \F_q^n \to \F_q$, a degree $d$ polynomial (as a list of monomials), where $3 \le d \le \log(n)/10$, and $u_0 \in \F_q^n$ as inputs, finds an affine subspace $u_0 + U$ of dimension $\Omega(k(n,d))$, restricted to which $f$ is constant. The algorithm runs in time $q^{\beta \cdot n^{(d-2)/(d-1)}} \cdot \poly(n^d)$, and uses $\poly(n^d,\log{q})$ space.
\end{theorem}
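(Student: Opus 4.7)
The plan is to implement the inductive construction from the proof of Theorem~\ref{thm:structural result} efficiently, leveraging the fact that the polynomial $s(y)=\prod_{r\in R}t(u_0+ry)$ there is a product of low-degree factors, many of which become \emph{linear} in $y$ and can be handled by elementary linear algebra. Concretely, at iteration $k'$ we hold basis vectors $\Delta_1,\ldots,\Delta_{k'}$ for the subspace built so far, and we need to produce $\Delta_{k'+1}\notin\spn(\Delta_1,\ldots,\Delta_{k'})$ with $s(\Delta_{k'+1})=1$. By Lemma~\ref{lemma:hasse}, together with the fact $|R|>d-\wt(\alpha)$, the equation $s(y)=1$ is equivalent to the system
\[
g_{\alpha,i}(y)=0\quad\text{for all }\alpha\in\{0,\ldots,q-1\}^{k'}\text{ with }\wt(\alpha)\le d\text{ and }1\le i\le d-\wt(\alpha),
\]
where $g_{\alpha,i}$ is the coefficient of $a^i$ in $f_\alpha(u_0+ay)$, a polynomial in $y$ of degree exactly $i$.

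First I would handle all \emph{linear} constraints (the $i=1$ ones) --- of which there are at most $\binom{k'+d-1}{d-1}$ --- via Gaussian elimination in $\poly(n^d)$ time, producing a linear subspace $L\subseteq\F_q^n$ of dimension at least $n-\binom{k'+d-1}{d-1}$. Taking the target dimension $k$ to be a sufficiently small constant fraction of the threshold in Theorem~\ref{thm:structural result}, the structural accounting forces $\dim L=\Omega(n)$ throughout, so in particular $\dim L \gg k'$ and nearly all points of $L$ lie outside $\spn(\Delta_1,\ldots,\Delta_{k'})$.

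Within $L$, I would search for $y$ satisfying the remaining higher-degree ($i\ge 2$) constraints. Let $h(y)\deff\prod_{i\ge 2,\,\alpha}\bigl(1-g_{\alpha,i}(y)^{q-1}\bigr)$, so $h=1$ precisely on the good points. The key estimate is that the sum of degrees $\sum_{i=2}^{d}i\cdot\binom{k'+d-i}{d-i}$ is dominated by the $i=2$ term $2\binom{k'+d-2}{d-2}\sim 2(k')^{d-2}/(d-2)!$. For $k'\le k=\Theta(d\cdot n^{1/(d-1)})$, Stirling's approximation to $(d-2)!$ cancels the $d^{d-2}$ coming from $(k')^{d-2}$ up to a $d$-independent factor, yielding $D'\deff\sum_{i\ge 2,\alpha}\deg g_{\alpha,i}=O(n^{(d-2)/(d-1)})$ with a universal constant. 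Thus $\deg h \le (q-1)D'$, and since Theorem~\ref{thm:structural result} guarantees the existence of a good direction ($h|_L\not\equiv 0$), Lemma~\ref{lem:dlsz} shows that a uniformly random $y\in L$ is good with probability at least $q^{-\beta\cdot n^{(d-2)/(d-1)}}$ for a universal constant $\beta$.

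The algorithm therefore enumerates $q^{O(n^{(d-2)/(d-1)})}$ points of $L$ --- by sampling uniformly (randomized) or by iterating over an explicit hitting set for degree-$D'$ polynomials of comparable size (deterministic) --- testing each candidate in $\poly(n^d)$ time until one is found that is both good and outside $\spn(\Delta_1,\ldots,\Delta_{k'})$. Summing over the $k=\Omega(k(n,d))$ iterations yields the advertised running time $q^{\beta\cdot n^{(d-2)/(d-1)}}\cdot\poly(n^d)$, and since the enumeration is performed streaming, the space usage stays $\poly(n^d,\log q)$. The main obstacle is the delicate balancing of the degree bookkeeping: one must verify simultaneously that $\dim L$ is large enough for the DLSZ density bound to dominate the $q^{k'}$ overhead of the span, and that the $d$-dependent prefactors conspire to leave the exponent truly $O(n^{(d-2)/(d-1)})$ with a constant independent of $d$; both points resolve because the structural threshold itself scales as $(d/e)\cdot n^{1/(d-1)}$, producing exactly the cancellation needed.
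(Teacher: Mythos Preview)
Your proposal is correct and follows essentially the same approach as the paper: implement the inductive construction of Theorem~\ref{thm:structural result}, separate the resulting constraint system into its linear and higher-degree parts, dispose of the linear part by Gaussian elimination, and brute-force the higher-degree part whose total degree is $O(n^{(d-2)/(d-1)})$. The only differences are cosmetic. The paper keeps the constraints in the evaluation form $f_\alpha(u_0+ry)=0$ for $r\in R$ (so ``linear'' means $\deg f_\alpha=1$) rather than your coefficient form $g_{\alpha,i}(y)=0$ (so ``linear'' means $i=1$); and for the deterministic search the paper is concrete where you are not --- it pads the linear system with extra equations $y_i=0$ down to a coset of dimension $m+k+1$ and exhausts over it, which works because $y=0$ is always a solution and hence the restricted indicator is nonzero. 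This padding trick is precisely an ``explicit hitting set'' of the size you invoke, so your sketch and the paper's argument coincide.
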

We obtain the following corollary.
\noindent
\begin{corollary}\label{cor:partition algorithm}
There exists a $q^{n-k} \cdot \poly(n^d)$-time $\poly(n^d, \log{q})$-space algorithm that given $f \colon \F_q^n \to \F_q$, a degree $d$ polynomial, partitions $\F_q^n$ to affine subspace of dimension $k$ on each of which $f$ is constant, where $k = \Omega(n^{1/(d-1)!})$.
\end{corollary}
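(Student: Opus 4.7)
The plan is to implement the recursive argument from the proof of Theorem~\ref{thm:second structural result over Fq} algorithmically, replacing each existential application of Theorem~\ref{thm:structural result} by an actual call to the subexponential algorithm of Theorem~\ref{thm:subexp alg}. Concretely, I would define a recursive procedure $\textsc{Partition}(f, n, d)$. In the base cases $d \le 1$ we return the trivial partition (for $d = 0$ just $\{\F_q^n\}$; for $d = 1$ the $q$ parallel affine hyperplanes where $f$ is constant). For $d \ge 2$, invoke Theorem~\ref{thm:subexp alg} with $u_0 = 0$ to obtain in time $q^{\beta n^{(d-2)/(d-1)}} \cdot \poly(n^d)$ a subspace $U$ of dimension $k_1 = \Omega(n^{1/(d-1)})$ on which $f$ is constant. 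By Claim~\ref{claim:partition to lower degree}, for every coset representative $v$ of $\F_q^n / U$, the restriction $f|_{v+U}$ has degree at most $d-1$. I would enumerate the $q^{n-k_1}$ cosets, symbolically substitute an affine parameterization of $v + U$ into $f$ (this takes $\poly(n^d)$ time since $f$ is given as a monomial list) to obtain $\tilde{f}_v \colon \F_q^{k_1} \to \F_q$ of degree $\le d-1$, recurse via $\textsc{Partition}(\tilde{f}_v, k_1, d-1)$, and lift each returned part back into $v+U$.

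The parts produced have dimension $\ge \Omega(n^{1/(d-1)!})$ by exactly the same inductive calculation as in the proof of Theorem~\ref{thm:second structural result over Fq}, because the dimension recursion $n_{i+1} = \Omega(n_i^{1/(d-i)})$ is the same one. For the running time, let $T(n,d)$ denote the total time. I get the recurrence
\[
T(n, d) \,\le\, q^{\beta n^{(d-2)/(d-1)}} \poly(n^d) \,+\, q^{n - k_1} \cdot T(k_1, d - 1),
\]
with $T(n, 0), T(n, 1) \le q \cdot \poly(n)$. Unrolling produces $d$ terms of the form $q^{\,n - n_i + \beta n_i^{(d_i-2)/(d_i-1)}} \cdot \poly(n^d)$, indexed by the recursion level $i$ with $n_1 = n$, $n_{i+1} = k_i$, $d_i = d - i + 1$. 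Since $(d_i-2)/(d_i-1) < 1$, the overhead $\beta n_i^{(d_i-2)/(d_i-1)}$ is $o(n_i)$, and since the chain $n_1 > n_2 > \cdots$ terminates at the final partition dimension $k = \Omega(n^{1/(d-1)!})$, we have $n_i - \beta n_i^{(d_i-2)/(d_i-1)} \ge k$ at every level. Hence each term is at most $q^{n-k} \cdot \poly(n^d)$, giving $T(n,d) = q^{n-k} \cdot \poly(n^d)$.

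For the space bound I would implement the recursion as a DFS over the implicit partition tree, maintaining only the current root-to-leaf chain in memory. Each chain frame stores a restricted polynomial as a monomial list of size $\poly(n^d)$, a basis for the current affine subspace (each basis vector taking $n \log q$ bits), and a current coset representative; summed over the $d$ frames this is $\poly(n^d, \log q)$. Each part of the final partition is produced and emitted in turn, so the working memory never needs to hold all $q^{n-k}$ parts simultaneously.

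The main obstacle is the accounting in the second paragraph: one must check that the subexponential overhead from Theorem~\ref{thm:subexp alg} does not compound across the $d$ recursion levels to exceed the $q^{n-k}$ budget. The reason this works is somewhat delicate --- it relies on the fact that the exponent $(d_i-2)/(d_i-1)$ in the overhead is strictly less than $1$ and that the chain of dimensions $n_i$ stays above the final output dimension $k$, so $q^{-n_i + \beta n_i^{(d_i-2)/(d_i-1)}} \le q^{-k}$ at every level. Everything else (the correctness of the partition and the polynomial manipulations) is bookkeeping that follows directly from Claim~\ref{claim:partition to lower degree} and Theorem~\ref{thm:second structural result over Fq}.
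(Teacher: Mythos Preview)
Your approach is correct and is precisely the intended derivation. Note that the paper does not actually spell out a proof of this corollary: the \texttt{proof} environment immediately following it is the proof of Theorem~\ref{thm:subexp alg}, and the corollary itself is left implicit after the sentence ``We obtain the following corollary.'' Your recursive scheme---call Theorem~\ref{thm:subexp alg} once to obtain $U$, enumerate the $q^{n-k_1}$ cosets, drop the degree via Claim~\ref{claim:partition to lower degree}, and recurse---exactly algorithmizes the existential proof of Theorem~\ref{thm:second structural result over Fq}, and your time and space accounting is sound.

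One small gap worth patching: Theorem~\ref{thm:subexp alg} is stated only for $d \ge 3$, so your base case should be $d \le 2$ rather than $d \le 1$. For a degree-$2$ polynomial, however, the system $\{f_\alpha = 0\}$ arising in the proof of Theorem~\ref{thm:subexp alg} consists entirely of affine equations (the set $NL$ there is empty), so the required $\Omega(n)$-dimensional subspace is found by Gaussian elimination in $\poly(n,\log q)$ time, and your recurrence goes through unchanged.
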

\noindent
In particular, one can compute the number of satisfying assignments for $f$ using Corollary~\ref{cor:partition algorithm}.


\begin{proof}
We follow the proof of Theorem~\ref{thm:structural result}. Again, we may assume $f(u_0) = 0$. Given the previously chosen vectors $\Delta_1, \ldots, \Delta_k$ such that $f$ is the constant $0$ on $u_0 + \spa\{\Delta_1, \ldots, \Delta_k\}$, we show how to find a new vector $\Delta_{k+1}$ which is linearly independent of $\Delta_1, \ldots, \Delta_k$, such that $f$ is constantly zero on $u_0 + \spa\{\Delta_1, \ldots, \Delta_{k+1}\}$.
The set $A$ is the set of solutions to the following set of polynomial equations:
\[
\{f_\alpha(x) = 0 \;: \; \alpha \in \{0,1,\ldots, q-1\}^k, \wt(\alpha)\le d-1\}\;,
\]
and by our assumptions, $u_0$ is a solution to all of these equations.
By treating the polynomial $f$ as a formal sum of monomials we can calculate each $f_\alpha$ in $\poly(n^d)$ time.
Let $R$ be some arbitrary subset of $\F_q$ of size $\min\(q,d+1\)$ then any solution $y$ to the following set of equations which is linearly independent of $\Delta_1, \ldots, \Delta_k$ can be the new direction $\Delta_{k+1}$:
\[
\{f_\alpha(u_0 + r\cdot y) = 0 \;: \; \alpha \in \{0,1,\ldots, q-1\}^k, \wt(\alpha)\le d-1,\;r\in R\}\;.
\]
It is therefore enough to find more than $q^{k}$ different solutions to this set of equations, in order to guarantee that one of them will be linearly independent of the previous $\Delta_i$'s. In order to do so, we partition the set of equations into the set of linear equations and the set of non-linear equations:
\begin{align*}
L &= \{f_\alpha(u_0 + r\cdot y) = 0 \;: \; \alpha \in \{0,1,\ldots, q-1\}^k, \wt(\alpha)\le d-1, \deg(f_\alpha) = 1,\;r\in R\}\;.\\
NL &= \{f_\alpha(u_0 + r\cdot y) = 0 \;: \; \alpha \in \{0,1,\ldots, q-1\}^k, \wt(\alpha)\le d-1, \deg(f_\alpha) > 1,\;r\in R\}\;.
\end{align*}
Let $m = \sum_{f_\alpha \in NL} {\deg(f_\alpha)}$.
Since $0^n$ is a solution to all equations in $L \cup NL$, we can impose new linear equations which hold for $0^n$, keeping the system consistent. More specifically, we define a new set $L'$, which initially is equal to $L$, and iteratively add equations of the form $\{y_i = 0\}$ to $L'$ until $\dim(L') = n - m - k - 1$. \footnote{We add these constraints as concentrating at finding a solution of this form (that is, a solution that satisfies all equations in $L' \cup NL$ rather than only the equations in $L \cup NL$) is easier from the computational aspect.}

The set of solutions to both $L'$ and $NL$ is non-empty as it contains the all zeros vector. Furthermore, the sum of the degrees of equations in $L' \cup NL$ is exactly $(n - m - k - 1) + m = n - k - 1$. Therefore, by Lemma~\ref{lem:dlsz}, there are at least $q^{k+1}$ solutions to the equations in $L' \cup NL$, which guarantees that one of the solutions is linearly independent of $\Delta_1, \ldots, \Delta_k$.

Next, we show how to find all solutions to the equations in $L' \cup NL$. We find a basis for the set of solutions to $L'$ using Gaussian elimination, and iterate over all vectors in the affine subspace this basis spans. For each vector $y$ in this affine subspace we verify that all the equations in $NL$ are satisfied by $y$. The running time of this process is $O(q^{n-\dim(L')} \cdot |NL| \cdot n^d)$, which is $O(q^{m + k + 1} \cdot n \cdot n^{d})$.

As $m \le \min(d+1,q) \cdot \sum_{i=0}^{d-2} { (d-i) \cdot \binom{k+i-1}{i} }$, an elementary calculation shows that for $k \le  \frac{d}{10e} \cdot n^{1/(d-1)}$ and $3\le d\le \log(n)/10$ we have $m + k \le \beta \cdot n^{(d-2)/(d-1)}$ for some universal constant $\beta$.
Thus, the total running time of the algorithm is $q^{\beta \cdot n^{(d-2)/(d-1)}} \cdot \poly(n^d)$.
The algorithm uses $O((|NL| + |L|) \cdot n^{d} \cdot \polylog(q))$ space to store and manipulate the polynomials $f_\alpha$. In addition, $O(n^2 \cdot \polylog(q))$ space is used to perform the Gaussian elimination. Overall the space used by the algorithm is $O(n^{d+1} \cdot \polylog(q))$.
\end{proof}

\noindent

\section{Extractors and Dispersers for Varieties}

We start this section by proving Theorem~\ref{thm:from affine extractors to varieties for small t intro version}.

\begin{proof}[Proof of Theorem~\ref{thm:from affine extractors to varieties for small t intro version}]
Let $g_1, \ldots, g_t \colon \F_q^n \to \F_q$ be degree $d$ polynomials.
By Theorem~\ref{thm:structural result II for many polynomials}, there exists a partition of $\F_q^n$ to affine subspaces $P_1, \ldots, P_\ell$, each of dimension $\Omega(n^{1/(d-1)!}/t^e)$, such that $g_j|_{P_i}$ is constant for all $i \in [\ell]$ and $j \in [t]$. Since $f$ is an affine extractor for such dimension, with bias $\eps$, then for all $i \in [\ell]$ it holds that $\SD(f(P_i), \F_q) \le \eps$.

Let $I \subseteq [\ell]$ be the set of indices of affine subspaces in the partition such that $i \in I$ if and only if $g_j|_{P_i} = 0$ for all $j \in [t]$. In other words, we consider the partition of $\Variety(g_1,\ldots,g_t)$ to affine subspaces, induced by the partition of $\F_q^n$ to $P_1, \ldots, P_\ell$. Since the $P_i$'s are disjoint, the random variable $f(\Variety(g_1,\ldots,g_t)) = f(\cup_{i \in I}{P_i})$ is a convex combination of the random variables $\{f(P_i)\}_{i \in I}$. Thus, $\SD(f(\Variety(g_1,\ldots,g_t)), \F_q) \le \max_{i \in I}{\SD(f(P_i),\F_q)} \le \eps$.
\end{proof}
\noindent
We now give a formal statement and proof for the reduction from extractors for varieties to affine extractors, which does not depend on the number of polynomials defining the variety, but rather on the variety size.

\begin{theorem}\label{thm:from affine extractors to varieties for any t}
For every $d \in \mathbb{N}$ and $\delta, \rho \in (0,1)$ the following holds. Let $f \colon \F_q^n \to \F_q$ be an affine extractor for dimension $\Omega(n^{1/(d-1)!} / \ell^e)$ with bias $\eps$, where
$\ell = \log_{q}{(1/(\rho \delta))}$. Then, $f$ is an extractor with bias $\eps + \delta$ for varieties with density at least $\rho$ (i.e., size at least $\rho \cdot q^n$), that are the common zeros of any degree (at most) $d$ polynomials.
\end{theorem}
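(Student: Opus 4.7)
[Proof proposal for Theorem~\ref{thm:from affine extractors to varieties for any t}]
The plan is to reduce to Theorem~\ref{thm:from affine extractors to varieties for small t intro version} by approximating an arbitrary variety by one cut out by only $\ell = \log_q(1/(\rho\delta))$ polynomials of degree at most $d$. Fix any variety $V = \Variety(g_1,\ldots,g_m)$ with $|V| \ge \rho \cdot q^n$, where each $g_j$ has degree at most $d$; note that $m$ is unrestricted. We will first show that there exist $\ell$ polynomials of degree at most $d$ whose joint variety $V'$ contains $V$ but is only slightly larger, and then we will transfer the extraction guarantee from $V'$ back to $V$ via the triangle inequality.

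To construct $V'$, I would take $\ell$ independent random $\F_q$-linear combinations
\[
h_i(x) = \sum_{j=1}^{m} \lambda_{i,j} \, g_j(x), \qquad i=1,\ldots,\ell,
\]
with the $\lambda_{i,j}$ drawn independently and uniformly from $\F_q$, and set $V' = \Variety(h_1,\ldots,h_\ell)$. Each $h_i$ has degree at most $d$, and clearly $V \subseteq V'$. For any fixed $x \notin V$ some $g_{j_0}(x)$ is non-zero, so conditioning on the other $\lambda_{i,j}$'s the value $h_i(x)$ is uniformly distributed over $\F_q$; hence $\Pr[h_i(x)=0] = 1/q$ independently across $i$, giving $\Pr[x \in V'] = q^{-\ell} = \rho \delta$. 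Summing over $x \notin V$,
\[
\E\bigl[\,|V' \setminus V|\,\bigr] \le q^n \cdot \rho \delta \le \delta \cdot |V|,
\]
so there exists a choice of the $\lambda_{i,j}$ for which $|V' \setminus V| \le \delta \cdot |V|$. Fix such a choice from now on.

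With this in hand, apply Theorem~\ref{thm:from affine extractors to varieties for small t intro version} to the $\ell$ polynomials $h_1,\ldots,h_\ell$: since $f$ is by hypothesis an affine extractor for dimension $\Omega(n^{1/(d-1)!}/\ell^e)$ with bias $\eps$, we conclude that $\SD(f(V'), \F_q) \le \eps$. It remains to compare $f(V)$ with $f(V')$. A direct computation of statistical distance between the uniform distributions on $V \subseteq V'$ yields
\[
\SD(U_V, U_{V'}) \;=\; \frac{|V' \setminus V|}{|V'|} \;\le\; \frac{\delta \cdot |V|}{|V|} \;=\; \delta,
\]
and since applying $f$ cannot increase statistical distance, $\SD(f(V), f(V')) \le \delta$. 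The triangle inequality then yields $\SD(f(V), \F_q) \le \SD(f(V), f(V')) + \SD(f(V'), \F_q) \le \delta + \eps$, as required.

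There is no real obstacle here; the one point that deserves care is the calibration of $\ell$ so that the expected ``error set'' $V' \setminus V$ is small \emph{relative to $|V|$} (rather than in absolute terms), which is why the density $\rho$ enters the definition of $\ell$ together with the approximation parameter $\delta$. Once $\ell = \log_q(1/(\rho\delta))$ is chosen, the probabilistic argument above delivers the desired approximating variety, and the rest is a clean application of Theorem~\ref{thm:from affine extractors to varieties for small t intro version} together with the triangle inequality.
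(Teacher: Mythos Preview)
Your proposal is correct and follows essentially the same route as the paper: approximate the given variety by the common zeros of $\ell$ random $\F_q$-linear combinations of the defining polynomials, apply Theorem~\ref{thm:from affine extractors to varieties for small t intro version} to the resulting small system, and finish with the triangle inequality. The only cosmetic difference is bookkeeping---the paper bounds $\Pr_{x}[x\in V'\setminus V]\le q^{-\ell}$ and then divides by $\rho$, whereas you fold $\rho$ into the expectation and bound $|V'\setminus V|\le \delta|V|$ directly; the statistical-distance computation $\SD(U_V,U_{V'})=|V'\setminus V|/|V'|\le \delta$ is the same in both.
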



\begin{proof}
Let $g_1, \ldots, g_t \colon \F_q^n \to \F_q$ be degree (at most) $d$ polynomials. First, we prove the existence of $\ell$ polynomials $h_1, \ldots, h_{\ell} \colon \F_q^n \to \F_q$, each of degree at most $d$, with a variety that approximates $\Variety(g_1, \ldots, g_t)$. More precisely, we will have \begin{equation}\label{eq:requirement from g}
\Variety(g_1, \ldots, g_t) \subseteq \Variety(h_1, \ldots, h_{\ell}) \quad \text{and} \quad \Pr_{x \sim \F_q^n}[x \in \Variety(h_1, \ldots, h_{\ell}) \setminus \Variety(g_1, \ldots, g_t) ] \le q^{-\ell},
\end{equation}
The proof of this claim follows by a standard argument, like the one that appears in \cite{Razborov87, Smolensky87}: Let $\alpha_1, \ldots, \alpha_\ell$ be random vectors, sampled uniformly and independently from $\F_q^t$. For each $i \in [\ell]$, define the (random) polynomial
$$
H_i(x) = \sum_{j=1}^{t}{(\alpha_i)_j \cdot g_j(x)},
$$
where the summation and multiplications are taken over $\F_q$.
Clearly, if $x \in \Variety(g_1, \ldots, g_t)$ then $H_i(x) = 0$ with probability $1$ (where the probability is taken over $\alpha_1, \ldots, \alpha_\ell$). Otherwise, for each $i \in [\ell]$, $\Pr\left[ H_i(x) = 0 \right] = 1/q$. By an averaging argument, one can fix $\alpha_1, \ldots, \alpha_\ell$ and obtain fixed polynomials $h_1, \ldots, h_\ell$, of degree at most $d $, that satisfy the conditions in Equation \eqref{eq:requirement from g}.

Since $f$ is an affine extractor with bias $\eps$ for dimension $\Omega(n^{1/(d-1)!} / \ell^e)$, Theorem~\ref{thm:from affine extractors to varieties for small t intro version} implies that $\SD(f(\Variety(h_1, \ldots, h_\ell)),\F_q) \le \eps$. To conclude the proof, we show that
\[
\SD(f(\Variety(h_1, \ldots, h_\ell)), f(\Variety(g_1, \ldots, g_t))) \le \delta.
\]
To see this, observe that $\Variety(h_1, \ldots, h_\ell)$ can be written as a convex combination
\[
\Variety(h_1, \ldots, h_\ell) = \frac{|\Variety(g_1, \ldots, g_t)|}{|\Variety(h_1, \ldots, h_\ell)|} \cdot \Variety(g_1, \ldots, g_t) + \left( 1 - \frac{|\Variety(g_1, \ldots, g_t)|}{|\Variety(h_1, \ldots, h_\ell)|}\right) \cdot \mathcal{E},
\]
where $\mathcal{E}$ is some random variable over $\F_q$. Thus, by Equation~\eqref{eq:requirement from g},
\[
\SD(\Variety(h_1, \ldots, h_\ell), \Variety(g_1, \ldots, g_t)) \le 1 - \frac{|\Variety(g_1, \ldots, g_t)|}{|\Variety(h_1, \ldots, h_\ell)|} \le \frac{q^{-\ell}}{\rho} = \delta.
\]
This implies that
\[
\SD(f(\Variety(h_1, \ldots, h_\ell)), f(\Variety(g_1, \ldots, g_t))) \le \delta,
\]
as claimed.

\end{proof}

\noindent
Next, we prove Theorem~\ref{thm:from affine disperser to varieties for small t} which gives an analog reduction from dispersers for varieties to affine dispersers.


\begin{proof}[Proof of Theorem~\ref{thm:from affine disperser to varieties for small t}]
Let $g_1,\ldots,g_t \colon \F_q^n \to \F_q$ be degree (at most) $d$ polynomials. Let $u_0 \in \Variety(g_1,\ldots,g_t)$ (if $\Variety(g_1,\ldots,g_t) = \emptyset$, there is nothing to prove). By Theorem~\ref{thm:structural result I for many polynomials}, there exists a subspace $U$ of dimension $\Omega(d\cdot(n/t)^{1/(d-1)})$ such that $u_0 + U \subseteq \Variety(g_1,\ldots,g_t)$. The proof then follows as $f$ is an affine disperser for dimension $\Omega(d\cdot(n/t)^{1/(d-1)})$.
\end{proof}

\remove{\noindent
In~\cite{S11} (Theorem 1.2), an explicit construction of an affine disperser $f \colon \F_2^n \to \F_2$ for dimension $2^{\log^{0.9}{n}}$ is given. Theorem~\ref{thm:explicit disperser intro version} readily follows by Theorem~\ref{thm:from affine disperser to varieties for small t} as indeed, one only needs to make sure that $(n/t)^{1/(d-1)} = \Omega(2^{\log^{0.9}{n}})$.
}

\section{From Affine Dispersers to Affine Extractors}\label{sec:reduction from extractors to dispersers}

To prove Theorem~\ref{thm:reduction from extractors to dispersers intro version}, we use the following theorem of Kaufman and Lovett \cite{KL08}.

\begin{theorem}[\cite{KL08}]\label{thm:KL}
Let $p$ be a prime number and let $f \colon \F_p^n \to \F_p$ be a degree (at most) $d$ polynomial with $\bias(f) \ge \delta$. Then, there exist $c = c(d, \delta)$
polynomials $f_1, \ldots, f_c$ of degree at most $d-1$ such that $f = G(f_1, \ldots, f_c)$, for some function $G: \F_p^c \to \F_p$. Moreover, $f_1, \ldots, f_c$ are derivatives of the form $\frac{\partial f }{\partial y}$ where  $y \in \F_p^n$.
\end{theorem}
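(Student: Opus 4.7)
The proof proceeds by induction on the degree $d$. For the base case $d = 1$, any non-constant degree-$1$ polynomial $\langle a,x\rangle + b$ over $\F_p$ has $\bias = 0$, so $\bias(f) \ge \delta > 0$ forces $f$ to be constant, and we take $c = 0$ with $G$ the constant value $f(0)$.

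For the inductive step, the main tool is the Cauchy--Schwarz derivative-bias inequality
\[
\bias(f)^{2} \;\le\; \E_{y}\, \bias(\partial_{y} f),
\]
obtained by expanding $\bias(f)^{2} = \E_{x,x'}\,\omega^{f(x)-f(x')}$ and substituting $y = x-x'$, with $\omega = e^{2\pi \mathbf{i}/p}$. This guarantees that at least a $\delta^{2}/2$ fraction of directions $y \in \F_{p}^{n}$ yield a derivative $\partial_y f$ of degree $\le d-1$ with bias $\ge \delta^{2}/2$, each of which the inductive hypothesis controls. I would then construct the desired decomposition iteratively: starting from the empty collection, at step $i$ I maintain derivatives $f_{1} = \partial_{y_{1}} f,\ldots,f_{i} = \partial_{y_{i}} f$ and test whether $f$ is already a function of them. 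If yes, stop and define $G$ via the constant value of $f$ on each fibre. If not, some fibre $F$ of the map $x \mapsto (f_{1}(x),\ldots,f_{i}(x))$ contains at least two distinct $f$-values, and applying the derivative-bias inequality \emph{conditionally on $F$} produces a new direction $y_{i+1}$ whose derivative $\partial_{y_{i+1}} f$ both has substantial bias on $F$ and strictly refines the partition.

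To bound the number of iterations by $c(d,\delta)$, I track the potential $\Phi_{i} = H(f \mid f_{1},\ldots,f_{i})$, the conditional Shannon entropy of $f$ given the derivatives chosen so far. The step above forces $\Phi_{i+1} \le \Phi_{i} - \eta(d,\delta)$ for some positive $\eta(d,\delta)$ obtained by combining the derivative-bias inequality with the inductive hypothesis applied to the degree-$(d{-}1)$ polynomials $\partial_{y_{j}} f$, whose structure on the fibre is needed to guarantee the entropy drop. Since $0 \le \Phi_{i} \le \log p$, termination occurs within $c(d,\delta) = O(\log p / \eta(d,\delta))$ steps, at which point $f$ is a function of the collected first-order derivatives, as desired.

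The main obstacle is making the conditional step rigorous. The derivative-bias inequality holds globally on $\F_{p}^{n}$, whereas the argument demands bias within fibres of the previously chosen derivatives. Transferring from global to conditional requires a regularity-lemma-style refinement of $\{f_{1},\ldots,f_{i}\}$ ensuring that each fibre is large and ``Fourier-generic'', together with a careful Fourier-analytic calculation showing that the entropy drop $\eta(d,\delta)$ is bounded below independently of $n$. This regularity step, together with the precise tracking of how $c(d,\delta)$ blows up through the induction on $d$, is the technical heart of the Kaufman--Lovett argument.
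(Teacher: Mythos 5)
Theorem~\ref{thm:KL} is not proved in this paper; it is quoted as a black box from Kaufman and Lovett \cite{KL08} (generalizing Green and Tao \cite{GT09}) and is only used inside the proof of Theorem~\ref{thm:reduction from extractors to dispersers intro version}, so there is no in-paper proof to compare against.

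Taken as a self-contained attempt, the Cauchy--Schwarz identity $\bias(f)^2 \le \E_y \bias(\partial_y f)$ is correct, the base case $d=1$ is correct, and the target decomposition $f = G(\partial_{y_1}f,\ldots,\partial_{y_c}f)$ is the right shape. But the gap you yourself flag is not a finishing detail --- it \emph{is} the theorem. The derivative-bias inequality is a global identity on $\F_p^n$; a fibre $F$ of $(f_1,\ldots,f_i)$ is an arbitrary variety, and ``$\partial_y(f|_F)$'' has no natural meaning (nothing forces $x+y\in F$ when $x\in F$, and the change of variables underlying the Cauchy--Schwarz computation breaks once you condition on $F$). So the ``conditional derivative-bias inequality'' driving your entropy drop is never established, and the step from ``some fibre contains two $f$-values'' to ``there is a direction whose derivative has substantial conditional bias'' does not follow from anything written. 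The fix you name --- refine $(f_1,\ldots,f_i)$ to a high-rank polynomial factor with large, equidistributed atoms, and prove a quantitative equidistribution theorem that yields a uniform entropy drop $\eta(d,\delta)$ --- is exactly the regularity machinery that makes the Green--Tao/Kaufman--Lovett proof go, and none of it is supplied here. The role of the induction on $d$ is also left vague: it is invoked ``to bound $\eta$'' via the structure of the $\partial_{y_j}f$ on a fibre, but nothing concrete is derived from it. Finally, the termination bound $c = O(\log p/\eta(d,\delta))$ that your potential argument produces carries a $\log p$ factor absent from the stated $c=c(d,\delta)$, so even granting the entropy drop the field-size dependence would need to be removed. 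As written, the proposal is an accurate map of the difficulties rather than a proof.
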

\noindent
\begin{proof}[Proof of Theorem~\ref{thm:reduction from extractors to dispersers intro version}]
We show by a counter-positive argument that if $f$ is not an affine extractor for dimension $k'$ with bias $\delta$, then $f$ is not an affine disperser for dimension $k$.
Let $f: \F_p^n \to \F_p$ be a function which is not an affine extractor for dimension $k'$ with bias $\delta$.
Then, there exists an affine subspace $u_0 + U$, with $\dim(U) = k'$ such that $\bias(f|_{u_0+U})>\delta$.
Let $u_1, \ldots, u_{k'}$ be a basis for $U$ and let $g : \F_p^{k'} \to \F_p$ be the function defined by $g(y_1, \ldots, y_{k'}) = f(u_0 + \sum_{i=1}^{k'}{u_i \cdot y_i})$. Then, $g$ is a $\delta$-biased polynomial of degree $\le d$.
Applying Theorem~\ref{thm:KL} to $g$, we can write it as $G(g_1, \ldots, g_c)$, where the $g_i$'s are of degree at most $d-1$, and $c = c(d, \delta)$ as defined in Theorem~\ref{thm:KL}.

By Theorem~\ref{thm:structural result I for many polynomials}, there is an affine subspace $W$ of $\F_p^{k'}$ with dimension $\ConstStruct \cdot (k'/c)^{1/(d-2)}$ for which all the $g_i$'s are constant, for some constant $\ConstStruct > 0$. In particular $g|_W$ is constant, which implies that there exists a subspace of  $\F_p^n$, with the same dimension, on which the original function $f$ is constant. Taking $k' = k^{d-2} \cdot \frac{c(d, \delta)}{\ConstStruct^{d-2}}$ completes the proof.
\end{proof}
\noindent
For degree $3$ and $4$, we rely on stronger results from~\cite{HS10}.
Although degree $3$ was treated in~\cite{BSK12}, we present it here for completeness.
\begin{theorem}\label{thm:degree 3 4}
Let $f \colon \F_p^n \to \F_p$ be an affine disperser for dimension $k$ of degree $d$.
If $d=3$ then $f$ is an affine extractor for dimension $k' = k+O(\log(1/\delta)^2)$ with bias $\delta$.
If $d=4$ then $f$ is an affine extractor for dimension $k' = k\cdot \poly(1/\delta)$ with bias $\delta$.
\end{theorem}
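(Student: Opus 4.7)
The plan is to mirror the contrapositive argument used in the proof of Theorem~\ref{thm:reduction from extractors to dispersers intro version}, but to replace the Kaufman--Lovett decomposition (Theorem~\ref{thm:KL}) with the sharper structural results of Haramaty and Shpilka~\cite{HS10} for biased cubics and quartics over prime fields. Suppose toward contradiction that $f$ is not an affine extractor for dimension $k'$ with bias $\delta$; then there is an affine subspace $u_0+U\subseteq \F_p^n$ of dimension $k'$ with $\bias(f|_{u_0+U})>\delta$. Choosing a basis $u_1,\ldots,u_{k'}$ of $U$ and writing $g(y)=f(u_0+\sum_i u_i y_i)$ yields a polynomial $g\colon \F_p^{k'}\to \F_p$ of degree at most $d$ with $\bias(g)>\delta$. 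It suffices to exhibit an affine subspace of $\F_p^{k'}$ of dimension strictly greater than $k$ on which $g$ is constant, since pulling it back through the affine-linear map $y\mapsto u_0+\sum u_i y_i$ gives an affine subspace of $u_0+U$ of the same dimension on which $f$ is constant, contradicting that $f$ is an affine disperser for dimension $k$.

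For $d=3$, I would invoke the HS10 structure theorem for biased cubics, which states that a cubic with bias at least $\delta$ over a prime field is (up to an arbitrary outer function) a polynomial in $s=O(\log^2(1/\delta))$ linear forms; equivalently, $g=G(L_1,\ldots,L_s)$ for some $G\colon\F_p^s\to\F_p$ and linear forms $L_i$. Fixing any tuple $(a_1,\ldots,a_s)$ in the image of $(L_1,\ldots,L_s)$ cuts out an affine subspace of $\F_p^{k'}$ of codimension at most $s$ on which $g$ is constant. Choosing $k'=k+1+c\log^2(1/\delta)$ for a suitable constant $c$ makes this dimension exceed $k$, yielding the contradiction.

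For $d=4$, I would analogously apply the HS10 structure theorem for biased quartics, which expresses such a polynomial as $g=G(q_1,\ldots,q_c)$ where each $q_i$ has degree at most $2$ and $c=\poly(1/\delta)$. Feeding the quadratics $q_1,\ldots,q_c$ into Theorem~\ref{thm:structural result I for many polynomials} (with $d=2$ on $k'$ variables) yields an affine subspace of dimension $\Omega(k'/c)$ on which every $q_i$, and therefore $g$, is constant. Taking $k'=k\cdot\poly(1/\delta)$, where the implicit $\poly$ absorbs both $c$ and the constant from Theorem~\ref{thm:structural result I for many polynomials}, makes this dimension exceed $k$ and completes the reduction.

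The main obstacle is extracting the precise quantitative form of the HS10 results, most notably the $O(\log^2(1/\delta))$ bound on the effective linear complexity of a biased cubic and the $\poly(1/\delta)$ bound on the number of quadratics needed for a biased quartic. The cubic bound presumably arises as a product of two logarithmic factors (number of quadratics in the decomposition times the rank of each), and some care is needed to ensure that the rank-reducing restrictions for the various quadratics can be performed simultaneously inside a common subspace of the claimed codimension. Once these ingredients are in place, the remainder of the argument is a direct composition with Theorem~\ref{thm:structural result I for many polynomials}, exactly as in the proof of Theorem~\ref{thm:reduction from extractors to dispersers intro version}.
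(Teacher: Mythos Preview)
Your contrapositive setup is correct and matches the paper, but the forms of \cite{HS10} you invoke are stronger than what is actually available, and your one-shot arguments depend on them essentially. For $d=3$, \cite{HS10} does \emph{not} say that a $\delta$-biased cubic is a function of $O(\log^2(1/\delta))$ linear forms; e.g.\ $g(x)=x_1\cdot(x_2x_3+\cdots+x_{n-1}x_n)$ over $\F_2$ has bias $1/2$ but is not a function of $o(n)$ linear forms (its restriction to $\{x_1=1\}$ is a full-rank quadratic). What the paper extracts from \cite{HS10} is only $g=\sum_{i=1}^r \ell_i\, q_i + q_0$ with $r=O(\log^2(1/\delta))$, $\ell_i$ linear and $q_0,q_i$ quadratic. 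Restricting to $W=\{\ell_i=0\}$ merely drops $g$ to the quadratic $q_0|_W$, whose rank is uncontrolled---so your proposed accounting ``number of quadratics times rank of each'' does not go through, since the quadratics in the decomposition carry no a priori rank bound. The missing ingredient is an \emph{averaging step}: since $\bias(g)\ge\delta$, some coset $w+W$ has $\bias(g|_{w+W})\ge\delta$; on that coset the restriction is a $\delta$-biased quadratic, hence of rank $O(\log(1/\delta))$ by Dickson, and a further codimension-$O(\log(1/\delta))$ restriction makes it constant.

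For $d=4$ the gap is parallel: the form the paper cites from \cite{HS10} is $g=\sum_i \ell_i\, g_i + \sum_i q_i\, q'_i + g_0$ with the $g_i$ cubic and $r=\poly(1/\delta)$, \emph{not} a function of $\poly(1/\delta)$ quadratics (e.g.\ $x_1\cdot h$ for a high-rank cubic $h$ is a $\tfrac12$-biased quartic that admits no such representation). The paper applies Theorem~\ref{thm:structural result I for many polynomials} only to the degree-$\le 2$ pieces $\ell_i,q_i,q'_i$; via Claim~\ref{claim:partition to lower degree} this drops $g$ to a cubic on every coset of the resulting subspace, and one then averages to a $\delta$-biased coset and recurses into the cubic case just handled. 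In both degrees the actual argument is ``reduce degree on a subspace, average to a biased coset, then finish,'' rather than the single ``make constant'' step you propose.
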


\begin{proof}

As in the proof of Theorem~\ref{thm:reduction from extractors to dispersers intro version}, it is enough to show that if $g$ is a degree $3$ or $4$ polynomial over $\F_p$ with $k'$ variables and bias $\ge \delta$ then there exists a subspace of dimension $k$ on which $g$ is constant.
We consider the two cases $\deg(f)=3,4$ separately.

\paragraph{Cubic ($\deg(g)=3$).}
Implicit in \cite{HS10}, any polynomial of degree $3$ with bias $\ge\delta$, in particular $g$, can be represented as
\[
\sum_{i=1}^{r}{\ell_i(x) \cdot q_i(x)} + q_0(x),\]
where the $\ell_i$'s are linearly independent linear functions (with no constant term), $\deg(q_i) \le 2$ and $r = O(\log^2(1/\delta))$.
Restricting to the subspace $W$ defined by $\{x: \ell_i(x)=0\}$ reduces the degree of $g$ to at most $2$, and by Claim~\ref{claim:partition to lower degree}, this is also true for any coset of this subspace. By averaging, there is a coset on which $\bias(g|_{w+W}) \ge \delta$. By Dickson's theorem \cite{D01}, there is an affine subspace $w' + W'$ of $w+W$ of co-dimension $O(\log(1/\delta))$ on which $g$ is constant. Setting $k' = k + O(\log^2(1/\delta))$ ensures that $\dim(W')$ is at least $k$.

\paragraph{Quartic, ($\deg(g)=4$).}
Theorem~4 in \cite{HS10} states that any polynomial of degree $4$ with bias $\ge\delta$, in particular $g$, can be represented as
\[
\sum_{i=1}^{r}{\ell_i(x) \cdot g_i(x)} + \sum_{i=1}^{r}{q_i(x)\cdot q'_i(x)} + g_0(x),
\]
where $\deg(\ell_i)\le 1, \deg(q_i)\le 2, \deg(q'_i) \le 2, \deg(g_i) \le 3$ and $r = \poly(1/\delta)$.
By Theorem~\ref{thm:structural result I for many polynomials}, there exists a subspace $W$ of dimension $\Omega(n/r)$ on which all $\ell_i$'s, $q_i$'s and $q'_i$'s are constants. By Claim~\ref{claim:partition to lower degree}, in any coset of $W$ the degrees of $\ell_i$, $q_i$ and $q'_i$ for $i=1,\ldots,r$ are decreased by at least 1, hence $g|_{w+W}$ is of degree at most $3$ for any coset $w+W$.
Since $\bias(g) \ge \delta$, by averaging there is a coset on which $\bias(g|_{w+W}) \ge \delta$. Using the earlier case of biased cubic polynomials, there is an affine subspace $w' + W'$ of dimension $\Omega(n/r) - O(\log^2(1/\delta))$ on which $g$ is constant.
Setting $k' = k\cdot \poly(1/\delta)$ ensures that the dimension of $W'$ is at least $k$.
\end{proof}

\paragraph{Remark:}
It may be tempting to think that the polynomial loss of parameters in our reduction from affine extractors to affine dispersers, $k' = O_{\delta,d}(k^{d-2})$,  is not necessary. Indeed, Theorem~\ref{thm:degree 3 4} shows that for degree $3$ and $4$ one can take the dimension $k'$ of the affine extractor (for a constant error, say) to be linear in $k$ -- the dimension of the affine disperser. However, this linear dependency breaks for $d \ge 6$, as pointed up to us by Shachar Lovett. To see this, take $f: \F_2^n \to \F_2$ to be the product of two random degree $3$ polynomials. It is easy to check that, with high probability, $f$ is an affine disperser for dimension $\Theta(\sqrt{n})$, whereas $\Pr[f = 1] = 1/4 + o(1)$. Namely, $f$ is not even an $(n,n)$ affine extractor.

Nonetheless, a better polynomial dependency may still be possible. Perhaps $k' = O_{\delta,d}(k^{(d-2)/2})$ (which is not ruled out by similar counterexamples). 
\section{$\ACZEROPARITY$ Circuits and Affine Extractors / Dispersers}

In Section~\ref{sec:aczero and disperses} we (easily) derive lower bounds on the dimension for which an $\ACZERO$ circuit can be affine disperser.
In Section~\ref{sec:depth2 cannot be affine dispersers} we prove that a depth $2$ $\ACZEROPARITY$ circuit on $n$ inputs cannot compute an affine disperser for dimension $n^{o(1)}$. We do so by a reduction to Theorem \ref{thm:structural result}.
%

\subsection{$\ACZERO$ Circuits Cannot Compute Affine Dispersers for Dimension $o(n/\polylog(n))$}\label{sec:aczero and disperses}
The next lemma, following H{\aa}stad's work~\cite{Hastad86}, appears in \cite{BoppanaS90}.
\begin{lemma}[\cite{BoppanaS90}, Corollary 3.7, restated]\label{lemma:BP}
Let $f \colon \F_2^n \to \F_2$ be a function computable by a depth $d$ and size $s$ Boolean circuit. Then, there is a restriction $\rho$ leaving $\frac{n}{10(10\log(s))^{d-2}} \;-\; \log(s)$ variables alive, under which $f|_{\rho}$ is constant.
\end{lemma}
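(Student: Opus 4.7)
The plan is to follow the standard Håstad-style switching lemma argument. I would apply a sequence of $d-2$ independent random restrictions $\rho_1, \ldots, \rho_{d-2}$ to the circuit, where each $\rho_i$ keeps every still-alive variable with probability $p = 1/(10 \log s)$ (and otherwise fixes it to a uniform random bit). After all restrictions, each variable survives independently with probability $p^{d-2} = 1/(10\log s)^{d-2}$, so the expected number of surviving variables is $n/(10\log s)^{d-2}$. A standard Chernoff bound then ensures that with probability at least, say, $3/4$, the number of surviving variables is at least $n/\bigl(10 \cdot (10\log s)^{d-2}\bigr)$, which accounts for the outer factor of $10$ in the bound.

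The heart of the proof is showing that after these $d-2$ restrictions, the resulting circuit collapses, with positive probability, to a depth-$2$ circuit whose bottom fan-in is at most $\log s$. To do this I would invoke Håstad's switching lemma in its standard quantitative form: a CNF (or DNF) of bottom fan-in $w$, restricted by a $p$-random restriction, can be rewritten as a DNF (respectively CNF) of bottom fan-in at most $t$ except with probability at most $(5pw)^t$. Taking $t = \log s$ and $p = 1/(10\log s)$ makes the failure probability at most $(1/2)^{\log s} = 1/s$. Applying this locally to each of the at most $s$ depth-$2$ subcircuits sitting at the bottom of the circuit, and union bounding over all at most $s$ subcircuits and over the $d-2$ restriction steps, the total failure probability is at most $(d-2)/1 < 1$ provided one shrinks constants slightly; hence with positive probability every subcircuit's bottom two layers get swapped and merged with the layer above, reducing the circuit's depth by one at each step while preserving the bottom fan-in bound $\log s$.

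After $d-2$ such steps, with positive probability over the restrictions, $f$ has become a depth-$2$ circuit $C'$ of bottom fan-in at most $\log s$ on at least $n/\bigl(10\cdot(10\log s)^{d-2}\bigr)$ live variables. To make $C'$ constant, I would set at most $\log s$ additional variables by picking any single bottom gate (a conjunction or disjunction of at most $\log s$ literals) and assigning its literals so that gate evaluates to a constant that determines the top gate (e.g.\ if the top gate is $\OR$, set one bottom $\AND$ to $1$; if $\AND$, set one bottom $\OR$ to $0$). This costs $\log s$ of the surviving variables, giving the stated bound of $\tfrac{n}{10(10\log s)^{d-2}} - \log s$ variables left alive under the final composed restriction $\rho$.

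The main obstacle is bookkeeping rather than any single hard step: one must carefully set up the switching lemma so that after a switch, the new bottom fan-in still equals $\log s$ (so that the next iteration goes through with the same parameters), ensure that the union bound over subcircuits across all $d-2$ stages still leaves positive probability of total success, and combine this with the Chernoff concentration on the number of surviving variables. Since the statement is explicitly a restatement of a known corollary, I would simply cite \cite{Hastad86,BoppanaS90} for the switching lemma and run the iterative scheme above.
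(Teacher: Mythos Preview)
The paper does not supply its own proof of this lemma: it is quoted from \cite{BoppanaS90} (following H{\aa}stad~\cite{Hastad86}) and used as a black box to deduce Corollary~\ref{cor:ACZERO}. Your outline is precisely the standard switching-lemma argument that underlies that citation, so you are reproducing rather than departing from the intended proof.

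The two loose ends you flag are real but routine. First, your invocation of the switching lemma with $w=\log s$ presupposes the bottom fan-in is already at most $\log s$; for the very first round this is not given, and one handles it by viewing each bottom gate as a width-$1$ DNF/CNF (so the first application uses $w=1$), exactly as in \cite{Hastad86,BoppanaS90}. Second, with per-gate failure probability $1/s$ the union bound over $s$ gates and $d-2$ rounds does not literally drop below $1$; taking $p$ a constant factor smaller (or $t$ a constant larger than $\log s$) makes each round succeed with probability $\ge 9/10$, which is how the constants in \cite{BoppanaS90} are arranged. Neither point changes the shape of the argument.
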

\noindent
Lemma~\ref{lemma:BP} readily implies the following corollary.
\begin{corollary}\label{cor:ACZERO}
Let $f \colon \F_2^n \to \F_2$ be a function computable by a Boolean circuit of depth $d$ and size $s$. Then, $f$ cannot be a bit fixing disperser (and, in particular, $f$ cannot be an affine disperser) for min-entropy $k < \frac{n}{10(10\log(s))^{d-2}} - \log(s)$.
\end{corollary}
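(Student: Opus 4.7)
\medskip
\noindent\textbf{Proof plan for Corollary~\ref{cor:ACZERO}.}
The plan is to derive the corollary as a direct consequence of Lemma~\ref{lemma:BP}, combined with the elementary observation that bit-fixing sources are a special case of affine sources. Recall that a bit-fixing disperser for min-entropy $k$ is a function whose restriction to every subcube of dimension at least $k$ is non-constant; since every such subcube is, in particular, an affine subspace of dimension at least $k$, any affine disperser for dimension $k$ is automatically a bit-fixing disperser for min-entropy $k$. Thus it suffices to prove that under the stated size/depth/min-entropy regime, $f$ fails to be a bit-fixing disperser.

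To this end, I would apply Lemma~\ref{lemma:BP} to $f$ to obtain a restriction $\rho$ that leaves
\[
m \;=\; \frac{n}{10(10\log(s))^{d-2}} \;-\; \log(s)
\]
variables alive while making $f|_\rho$ a constant function. Geometrically, this restriction corresponds to fixing $n-m$ coordinates of the input to specified bit values, so $f$ is constant on the resulting subcube $C \subseteq \F_2^n$ of dimension $m$. Any subcube of $C$ of dimension $k$ (with $k \le m$) is again a subcube of $\F_2^n$ on which $f$ is constant.

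Now I would close the argument by contraposition: assume $f$ is a bit-fixing disperser for min-entropy $k$. Then no subcube of dimension $\ge k$ can be mapped to a constant by $f$. But by the preceding paragraph, whenever $k < m$, we do obtain such a constant subcube (of dimension $m > k$), giving a contradiction. Hence such an $f$ cannot exist when $k < \frac{n}{10(10\log(s))^{d-2}} - \log(s)$, which is precisely the conclusion of the corollary. The ``in particular'' clause for affine dispersers then follows from the containment of bit-fixing sources in affine sources noted above. I do not anticipate any genuine obstacle here, since the heavy lifting (the switching-lemma based random restriction argument behind Lemma~\ref{lemma:BP}) has already been done; the only care needed is to verify that the inequality on $k$ in the corollary is the strict-inequality version of the alive-variable count guaranteed by the lemma.
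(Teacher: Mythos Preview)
Your proposal is correct and matches the paper's approach: the paper simply states that Lemma~\ref{lemma:BP} ``readily implies'' the corollary, and your argument spells out precisely this immediate deduction (the restriction from Lemma~\ref{lemma:BP} exhibits a subcube of dimension $m$ on which $f$ is constant, so $f$ is not a bit-fixing disperser for any $k<m$, and hence not an affine disperser either).
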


\subsection{Depth 2 $\ACZEROPARITY$ Circuits Cannot Compute Good Affine Dispersers}\label{sec:depth2 cannot be affine dispersers}

As mentioned in the introduction, to prove Theorem~\ref{thm:depth 2 are not dispersers intro version}, one only needs to prove Lemma~\ref{lemma:reduction from depth 2 to polynomials intro version}.

\begin{proof}[Proof of Lemma~\ref{lemma:reduction from depth 2 to polynomials intro version}]
During the proof we will exploit the fact that if a function $f$ on $n$ inputs is an affine disperser for dimension $k$, then fixing the values of $m$ inputs or even the values of $m$ linear functions on the inputs, one gets an affine disperser on $n-m$ inputs for the same dimension $k$.

We assume that the top gate is an $\XOR$ gate. Afterwards we justify this assumption by showing that if the top gate is not an $\XOR$ gate, then the circuit $C$ could not have computed an affine disperser with the claimed parameters to begin with.

Note that one might as well assume that there are no $\XOR$ gates at the bottom level. Indeed, assume there are $t$ $\XOR$ gates at the bottom level, and denote by $\ell_1,\ldots,\ell_t$ the linear functions computed by these gates, respectively. Define the linear function $\ell = \ell_1 \oplus \cdots \oplus \ell_t$. Note that if $\ell$ is the constant $1$ then by removing all the $t$ gates from $C$ and wiring the constant $1$ as an input to the top gate, one gets an equivalent circuit with no $\XOR$ gates at the bottom layer. Assume therefore that $\ell$ is not the constant $1$. Then, by removing all the $\XOR$ gates at the bottom layer, we get a circuit, with no $\XOR$ gates at the bottom layer, that is equivalent to the original circuit on the affine subspace $\{ x : \ell(x) = 0 \}$. Hence, the resulting circuit is an affine disperser on $n-1$ inputs for dimension $k$.

We perform a random restriction to all variables, leaving a variable alive with probability $p = \frac{1}{4\sqrt{n}}$ and otherwise setting the value of a variable uniformly and independently at random.
We show that the restriction shrinks all $\OR,\AND$ gates to have fan-in smaller than $2c$ with positive probability. We consider $\AND$ gates, but our arguments may be carried to $\OR$ gates similarly.
The restriction shrinks every $\AND$ gate in the following way: if one of the literals which is an input to the $\AND$ gate is false under the restriction, the $\AND$ gate is eliminated. Otherwise, the $\AND$ gate shrinks to be the $\AND$ of all the remaining live variables. We wish to bound the probability that each $\AND$ gate is of fan-in greater than $2c$ after the restriction. Let $m$ be the fan-in of the $\AND$ gate before the restriction, and $m'$ its fan-in afterwards.
We have
\[
\Pr[ m' \ge 2c ] = \sum_{i=2c}^{m} { \binom{m}{i} \cdot p^{i} \cdot \(\frac{1-p}{2}\)^{m-i}}
\le \sum_{i=2c}^{m} { \binom{m}{i} \cdot p^{i} \cdot (1/2)^{m-i}}
= (1/2)^m \cdot \sum_{i=2c}^{m} { \binom{m}{i} \cdot (2p)^{i}} \;.
\]
Since $2p$ is smaller than $1$, the right hand side of the above inequality is at most $(1/2)^m \cdot 2^{m} \cdot (2p)^{2c} = (2p)^{2c}$. Thus, $\Pr[ m' \ge 2c] \le (2p)^{2c}$.
By our choice of parameter $p$, this is at most $1/(4n)^c$. By union bound over all $\le n^c$ $\AND$ and $\OR$ gates, with probability at least $1-1/4^c \ge 3/4$ over the random restrictions, the fan-in of all $\AND$ and $\OR$ gates, under the restriction, is smaller than $2c$.
Furthermore, by Chernoff bound, with probability greater than $1/2$ over the random restrictions, the number of surviving variables is at least $\sqrt{n}/5$. Therefore, there exists a restriction where the number of surviving variables is $\sqrt{n}/5$ and all $\AND$ and $\OR$ gates in the resulting circuit, under the restriction, have fan-in smaller than $2c$.
Expressing the resulting circuit as a polynomial over $\F_2$ we get a polynomial on at least $\sqrt{n}/5$ variables with degree at most $2c$ which is an affine disperser for dimension $k$.

We are left to justify the assumption that the top gate must be an $\XOR$ gate. For contradiction, assume that the top gate is an $\OR$ gate. The case where the top gate is an $\AND$ gate is handled similarly. If there is an $\XOR$ gate at the bottom layer of $C$, we choose such gate and consider the affine subspace of co-dimension $1$ on which this $\XOR$ gate outputs $1$. Since the top gate is an $\OR$ gate, the circuit $C$ is the constant $1$ on an affine subspace of co-dimension $1$. This stands in contradiction as $k$ is (much) smaller than $n-1$. Thus, we obtain a depth 2 $\ACZERO$ circuit with size $s = n^c$. However, under the assumption that $k < n/10-\log(s)$ this is a contradiction to Corollary~\ref{cor:ACZERO}.

\end{proof}

\section*{Acknowledgement}
We wish to thank our advisor Ran Raz for many helpful discussions and for his encouragement. We thank Chaim Even Zohar, Elad Haramaty, Noam Lifshitz and Amir Shpilka for helpful discussions regarding this work. We thank the user goes by the name david from stack exchange for pointing out \cite{BHL09}. We thank the anonymous referees for pointing out \cite{TB98} and for many helpful comments.

\bibliographystyle{alpha}
\bibliography{bibliography}

\newcommand{\etalchar}[1]{$^{#1}$}
\begin{thebibliography}{TWXZ13}

\bibitem[AB09]{AB09}
Sanjeev Arora and Boaz Barak.
\newblock {\em Computational Complexity - A Modern Approach}.
\newblock Cambridge University Press, 2009.

\bibitem[BEHL09]{BHL09}
I.~Ben-Eliezer, R.~Hod, and S.~Lovett.
\newblock Random low degree polynomials are hard to approximate.
\newblock In {\em Approximation, Randomization, and Combinatorial Optimization.
  Algorithms and Techniques}, pages 366--377. Springer, 2009.

\bibitem[BIW06]{BIW06}
B.~Barak, R.~Impagliazzo, and A.~Wigderson.
\newblock Extracting randomness using few independent sources.
\newblock {\em SIAM Journal on Computing}, 36(4):1095--1118, 2006.

\bibitem[BKS{\etalchar{+}}10]{BKSSZ10}
A.~Bhattacharyya, S.~Kopparty, G.~Schoenebeck, M.~Sudan, and D.~Zuckerman.
\newblock Optimal testing of reed-muller codes.
\newblock In {\em Foundations of Computer Science (FOCS), 2010 51st Annual IEEE
  Symposium on}, pages 488--497. IEEE, 2010.

\bibitem[Bou07]{B07}
J.~Bourgain.
\newblock On the construction of affine extractors.
\newblock {\em GAFA Geometric And Functional Analysis}, 17(1):33--57, 2007.

\bibitem[BS90]{BoppanaS90}
R.~B. Boppana and M.~Sipser.
\newblock The complexity of finite functions.
\newblock In {\em Handbook of Theoretical Computer Science, Volume A:
  Algorithms and Complexity (A)}, pages 757--804. 1990.

\bibitem[BSG12]{BSG12}
E.~Ben-Sasson and A.~Gabizon.
\newblock Extractors for polynomials sources over constant-size fields of small
  characteristic.
\newblock In {\em Approximation, Randomization, and Combinatorial Optimization.
  Algorithms and Techniques}, pages 399--410. Springer, 2012.

\bibitem[BSK12]{BSK12}
E.~Ben-Sasson and S.~Kopparty.
\newblock Affine dispersers from subspace polynomials.
\newblock {\em SIAM Journal on Computing}, 41(4):880--914, 2012.

\bibitem[BSZ11]{BZ11}
E.~Ben-Sasson and N.~Zewi.
\newblock From affine to two-source extractors via approximate duality.
\newblock In {\em Proceedings of the 43rd annual ACM symposium on Theory of
  computing}, pages 177--186. ACM, 2011.

\bibitem[CG88]{CG88}
B.~Chor and O.~Goldreich.
\newblock Unbiased bits from sources of weak randomness and probabilistic
  communication complexity.
\newblock {\em SIAM Journal on Computing}, 17(2):230--261, 1988.

\bibitem[DG10]{DG10}
M.~DeVos and A.~Gabizon.
\newblock Simple affine extractors using dimension expansion.
\newblock In {\em Computational Complexity (CCC), 2010 IEEE 25th Annual
  Conference on}, pages 50--57. IEEE, 2010.

\bibitem[Dic01]{D01}
L.~E. Dickson.
\newblock {\em Linear groups with an exposition of the Galois field theory}.
\newblock B.G Teubner's Sammlung von Lehrbuchern auf dem Gebiete der
  mathematischen Wissenschaften mit Einschluss ihrer Anwendungen. B.G. Teubner,
  1901.

\bibitem[DL78]{DL78}
R.~A. DeMillo and R.~J. Lipton.
\newblock A probabilistic remark on algebraic program testing.
\newblock {\em Inf. Process. Lett.}, 7(4):193--195, 1978.

\bibitem[Dvi12]{Dvir12}
Z.~Dvir.
\newblock Extractors for varieties.
\newblock {\em computational complexity}, 21(4):515--572, 2012.

\bibitem[GR08]{GR08}
A.~Gabizon and R.~Raz.
\newblock Deterministic extractors for affine sources over large fields.
\newblock {\em Combinatorica}, 28(4):415--440, 2008.

\bibitem[GT09]{GT09}
B.~Green and T.~Tao.
\newblock The distribution of polynomials over finite fields, with applications
  to the gowers norms.
\newblock {\em Contributions to Discrete Mathematics}, 4(2), 2009.

\bibitem[H{\aa}s86]{Hastad86}
J.~H{\aa}stad.
\newblock Almost optimal lower bounds for small depth circuits.
\newblock In {\em STOC}, pages 6--20, 1986.

\bibitem[H{\aa}s98]{H98}
J.~H{\aa}stad.
\newblock The shrinkage exponent of de {M}organ formulas is 2.
\newblock {\em SIAM Journal on Computing}, 27(1):48--64, 1998.

\bibitem[HR14]{HR14}
P.~Hrube\v{s} and A.~Rao.
\newblock Circuits with medium fan-in.
\newblock {\em Electronic Colloquium on Computational Complexity (ECCC)}, 20,
  2014.

\bibitem[HS10]{HS10}
E.~Haramaty and A.~Shpilka.
\newblock On the structure of cubic and quartic polynomials.
\newblock In {\em Proceedings of the 42nd ACM symposium on Theory of
  computing}, pages 331--340. ACM, 2010.

\bibitem[Juk12]{J12}
S.~Jukna.
\newblock {\em Boolean function complexity: advances and frontiers}, volume~27.
\newblock Springerverlag Berlin Heidelberg, 2012.

\bibitem[KL08]{KL08}
T.~Kaufman and S.~Lovett.
\newblock Worst case to average case reductions for polynomials.
\newblock In {\em Foundations of Computer Science (FOCS), 2008 49th Annual IEEE
  Symposium on}, pages 166--175. IEEE, 2008.

\bibitem[Li11]{Li11}
X.~Li.
\newblock A new approach to affine extractors and dispersers.
\newblock In {\em Computational Complexity (CCC), 2011 IEEE 26th Annual
  Conference on}, pages 137--147. IEEE, 2011.

\bibitem[Raz87]{Razborov87}
A.~Razborov.
\newblock Lower bounds on the size of bounded depth networks over a complete
  basis with logical addition ({R}ussian).
\newblock {\em Matematicheskie Zametki}, 41(4):598--607, 1987.

\bibitem[Raz88]{Razborov88}
A.~Razborov.
\newblock Bounded-depth formulas over $\{\wedge, \oplus\}$ and some
  combinatorial problems.
\newblock {\em Complexity of Algorithms and Applied Mathematical Logic (in
  {R}ussian). Ser. Voprosy Kibernetiky (Problems in Cybernetics), S. I. Adian,
  Ed., Moscow}, pages 149--166, 1988.

\bibitem[RR94]{RR94}
A.~Razborov and S.~Rudich.
\newblock Natural proofs.
\newblock In {\em Proceedings of the twenty-sixth annual ACM symposium on
  Theory of computing}, pages 204--213. ACM, 1994.

\bibitem[Sav95]{Savicky95}
P.~Savick{\`y}.
\newblock Improved {B}oolean formulas for the {R}amsey graphs.
\newblock {\em Random Structures \& Algorithms}, 6(4):407--415, 1995.

\bibitem[Sch80]{Schwartz80}
J.~T. Schwartz.
\newblock Fast probabilistic algorithms for verification of polynomial
  identities.
\newblock {\em J. ACM}, 27(4):701--717, 1980.

\bibitem[Sha11]{S11}
R.~Shaltiel.
\newblock Dispersers for affine sources with sub-polynomial entropy.
\newblock In {\em Foundations of Computer Science (FOCS), 2011 IEEE 52nd Annual
  Symposium on}, pages 247--256. IEEE, 2011.

\bibitem[Smo87]{Smolensky87}
R.~Smolensky.
\newblock Algebraic methods in the theory of lower bounds for {B}oolean circuit
  complexity.
\newblock In {\em Proceedings of the nineteenth annual ACM symposium on Theory
  of computing}, STOC '87, pages 77--82, New York, NY, USA, 1987. ACM.

\bibitem[TB98]{TB98}
G.~Tardos and D.~A.~M. Barrington.
\newblock A lower bound on the mod 6 degree of the or function.
\newblock {\em Computational Complexity}, 7(2):99--108, 1998.

\bibitem[Tre06]{T06}
L.~Trevisan, 2006.
\newblock
  \url{http://in-theory.blogspot.co.il/2006/06/polynomials-and-subspaces.html}.

\bibitem[TWXZ13]{TWXZ13}
H.~Y. Tsang, C.~H. Wong, N.~Xie, and S.~Zhang.
\newblock Fourier sparsity, spectral norm, and the log-rank conjecture.
\newblock {\em arXiv preprint arXiv:1304.1245}, 2013.

\bibitem[Val77]{Val77}
L.~G. Valiant.
\newblock Graph-theoretic arguments in low-level complexity.
\newblock In Jozef Gruska, editor, {\em MFCS}, volume~53 of {\em Lecture Notes
  in Computer Science}, pages 162--176. Springer, 1977.

\bibitem[Vin11]{Vin11}
L.~A. Vinh.
\newblock The szemer{\'e}di--trotter type theorem and the sum-product estimate
  in finite fields.
\newblock {\em European Journal of Combinatorics}, 32(8):1177--1181, 2011.

\bibitem[Vio09]{V09}
E.~Viola.
\newblock Guest column: correlation bounds for polynomials over $\{$0,1$\}$.
\newblock {\em ACM SIGACT News}, 40(1):27--44, 2009.

\bibitem[VW07]{VW07}
E.~Viola and A.~Wigderson.
\newblock Norms, {XOR} lemmas, and lower bounds for {GF}(2) polynomials and
  multiparty protocols.
\newblock In {\em Computational Complexity, 2007. CCC'07. Twenty-Second Annual
  IEEE Conference on}, pages 141--154. IEEE, 2007.

\bibitem[Yeh11]{Y11}
A.~Yehudayoff.
\newblock Affine extractors over prime fields.
\newblock {\em Combinatorica}, 31(2):245--256, 2011.

\bibitem[Zip79]{Zippel79}
R.~Zippel.
\newblock Probabilistic algorithms for sparse polynomials.
\newblock In Edward~W. Ng, editor, {\em EUROSAM}, volume~72 of {\em Lecture
  Notes in Computer Science}, pages 216--226. Springer, 1979.

\end{thebibliography}

\appendix
\section{Depth 3 $\ACZEROPARITY$ Circuits Can Compute Optimal Affine Extractors}\label{sec:depth3 can be affine dispersers}
We start this section by giving a proof for the following folklore claim. We bother doing so because afterwards we argue that the proof implies, in fact, something stronger, which we make use of.
\begin{claim}\label{claim:affine extractors}
There exist universal constants $n_0, c$ such that the following holds. For every $\eps > 0$ and $n > n_0$ there exists an affine extractor for dimension $k$ with bias $\eps$, $f : \F_2^n \to \F_2$, where $k = \log{\frac{n}{\eps^2}} + \log{\log{\frac{n}{\eps^2}}} + c$.
\end{claim}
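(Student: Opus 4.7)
The plan is to use the probabilistic method: sample $f : \F_2^n \to \F_2$ uniformly at random from all $2^{2^n}$ functions and show that with positive (in fact, overwhelming) probability $f$ is an affine extractor for dimension $k$ with bias $\eps$, provided $k \ge \log(n/\eps^2) + \log\log(n/\eps^2) + c$ for a suitable absolute constant $c$.

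Concretely, I would proceed in three steps. First, fix an affine subspace $S = u_0 + U \subseteq \F_2^n$ of dimension $k$. For a uniformly random $f$, the values $\{f(x)\}_{x \in S}$ are $2^k$ independent unbiased $\F_2$-valued random variables, so $\sum_{x \in S}(-1)^{f(x)}$ is a sum of $2^k$ i.i.d.\ Rademacher random variables. By Hoeffding's inequality,
\[
\Pr\!\left[\bias(f|_S) > \eps\right] \;=\; \Pr\!\left[\left|\sum_{x \in S}(-1)^{f(x)}\right| > \eps \cdot 2^k\right] \;\le\; 2 \exp\!\left(-\eps^2 \cdot 2^{k-1}\right).
\]
Second, I would bound the number of dimension-$k$ affine subspaces of $\F_2^n$ crudely by $2^n \cdot 2^{kn} = 2^{(k+1)n}$ (choose an offset and an ordered list of $k$ basis vectors; this vastly over-counts, but the slack is negligible).

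Third, I would union bound: the probability that $f$ fails to be an extractor at some such $S$ is at most $2^{(k+1)n+1} \cdot \exp(-\eps^2 \cdot 2^{k-1})$, which is less than $1$ provided $2^{k-1} > \tfrac{2\ln 2}{\eps^2}\big((k+1)n + 1\big)$, i.e.,
\[
k \;\ge\; \log\!\left(\frac{n}{\eps^2}\right) + \log(k+1) + O(1).
\]
Since any $k$ meeting the target bound already satisfies $k = \Theta(\log(n/\eps^2))$, we have $\log(k+1) \le \log\log(n/\eps^2) + O(1)$, so the desired choice $k = \log(n/\eps^2) + \log\log(n/\eps^2) + c$ works for a sufficiently large absolute constant $c$ and for all $n > n_0$.

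There is no real obstacle here; the argument is routine probabilistic method plus Hoeffding. The only subtlety worth flagging is to keep careful track of the lower-order $\log\log(n/\eps^2)$ term, which is exactly what absorbs the $\log(k+1)$ factor produced by the union bound over bases. As the authors note, the proof in fact yields more than it states: since the failure probability is exponentially small, an overwhelming fraction of functions $f$ enjoy the extractor property, and the same Hoeffding/union-bound scheme will go through verbatim when $f$ is drawn from any sufficiently rich distribution (e.g.\ a random multilinear polynomial of appropriate degree), which is the strengthening used later in the appendix.
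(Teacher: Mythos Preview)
Your proposal is correct and follows essentially the same argument as the paper: random $f$, Hoeffding on each fixed $k$-dimensional affine subspace giving $2\exp(-\eps^2 2^{k-1})$, union bound over at most $2^{(k+1)n}$ such subspaces, and then solving for $k$. You even anticipate the paper's key follow-up observation, namely that the argument only uses independence of $\{f(x)\}_{x\in S}$ on each fixed affine subspace $S$, which is precisely the strengthening the paper exploits in Lemma~\ref{lemma:the affine subspace independent random function}.
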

\noindent
The proof of Claim \ref{claim:affine extractors} makes use of Hoeffding bound.

\begin{theorem}[Hoeffding Bound]\label{thm:hoeffding}
Let $X_1, \ldots, X_n$ be independent random variables for which $X_i \in [a_i,b_i]$. Define $X = \frac1{n} \cdot \sum_{i=1}^{n}{X_i}$, and let $\mu = \Exp[X]$. Then,
$$
\Pr[|X-\mu| \ge \eps] \le 2 \cdot \exp\left(- \frac{2n^2\eps^2}{\sum_{i=1}^{n}{(b_i-a_i)^2}}\right).
$$
\end{theorem}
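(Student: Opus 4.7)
The plan is to prove Hoeffding's inequality via the standard exponential-moment (Chernoff) method combined with Hoeffding's lemma on bounded zero-mean random variables. First I would reduce to the centered variables $Y_i = X_i - \E[X_i]$, which are independent, satisfy $\E[Y_i] = 0$, and lie in an interval of length $b_i - a_i$. Writing $S = \sum_{i=1}^n Y_i = n(X-\mu)$, it suffices to control $\Pr[S \ge n\eps]$ and $\Pr[S \le -n\eps]$ separately and combine them by a union bound to get the factor of $2$.

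For the upper tail, I would apply Markov's inequality to $e^{tS}$ with a free parameter $t > 0$:
\[
\Pr[S \ge n\eps] = \Pr[e^{tS} \ge e^{tn\eps}] \le e^{-tn\eps}\,\E[e^{tS}] = e^{-tn\eps}\prod_{i=1}^n \E[e^{tY_i}],
\]
where the last equality uses independence of the $Y_i$. The crucial ingredient is Hoeffding's lemma: if $Y \in [a,b]$ with $\E[Y]=0$, then $\E[e^{tY}] \le \exp(t^2(b-a)^2/8)$. I would prove this lemma in a short digression by noting that convexity of $s \mapsto e^{ts}$ implies $e^{ts} \le \frac{b-s}{b-a}e^{ta} + \frac{s-a}{b-a}e^{tb}$ for $s \in [a,b]$; taking expectation and using $\E[Y]=0$ gives $\E[e^{tY}] \le \frac{-a}{b-a}e^{tb} + \frac{b}{b-a}e^{ta}$. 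Writing this as $e^{\varphi(t)}$ and showing via Taylor expansion that $\varphi(t) \le t^2(b-a)^2/8$ (using $\varphi(0)=\varphi'(0)=0$ and a uniform bound $\varphi''(t) \le (b-a)^2/4$, the latter being a one-line computation since $\varphi''(t)$ is a variance of a random variable supported on an interval of length $b-a$) finishes the lemma.

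Plugging Hoeffding's lemma back in yields
\[
\Pr[S \ge n\eps] \le \exp\!\left(-tn\eps + \tfrac{t^2}{8}\sum_{i=1}^n (b_i - a_i)^2\right).
\]
I would then optimize over $t$ by setting $t = 4n\eps/\sum_i(b_i-a_i)^2$, which produces the exponent $-2n^2\eps^2/\sum_i(b_i-a_i)^2$. The lower tail $\Pr[S \le -n\eps]$ is handled identically by applying the same argument to the variables $-Y_i$, which also lie in an interval of length $b_i-a_i$. Summing the two bounds yields the factor $2$ in the stated inequality.

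The only mildly delicate step is Hoeffding's lemma itself, since it requires the sub-Gaussian bound on the moment generating function of a bounded zero-mean variable; everything else is routine Chernoff-style manipulation and an optimization in a single real parameter. I would present the lemma as a self-contained auxiliary claim and then derive the theorem in a few lines.
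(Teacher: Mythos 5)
Your proof is correct and is the standard textbook derivation of Hoeffding's inequality via the Chernoff exponential-moment method together with Hoeffding's lemma on the moment-generating function of a bounded zero-mean variable; the optimization $t = 4n\eps/\sum_i(b_i-a_i)^2$ indeed produces the stated exponent, and the two one-sided bounds combine to give the factor $2$. Note, however, that the paper does not prove Theorem~\ref{thm:hoeffding} at all — it is stated and used as a known classical result — so there is no alternative argument in the paper to compare against; your self-contained proof is simply supplying the standard justification that the paper elides.
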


\begin{proof}[Proof of Claim \ref{claim:affine extractors}]
Let $F : \F_2^n \to \F_2$ be a random function, that is, $\{F(x)\}_{x \in \F_2^n}$ are independent
random bits. Fix an affine subspace $u_0 + U \subseteq \ftwo^n$ of dimension $k$ as defined above. By Hoeffding Bound (Theorem \ref{thm:hoeffding}),
$$
\Pr\left[\frac1{2^k} \left|\sum_{u \in u_0 + U}{(-1)^{F(u)}}\right| \ge \eps \right] \le 2 \cdot \exp\left(- \frac{2^k \eps^2}{2}\right).
$$
The number of affine subspaces of dimension $k$ is bounded by $2^n \binom{2^n}{k} \le 2^{(k+1)n}$. Hence, by union bound over all affine subspaces, if $2^{(k+1)n} \cdot 2 e^{-2^k \eps^2 / 2} <1$ then there exists a function $f : \F_2^n \to \F_2$ that is an affine extractor for dimension $k$ with bias $\eps$. It is a simple calculation to show that our choice of $k$ suffices for the above equation to hold.
\end{proof}
\noindent
For the proof of Theorem~\ref{thm:affine construction}, we introduce the following notion.

\begin{definition}
An $(n,k,d)$ \emph{linear injector} with size $m$ is a family of $d \times n$ matrices $\{A_1, \ldots , A_m \}$ over $\ftwo$ with the following property: for every subspace $U \subseteq \ftwo^n$ of dimension $k$, there exists an $i \in [m]$ such that $\ker(A_i) \cap U = \{0\}$.
\end{definition}

\begin{lemma}\label{lemma:linear injectors}
For every $n,k$ such that $2 \le k \le n$, there exists an $(n,k,k+1)$ linear injector with size $m = nk$.
\end{lemma}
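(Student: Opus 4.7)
The plan is to prove the lemma by a straightforward probabilistic argument, showing that $m = nk$ independent uniformly random $(k+1) \times n$ matrices over $\mathbb{F}_2$ form a linear injector with positive probability.

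First I would establish the single-subspace estimate: for a \emph{fixed} $k$-dimensional subspace $U \subseteq \mathbb{F}_2^n$ and a uniformly random $(k+1) \times n$ matrix $A$ over $\mathbb{F}_2$, the probability that $\ker(A) \cap U = \{0\}$ is at least $1/2$. The key observation is that if $u_1,\ldots,u_k$ is any basis of $U$ (extended to a basis of $\mathbb{F}_2^n$), then the images $A u_1,\ldots,A u_k$ are independent uniform vectors in $\mathbb{F}_2^{k+1}$, and $A$ is injective on $U$ iff these $k$ vectors are linearly independent. The probability of this event equals
\[
\prod_{i=0}^{k-1} \frac{2^{k+1} - 2^i}{2^{k+1}} \;=\; \prod_{j=2}^{k+1}\bigl(1 - 2^{-j}\bigr) \;\ge\; 1 - \sum_{j=2}^{\infty} 2^{-j} \;=\; \tfrac{1}{2}.
\]

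Next I would take $A_1,\ldots,A_m$ independent and uniformly random with $m = nk$. For a fixed $U$, the probability that \emph{every} $A_i$ fails to be injective on $U$ is at most $(1/2)^m = 2^{-nk}$. I would then apply a union bound over all $k$-dimensional subspaces of $\mathbb{F}_2^n$, using the standard estimate on the Gaussian binomial coefficient:
\[
\binom{n}{k}_2 \;=\; \prod_{i=0}^{k-1} \frac{2^{n-i}-1}{2^{k-i}-1} \;\le\; \prod_{i=0}^{k-1} \frac{2^{n-i}}{2^{k-i-1}} \;=\; 2^{k(n-k+1)}.
\]
Consequently, the probability that \emph{some} $k$-dimensional $U$ is bad for all $A_i$ is at most
\[
2^{k(n-k+1)} \cdot 2^{-nk} \;=\; 2^{-k(k-1)} \;<\; 1,
\]
where the strict inequality uses the hypothesis $k \ge 2$. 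Hence there exists a choice of $A_1,\ldots,A_{nk}$ forming an $(n,k,k+1)$ linear injector.

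There is no serious obstacle here — the argument is a textbook probabilistic-method calculation. The only point worth verifying carefully is that the constants are sharp enough to yield $m = nk$ (not $Cnk$): the single-subspace success probability $\ge 1/2$ and the subspace count $\le 2^{k(n-k+1)}$ fit together precisely so that the $-k^2$ term in the exponent absorbs the $+k$ term exactly when $k \ge 2$, matching the hypothesis of the lemma.
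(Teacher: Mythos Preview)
Your proof is correct and follows essentially the same approach as the paper's: choose $m=nk$ uniformly random $(k+1)\times n$ matrices, bound the single-matrix failure probability by $1/2$, and union-bound over all $k$-dimensional subspaces. The only cosmetic differences are that the paper obtains the $1/2$ bound via a union bound over the nonzero vectors of $U$ (rather than your exact linear-independence calculation) and upper-bounds the number of subspaces by $\binom{2^n}{k}\le 2^{nk-1}$ (rather than your Gaussian-binomial estimate $2^{k(n-k+1)}$); both sets of bounds land at the same conclusion.
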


\begin{proof}
Fix a subspace $U \subseteq \ftwo^n$ of dimension $k$. Let $A$ be a $d \times n$ matrix such that every entry of $A$ is sampled from $\ftwo$ uniformly and independently at random. For every $u \in U \setminus \{0\}$ it holds that $\Pr[Au = 0] = 2^{-d}$. By taking the union bound over all elements in $U\setminus\{0\}$, we get that
\[
\Pr[\ker(A) \cap U \neq \{0\}] \le 2^{k-d}.
\]
Let $A_1, \ldots , A_m$ be $d \times n$ matrices such that the entry of each of the matrices is sampled from $\ftwo$ uniformly and independently at random. By the above equation, it holds that
\[
\Pr[\forall i \in [m] \,\, \ker(A_i) \cap U \neq \{0\}] \le 2^{m(k-d)}.
\]
The number of linear subspaces of dimension $k$ is bounded above by $\binom{2^n}{k}$, which is bounded above by $2^{nk-1}$ for $k \ge 2$. Thus, if $2^{nk-1} \cdot 2^{m(k-d)} < 1$
there exists an $(n,k,d)$ linear injector with size $m$. The latter equation holds for $d=k+1$ and $m=nk$.
\end{proof}

\begin{lemma}\label{lemma:the affine subspace independent random function}
Let $n_0, c$ be the constants from Claim \ref{claim:affine extractors}. Let $n > n_0$ and let $k,\eps$ be such that $k = \log{\frac{n}{\eps^2}} + \log{\log{\frac{n}{\eps^2}}} + c$. Let $\{A_1, \ldots, A_m\}$ be an $(n,k,d)$ linear injector with size $m$. Then, there exist functions $f_1, \ldots, f_m : \F_2^d \to \F_2$ such that the function $f : \F_2^n \to \F_2$ defined by
\begin{equation}\label{eq:function of affine extractor}
f(x) = \bigoplus_{i=1}^{m}{f_i(A_i x)}
\end{equation}
is an affine extractor for dimension $k$ with bias $\eps$.
\end{lemma}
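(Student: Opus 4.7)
The plan is to mimic the probabilistic argument of Claim~\ref{claim:affine extractors}, but with the random function replaced by a function of the form \eqref{eq:function of affine extractor}. Specifically, I would pick $f_1,\ldots,f_m \colon \F_2^d \to \F_2$ to be independent, uniformly random functions, and argue by the probabilistic method. The only thing that needs to be established is that, for every fixed affine subspace $u_0 + U \subseteq \F_2^n$ of dimension $k$, the random variables $\{f(u_0+u)\}_{u \in U}$ are i.i.d.\ uniform bits; once this is shown, the exact same Hoeffding calculation and union bound used in the proof of Claim~\ref{claim:affine extractors} goes through with the same parameters.

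To establish the key pairwise-independence claim, I would use the linear-injector property. Fix an affine subspace $u_0 + U$ of dimension $k$. By the definition of an $(n,k,d)$ linear injector, there exists some index $i^{\star} \in [m]$ with $\ker(A_{i^{\star}}) \cap U = \{0\}$, meaning that the affine map $u \mapsto A_{i^{\star}}(u_0+u)$ is injective on $u_0+U$, so its image consists of $2^k$ distinct points of $\F_2^d$. I would then condition on all the functions $\{f_j\}_{j \ne i^{\star}}$. Under this conditioning,
\[
f(u_0+u) \;=\; h(u) \;\oplus\; f_{i^{\star}}\!\bigl(A_{i^{\star}}(u_0+u)\bigr),
\]
where $h(u) \deff \bigoplus_{j \ne i^{\star}} f_j(A_j(u_0+u))$ is a fixed function of $u$. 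Since the $2^k$ arguments fed into $f_{i^{\star}}$ are distinct and $f_{i^{\star}}$ is still a uniformly random function over $\F_2^d$, the values $\{f_{i^{\star}}(A_{i^{\star}}(u_0+u))\}_{u \in U}$ are $2^k$ independent uniform bits, and XORing with $h(u)$ preserves this. Hence $\{f(u_0+u)\}_{u \in U}$ are i.i.d.\ uniform bits as claimed.

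With that in hand, I apply Hoeffding's inequality (Theorem~\ref{thm:hoeffding}) exactly as in Claim~\ref{claim:affine extractors} to bound
\[
\Pr\!\left[\,\tfrac{1}{2^k}\Bigl|\sum_{u \in U}(-1)^{f(u_0+u)}\Bigr| \ge \eps\,\right] \;\le\; 2\exp(-2^k\eps^2/2),
\]
and then union-bound over the $\le 2^{(k+1)n}$ affine subspaces of dimension $k$. The inequality $2 \cdot 2^{(k+1)n}\exp(-2^k\eps^2/2) < 1$ is the same one verified in the proof of Claim~\ref{claim:affine extractors}, and it holds for our choice of $k = \log(n/\eps^2) + \log\log(n/\eps^2) + c$ with $c$ large enough. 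This yields, with positive probability over the choice of $f_1,\ldots,f_m$, a function $f$ as in \eqref{eq:function of affine extractor} which is an affine extractor for dimension $k$ with bias $\eps$, and fixing any such realization completes the proof.

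I do not expect any real obstacle: the proof is essentially Claim~\ref{claim:affine extractors} plus the observation that injectivity on $u_0+U$ (guaranteed by the linear injector) is precisely what is needed to transfer the randomness of a single $f_{i^{\star}}$ to full pairwise independence of $f$ on $u_0+U$. The only mild subtlety is remembering to condition on the other $f_j$'s before invoking Hoeffding, so that the bits being averaged are genuinely independent and uniform.
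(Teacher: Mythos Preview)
Your proposal is correct and is essentially the same argument as the paper's: choose the $f_i$ uniformly and independently at random, use the injector property to find an index $i^\star$ on which $A_{i^\star}$ is injective on $u_0+U$, deduce that $\{f(u)\}_{u\in u_0+U}$ are independent uniform bits, and then rerun the Hoeffding/union-bound calculation from Claim~\ref{claim:affine extractors}. One terminological nit: you twice write ``pairwise independence'' where you mean (and in fact prove) full mutual independence; Hoeffding needs the latter, so just fix the wording.
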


\begin{proof}
Recall that in the proof of Claim \ref{claim:affine extractors}, we took $F$ to be a random function. We observe however, that the proof did not use the full independence offered by a uniformly sampled random function. In fact, the proof required only that for every affine subspace $u_0 + U \subseteq \ftwo^n$ of dimension $k$, $\{f(u) \}_{u \in u_0 + U}$ are independent random bits.

Let $F_1, \ldots, F_m : \F_2^d \to \F_2$ be independent random functions, that is, the random bits $\{ F_i(x)\}_{i \in [m], x \in \F_2^d}$ are independent. Define the random function $F : \F_2^n \to \F_2$ as follows
$$
F(x) = \bigoplus_{i=1}^{m}{F_i(A_i x)}.
$$
We claim that for every affine subspace $u_0 + U \subseteq \ftwo^n$ of dimension $k$, the random bits $\{ F(u) \}_{u \in u_0 + U}$ are independent. By the observation above, proving this will conclude the proof. Let $u_0 + U\subseteq \ftwo^n$ be an affine subspace of dimension $k$. As $\{A_1, \ldots, A_m \}$ is an $(n,k,d)$ linear injector, there exists an $i \in [m]$ such that $\ker(A_i) \cap U = \{0\}$. This implies that for every two distinct elements $u,v \in U$ it holds that $A_i (u_0 + u) \neq A_i (u_0 + v)$. Otherwise $A_i(u+v) = 0$ and thus $u+v$, a non-zero vector in $U$, lies in $\ker(A_i)$. This stands in contradiction to the choice of $i$. Recall that $F_i$ is a random function, and from the above it follows that $A_i$ behaves as an injection to the domain $u_0 + U$. Hence, the random bits $\{F_i(A_i u)\}_{u \in u_0 + U}$ are independent. Since $F(x)$ is defined to be the $\XOR$ of $F_i(A_i x)$ with $m-1$ other \emph{independent} random variables, we get that $\{F(u)\}_{u \in u_0 + U}$ are also independent random bits, as claimed.
\end{proof}

\begin{theorem}\label{thm:affine construction}
Let $f$ be the function from Equation~\eqref{eq:function of affine extractor}, where $\{ A_1, \ldots, A_m \}$ is the $(n,k,d)$ linear injector from Lemma~\ref{lemma:linear injectors} (that is, $m=nk$ and $d=k+1$). Then, $f$ is an affine extractor for dimension $k$ and bias $\eps$, where $k = \log{(n/\eps^2)} + \log{\log{(n/\eps^2)}} + O(1)$. Moreover,
\begin{enumerate}
\item $\deg(f) = \log{(n/\eps^2)} + \log{\log{(n/\eps^2)}} + O(1)$.
\item $f$ can be realized by an $\XAX$ circuit of size $O((n/\eps)^2 \cdot \log^3{(n/\eps)})$.
\item $f$ can be realized by a De Morgan formula of size $O((n^5 / \eps^2) \cdot \log^3{(n/\eps)})$.
\end{enumerate}
\end{theorem}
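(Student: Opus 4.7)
The plan is to prove the four parts in sequence, starting from the affine extractor claim and then reading off the degree, XAX, and formula bounds from the explicit form of $f$.

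First, I would observe that the affine extractor part is essentially immediate: Lemma~\ref{lemma:linear injectors} produces an $(n,k,k+1)$ linear injector of size $m=nk$, and feeding this collection into Lemma~\ref{lemma:the affine subspace independent random function} (with $d=k+1$) directly yields that some choice of $f_1,\dots,f_m\colon \F_2^{k+1}\to \F_2$ makes $f(x)=\bigoplus_{i=1}^m f_i(A_i x)$ an affine extractor for dimension $k$ with bias $\eps$, provided $k\ge \log(n/\eps^2)+\log\log(n/\eps^2)+c$ for the constant $c$ from Claim~\ref{claim:affine extractors}.

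For the degree bound, each $f_i$ is a function $\F_2^{k+1}\to\F_2$ and hence a multilinear polynomial of total degree at most $d=k+1$. Since $A_i x$ is a vector of $\F_2$-linear functions, the composition $f_i(A_i x)$ remains a polynomial in $x$ of degree at most $k+1$, and XOR preserves degree. So $\deg(f)\le k+1=\log(n/\eps^2)+\log\log(n/\eps^2)+O(1)$.

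For the XAX realization, expand each $f_i$ as a sum of at most $2^d$ multilinear monomials in its $d$ inputs; substituting $y_j=(A_i x)_j=\langle a_j^{(i)},x\rangle$ turns each monomial into an AND (of fan-in at most $d$) of bottom $\XOR$ gates (of fan-in at most $n$), and then a single top XOR collects all of these across $i\in[m]$. The total gate/wire count is $O(m\cdot 2^d\cdot d)$ at the top two levels plus $O(m\cdot d\cdot n)$ at the bottom (the $d$ rows of each $A_i$ can be shared among monomials within the same $i$). Plugging in $m=nk$, $d=k+1$, and $2^d=O((n/\eps^2)\log(n/\eps^2))$ gives the advertised $O((n/\eps)^2\cdot \log^3(n/\eps))$ bound.

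Finally, for the De Morgan formula, the naive plan of expanding the top XOR of all $\sim m\cdot 2^d$ monomials is too wasteful (the standard XOR-formula blowup $O(N^2)$ applied at the outer level costs too much). Instead, I would realize each $f_i(A_i x)$ as its own formula and only XOR afterwards: each $f_i$ has an $O(2^d)$-size DNF in $d$ variables, and substituting the linear forms (each of De Morgan formula size $O(n^2)$) yields a formula for $f_i(A_i x)$ of size $O(2^d\cdot d\cdot n^2)$. Taking the XOR of $m$ such formulas costs another factor of $O(m^2)$ in the standard balanced-tree construction, for a total of $O(m^2\cdot 2^d\cdot d\cdot n^2)$; with our parameters this simplifies to the claimed $O((n^5/\eps^2)\cdot \log^{O(1)}(n/\eps))$. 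I expect the main bookkeeping issue to be precisely this last step: one has to be slightly careful about \emph{how} the top XOR is expanded (grouping on a per-$i$ basis rather than per-monomial) to avoid losing an extra polynomial factor, but this is a routine formula-size calculation once the structural decomposition above is in hand.
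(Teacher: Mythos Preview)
Your proposal is correct and follows essentially the same approach as the paper: invoke Lemmas~\ref{lemma:linear injectors} and~\ref{lemma:the affine subspace independent random function} for the extractor property, bound the degree via the $(k+1)$-variable inner functions composed with linear maps, realize the $\XAX$ circuit by expanding each $f_i$ into its $\le 2^d$ monomials over the $d$ linear forms, and build the De Morgan formula by first realizing each $f_i(A_i x)$ and then XORing the $m$ results at cost $O(m^2)$.

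The only (cosmetic) discrepancy is in item~3: you bound the formula for each $f_i$ by the DNF size $O(d\cdot 2^d)$, which yields $\log^4(n/\eps)$ rather than the stated $\log^3(n/\eps)$. The paper instead uses the standard fact that any Boolean function on $d$ inputs has a De Morgan formula of size $O(2^d)$ (e.g., via the Shannon expansion $f=(x_1\wedge f|_{x_1=1})\vee(\bar x_1\wedge f|_{x_1=0})$ applied recursively), which saves the extra factor of $d$ and gives exactly $O(m^2\cdot 2^d\cdot n^2)=O((n^5/\eps^2)\log^3(n/\eps))$. Swapping in this bound closes the gap with no change to your overall structure.
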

\begin{proof}
To prove the first item, we note that each of the $f_i$'s is a function on $d=k+1$ inputs, and thus can be computed by a polynomial with degree at most $k+1$. The proof then follows as in the computation of $f$, each $f_i$ is composed with linear functions of the variables, and $f$ is the $\XOR$ of the $f_i$'s.
%


To prove the second item, we show an $\XAX$ circuit $C$ with the desired size, that computes the function $f$.
Since each of the functions $f_i$ are degree $d$ polynomials on $d$ inputs, each of them can be computed by an $\XOR-\AND$ circuit, where the fan-in of the top $\XOR$ gate is bounded above by $2^d$ and the fan-in of each $\AND$ gate is at most $d$.
Thus, for $i \in [m]$, each of the functions $f_i(A_i x)$ on $n$ inputs is computable by an $\XOR-\AND-\XOR$ circuit.

By its definition, $f$ is the $\XOR$ of these functions and so one can collapse this $\XOR$ together with the top $m$ $\XOR$ gates. This yields an $\XOR-\AND-\XOR$ circuit $C$ that computes $f$.

The size of the circuit $C$ is $O(m \cdot d \cdot 2^d)$ as each of the $m$ functions $f_i(A_i x)$ applies $2^d$ $\AND$  gates, each on $d$ $\XOR$ gates (whom in turn compute the linear injector). Since $m=nk$ and $d=k+1$, $\size(C) = O((n/\eps)^2 \cdot \log^3(n/\eps))$ as stated.
%
%
%
%


As for the third item, we show a De Morgan formula with the desired size, that computes $f$. Since each of the functions $f_i$ are on $d$ inputs, each of them can be computed by a De Morgan formula of size $O(2^d)$. Moreover, every $\XOR$ operation needed for the computation of the linear injector $\{A_1, \ldots, A_m\}$ can be implemented in size $O(n^2)$.
Replacing each leaf in the formula for $f_i$ with the relevant formula computing the corresponding bit of $A_i  x$ (or its negation), results in an $O(2^d n^2)$ size De Morgan formula computing $f_i(A_i x)$.
Again, since the $\XOR$ of bits $y_1, \ldots,y_m$ can be computed by a De Morgan formula of size $O(m^2)$, and one can replace each leaf marked by $y_i$ (or $\neg y_i$) with the formula computing $f_i(A_i x)$ (or its negation), one gets a De Morgan formula computing $f$ of size
\[
 O(m^2 \cdot 2^d \cdot n^2)
 = O((nk)^2 \cdot 2^k \cdot n^2)
 = O((n^5 / \eps^2) \cdot \log^3(n/\eps)),
\]
as desired.
\end{proof}
\section{A Slightly Simpler Proof of the First Structural Result for $\F_2$}\label{sec:binary field}

In this section we give a slightly simpler proof for Theorem~\ref{thm:structural result}, for the special case $q = 2$. We prove the following:

\begin{theorem}[Structural Result I for the Binary Field]\label{thm:structural result binary field}
Let $k$ be the smallest integer such that
\[
n \le k + \sum_{j=0}^{d-1} {(d-j) \cdot \binom{k}{j}}\;.
\]
Let $f \colon \F_2^n \to \F_2$ be a degree $d$ polynomial, and let $u_0 \in \F_2^n$. Then, there exists a subspace $U \subset \F_2^n$ of dimension $k$ such that $f|_{u_0+U}$ is constant.
\end{theorem}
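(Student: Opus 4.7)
The plan is to proceed by induction on $k$, iteratively constructing linearly independent vectors $\Delta_1, \ldots, \Delta_k$ such that $f$ vanishes on $u_0 + U_k$, where $U_k := \spn\{\Delta_1, \ldots, \Delta_k\}$. Without loss of generality assume $f(u_0) = 0$, by replacing $f$ with $f - f(u_0)$ if needed (this preserves the degree). The base case $k=0$ is trivial. For the inductive step, suppose we have $\Delta_1, \ldots, \Delta_k$ with $f \equiv 0$ on $u_0 + U_k$. Call a point $x \in \F_2^n$ \emph{good} if $f|_{x + U_k} \equiv 0$; the hypothesis says $u_0$ is good. If we can exhibit a good point $x^* \notin u_0 + U_k$, then setting $\Delta_{k+1} := u_0 + x^*$ gives
\[
u_0 + U_{k+1} \;=\; (u_0 + U_k) \;\cup\; (x^* + U_k),
\]
on both parts of which $f$ vanishes, completing the inductive step.

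The key step is to encode ``goodness'' as a low-degree polynomial condition. For each $S \subseteq [k]$, define the discrete derivative
\[
\partial_S f(x) \;:=\; \sum_{T \subseteq S} f\Bigl(x + \sum_{i \in T} \Delta_i\Bigr),
\]
whose degree is at most $d - |S|$, and in particular which is identically zero when $|S| > d$. By the M\"obius inversion formula (Fact~\ref{fact:mobius}) applied to the $k$-variable polynomial $F(y) := f(x + \sum_i y_i \Delta_i)$, the monomial $\prod_{i \in S} y_i$ of $F$ has coefficient exactly $\partial_S f(x)$; hence $x$ is good iff $\partial_S f(x) = 0$ for all $S \subseteq [k]$. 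For $|S| = d$ the polynomial $\partial_S f$ is constant, and since $u_0$ is good this constant must be $0$, so the equations with $|S| = d$ are automatic. Thus goodness is captured by the $\{0,1\}$-valued polynomial
\[
t(x) \;:=\; \prod_{\substack{S \subseteq [k] \\ |S| \le d-1}} \bigl(1 + \partial_S f(x)\bigr), \qquad \deg(t) \;\le\; \sum_{j=0}^{d-1} (d - j) \binom{k}{j},
\]
which satisfies $t(x) = 1$ iff $x$ is good.

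Since $t(u_0) = 1$, $t$ is not identically zero, so the DeMillo--Lipton--Schwartz--Zippel lemma (Lemma~\ref{lem:dlsz}) over $\F_2$ yields at least $2^{n - \deg(t)}$ good points. As the current affine subspace $u_0 + U_k$ contains only $2^k$ points, whenever
\[
n \;>\; k + \sum_{j=0}^{d-1} (d-j) \binom{k}{j},
\]
there exists a good $x^* \notin u_0 + U_k$, and the induction continues. The procedure therefore runs up to the smallest $k$ for which the reverse inequality $n \le k + \sum_{j=0}^{d-1}(d-j)\binom{k}{j}$ holds, which is precisely the value promised in the statement. The main conceptual obstacle is identifying the right polynomial system capturing good cosets and verifying its degree bound via M\"obius inversion; once this is in place, DLSZ provides enough good cosets essentially for free, and the linear-independence requirement on $\Delta_{k+1}$ reduces to the simple counting $2^{n - \deg(t)} > 2^k$.
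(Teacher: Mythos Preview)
Your proof is correct and follows essentially the same approach as the paper: iteratively extend the subspace by encoding the ``good coset'' condition via iterated derivatives $\partial_S f$, package these into a single indicator polynomial $t$ of degree at most $\sum_{j=0}^{d-1}(d-j)\binom{k}{j}$, and apply DeMillo--Lipton--Schwartz--Zippel to guarantee enough good points to find one outside $u_0 + U_k$. The only cosmetic difference is that you explicitly invoke M\"obius inversion to justify the equivalence and drop the $|S|=d$ factors from $t$ after observing they are the constant~$1$, whereas the paper keeps them in the product (where they contribute degree~$0$ anyway) and leaves the equivalence as a verification.
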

\noindent

\begin{proof}
Fix $u_0\in \F_2^n$.
We assume without loss of generality that $f(u_0) = 0$, as otherwise we can look at the polynomial $g(x) = f(x)-f(u_0)$ which is of the same degree.
The proof is by induction.
Let $k$ be such that
\begin{equation}\label{eq:condition on k}
n > k + \sum_{j=0}^{d-1} {(d-j) \cdot \binom{k}{j}} \;.
\end{equation}
We assume by induction that there exists an affine subspace $u_0 + \spa\{\Delta_1, \ldots, \Delta_k\} \subseteq \F_2^n$, where the $\Delta_i$'s are linearly independent vectors on which $f$ evaluates to 0.
Assuming Equation~\ref{eq:condition on k} holds, we show there exists a vector $\Delta_{k+1}$, linearly independent of $\Delta_1, \ldots, \Delta_k$, such that $f \equiv 0$ on $u_0 + \spa\{\Delta_1, \ldots, \Delta_{k+1}\}$. To this aim, consider the set
\[
A = \left\{
x\in \F_2^n\;\;\; \bigg\vert\;\;\; \forall{S\subseteq[k]},\;{f\(x + \sum_{i \in S} {\Delta_i}\) = 0}
\right\}.
\]
By the induction hypothesis, $u_0 \in A$. It can be verified that for any $x \in \F_2^n$
\[
\forall{S\subseteq[k]}: f\(x + \sum_{i \in S} {\Delta_i}\) = 0 \quad \Leftrightarrow \quad
\forall{S\subseteq[k]}: f_S(x) = 0 \;,
\]
where $f_S$ is defined by
\[
f_S(x) \deff \sum_{T\subseteq S} {f\(x + \sum_{i \in T} {\Delta_i}\)}.
\]
Namely, $f_S$ is the derivative of $f$ in directions $\{\Delta_i \}_{i \in S}$. In particular, $\deg(f_S) \le d-|S|$. Thus $f_S \equiv 0$ for $|S| > d$, and we may write $A$ as
\[
A = \left\{
x\in \F_2^n \;\mid\; \forall{S\subseteq[k]:|S|\le d} ,\; f_S(x) = 0
\right\}.
\]
Hence, $A$ is the set of solutions to a system of $\binom{k}{\le d}$ polynomial equations,
where there are $\binom{k}{j}$ equations which correspond to sets $S$ of size $j$ and thus to degree (at most) $d-j$ polynomials.~\footnote{In particular, equations that correspond to sets $S$ of size $d$ are of the form $c_S = 0$ for some constant $c_S\in \F_2$. Since $A$ is non-empty, the constants $c_S$ must be $0$, making those equations tautologies $0=0$ that does not depend on $x$. Moreover, most of the remaining equations correspond to sets $S$ of size $d-1$, and are therefore either linear equations or tautologies.}
One can also write $A$ as the set of solutions to the single polynomial equation
\[
\prod_{S\subseteq[k]:|S|\le d}{\!\!\!\!\!\!(1-f_S(x))} \,\,= 1,
\]
which is of degree
\[
D \le \sum_{j=0}^{d-1} {(d-j) \cdot \binom{k}{j}}\;.
\]
Since $A$ is non-empty, by DeMillo-Lipton-Schwartz-Zippel lemma (Lemma~\ref{lem:dlsz}, for $q=2$) we have that
\begin{equation}
|A| \ge 2^{n-D} \ge 2^{n-\sum_{j=0}^{d-1} {(d-j) \cdot \binom{k}{j}}}.
\end{equation}
This, together with Equation \eqref{eq:condition on k} implies that $|A|>2^{k}$.
Hence, there exists a point $y\in A$ such that $y - u_0 \notin \spa\{\Delta_1, \Delta_2, \ldots, \Delta_{k}\}$.
Pick such a point $u$ arbitrarily and denote by $\Delta_{k+1} \deff u - u_0$.
Since both $u_0$ and $u$ are in $A$ we have that $f\equiv 0$  on
\[
u_0 + \spa \{\Delta_1, \ldots, \Delta_{k+1}\} \;.
\]
The inductive proof shows that there exists a subspace $U$ of dimension $k$ such that $f$ is constant on $u_0 + U$ and
\begin{equation}\label{eq:bound on k}
n \le k + \sum_{j=0}^{d-1} {(d-j) \cdot \binom{k}{j}} \;,
\end{equation}
since otherwise we could have continue this process and pick a bigger subspace $U'$.
\remove{We now complete the proof by showing that $k = \Omega(d \cdot n^{1/(d-1)})$ for $d \le \log(n)/3$ and $k = \Omega(n^{1/(d-1)})$ for any $d$.
The right hand side of Equation~\eqref{eq:bound on k} is bounded above by $d \cdot 2^k$,
hence $k \ge \log(n/d)$. Under the assumption $d \le \log(n)/3$ we get $k \ge 2\log(n)/3 \ge 2d$.
We return to Equation~\eqref{eq:bound on k} and deduce that
$$
n \le k + d \cdot \sum_{j=0}^{d-1}{\binom{k}{j}}  \underset{2d \le k}{\le} (d^2+1) \cdot \binom{k}{d-1} \le (d^2+1) \cdot \left (\frac{k e}{d-1} \right)^{d-1}
$$
and so
$$
k \ge \left( \frac{n}{d^2+1} \right)^{\frac1{d-1}} \cdot \frac{d-1}{e} > \frac{1}{28} \cdot d \cdot n^{1/(d-1)}\;.
$$
For $d \ge  \log(n)/3$ the proof follows since $n^{1/(d-1)} \le 64$.
}
\end{proof}

\subsection{Proof of Theorem~\ref{thm:efficient structural result I}}\label{sec:binary field algorithmic}

The proof of Theorem~\ref{thm:efficient structural result I} uses the following lemma.
\begin{lemma}\label{lemma:degree decreases equivalence}
Let $f: \F_2^n \to \F_2$ be a degree $d$ polynomial, and let $U$ be a linear subspace with basis $\Delta_1, \ldots,\Delta_k$. Then, $\deg(f|_U) \le d-1$ if and only if $f_S(0) = 0$ for all $S\subseteq[k]$ of size $d$, where $f_S(x) := \sum_{T\subseteq S} {f\(x + \sum_{i \in T}{\Delta_i}\)}$.
\end{lemma}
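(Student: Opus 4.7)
The plan is to reduce the statement to the Möbius inversion formula (Fact~\ref{fact:mobius}) by passing to the polynomial $g : \F_2^k \to \F_2$ defined by $g(y_1, \ldots, y_k) := f\bigl(\sum_{i=1}^{k} y_i \Delta_i\bigr)$. By the definition of degree under restriction given in the preliminaries, $\deg(g) = \deg(f|_U)$, so it suffices to characterise when $\deg(g) \le d-1$.

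Next I would observe that since $g$ is obtained from $f$ by substituting linear forms in the $y_i$'s for the $x_j$'s, the multilinear polynomial representation of $g$ has degree at most $\deg(f) = d$. Consequently $\deg(g) \le d-1$ is equivalent to the vanishing of the coefficient of every monomial $\prod_{i \in S} y_i$ with $|S|=d$ in the multilinear expansion of $g$.

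The key step is now a direct application of Möbius inversion (Fact~\ref{fact:mobius}): the coefficient of $\prod_{i \in S} y_i$ in $g$ equals $\sum_{T \subseteq S} g(\one_T)$. Unpacking the definition of $g$, $g(\one_T) = f\bigl(\sum_{i \in T} \Delta_i\bigr)$, so for every $S \subseteq [k]$,
\[
\text{coeff}_{\prod_{i\in S} y_i}(g) \;=\; \sum_{T \subseteq S} f\Bigl(\sum_{i \in T}\Delta_i\Bigr) \;=\; f_S(0).
\]
Combining the two reductions, $\deg(f|_U) \le d-1$ if and only if $f_S(0)=0$ for every $S \subseteq [k]$ with $|S|=d$, as required.

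There is essentially no obstacle here; the only subtlety is remembering that we may work with the multilinear representation over $\F_2$ (so that Möbius inversion applies as stated in Fact~\ref{fact:mobius}) and that the linear change of variables $y \mapsto \sum_i y_i \Delta_i$ cannot increase total degree, so no monomials of degree exceeding $d$ arise in $g$ and only the degree-$d$ monomials need to be checked.
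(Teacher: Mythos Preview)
Your proposal is correct and follows essentially the same approach as the paper's own proof: both pass to $g(y) = f(\sum_i y_i \Delta_i)$, note $\deg(g)=\deg(f|_U)\le d$, and use M\"{o}bius inversion (Fact~\ref{fact:mobius}) to identify the degree-$d$ coefficients of $g$ with the values $f_S(0)$.
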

\begin{proof}[Proof of Lemma~\ref{lemma:degree decreases equivalence}]
As noted in the Preliminaries section, the degree of $f|_U$ is equal to the degree of $g: \F_2^k \to \F_2$ defined as $g(y_1, \ldots, y_k) = f(\sum_{i=1}^{k}{y_i \Delta_i})$.
Since $\deg(g) \le d$, we may write $g(y) = \sum_{S \subseteq [k],|S|\le d} {a_S \cdot \prod_{i\in S}{y_i}}$, where $a_S \in \F_2$ are constants.
By M\"{o}bius inversion formula (Fact~\ref{fact:mobius}), $a_S = \sum_{T \subseteq S} {g(\one_T )}$.
By the definition of $g$, we establish the relation $a_S = \sum_{T\subseteq S}{ f(\sum_{i\in T}{\Delta_i})} = f_S(0)$.
Hence,
\begin{align*}
\deg(f|_U) \le d-1 &\quad \iff \quad \deg(g) \le d-1 \\
&\quad \iff \quad \forall{S\subseteq[k] \text{\quad s.t. } |S|=d}, a_S = 0 \\
&\quad \iff \quad \forall{S\subseteq[k] \text{\quad s.t. } |S|=d}, f_S(0) = 0,
\end{align*}
which completes the proof.
\end{proof}
\begin{proof}[Proof of Theorem~\ref{thm:efficient structural result I}]
Similarly to the proof of Theorem~\ref{thm:structural result binary field}, we find by induction basis vectors $\Delta_1, \ldots, \Delta_k$ for the subspace $U$.
We assume by induction that $\deg(f|_U) \le d-1$, and we wish to find a new vector $\Delta_{k+1}$, linearly independent of $\Delta_1, \ldots, \Delta_k$, for which $\deg(f|_{U'}) \le d-1$, where $U' = \spa\{\Delta_1, \ldots, \Delta_{k+1}\}$. We continue doing so as long as
$\binom{k}{d-1} + k < n$.\footnote{Note that this is slightly better than the expression we had in Theorem~\ref{thm:structural result binary field}.}

By Lemma~\ref{lemma:degree decreases equivalence}, for any set $S \subseteq [k]$ of size $d$, $f_S(0) = 0$.
We wish to find a new vector $\Delta_{k+1}$ such that for all $S\subseteq[k+1]$ of size $d$, $f_S(0) = 0$.
It suffices to consider sets $S$ of size $d$ that contains $k+1$, since the correctness for all other sets is implied by the induction hypothesis.

For sets $S$ of size $d-1$, $f_S(x)$ is an affine function and can be written as $f_S(x) = \langle\ell_S,x \rangle +c_S$, where $\ell_S \in \F_2^n$ and $c_S \in \F_2$.
Let $W$ be the linear subspace of $\F_2^n$ spanned by $\{\ell_S: S\subseteq[k], |S|=d-1\}$.
Let $\Delta_{k+1}$ be any vector orthogonal to $W$, and linearly independent of $\Delta_1, \Delta_2, \ldots, \Delta_k$. Since, $\dim(W^{\perp}) = n - \dim(W) \ge n-\binom{k}{d-1}$, which by our assumption is strictly bigger than $k$, such a vector $\Delta_{k+1}$ exists.
Let $S\subseteq[k+1]$ be a set of size $d$ that contains $k+1$ and let $S' = S \cap [k]$, then
\begin{align*}
f_{S}(0) = f_{S'}(0) + f_{S'}(\Delta_{k+1})
= \langle \ell_{S'}, 0 \rangle + c_{S'} + \langle\ell_{S'},\Delta_{k+1}\rangle + c_{S'}
= 0\;,
\end{align*}
where in the first equality we used the definitions of $f_S$ and $f_{S'}$, and in the last equality we used the fact that $\Delta_{k+1}$ is orthogonal to $\ell_{S'}$.
Using Lemma~\ref{lemma:degree decreases equivalence} we have shown that our choice of $\Delta_{k+1}$ gives a linear subspace $U' = \spa\{\Delta_1, \ldots, \Delta_{k+1}\}$ for which $f|_{U'}$ is of degree $\le d-1$.

We now explain how to find, for any set $S$ of size $d-1$, the affine function $f_S(x)$ (that is, $\ell_S$ and $c_S$) by performing $2^{d-1}\cdot (n+1)$ queries to $f$. As $f_S$ is affine, knowing the values of $f_S$ on the inputs $0,e_1,e_2, \ldots, e_n$ determines $\ell_S$ and $c_S$: $c_S = f_S(0)$ and $(\ell_S)_i = c_S + f_S(e_i)$ for $i\in [n]$. Each one of the values $f_S(0), f_S(e_1), \ldots, f_S(e_n)$ can be computed using $2^{d-1}$ queries to $f$, by the definition of $f_S$. 

We now describe how can one efficiently find the vector $\Delta_{k+1}$ given $\Delta_1, \ldots, \Delta_k$. Using Gaussian elimination we find a basis for $W^{\perp}$. We check for each basis vector if it is not in the span of $\Delta_1, \ldots, \Delta_k$; after checking $k+1$ vectors we are promised to find such a vector. Next, we analyze the dimension of the subspace returned by the algorithm, the number of queries it makes to $f$, and the total running time.

\paragraph{Dimension of subspace:} We abuse notation and denote by $k$ the number of rounds in our algorithm, which is also the dimension of the subspace the algorithm returns. Since the algorithm stopped, we know that $\binom{k}{d-1} + k \ge n$. By a simple calculation, under the assumption that $d\le \log(n)/3$ we get that
$k = \Theta(d \cdot n^{1/(d-1)})$.

\paragraph{Number of queries:} Overall through the $k$ rounds of the algorithm we query $f$ on all vectors of the form $v  + \sum_{i \in T}{\Delta_i}$ for $v\in\{0,e_1,\ldots,e_n\}$ and $T \subseteq [k]$ of size $\le d-1$. Hence, if we make sure not to query $f$ more than once on the same point, the number of queries is $(n+1) \cdot \binom{k}{\le d-1}$ which is at most $O(n^2)$ for $d \le \log(n)/3$.

\paragraph{Running time:}The total running time per round is $O(n^3)$ since we perform Gaussian elimination to calculate the basis for $W^{\perp}$, and another Gaussian elimination to check which of the first $k+1$ vectors of this basis is not in $\spa\{\Delta_1, \ldots, \Delta_{k+1}\}$.
In addition, in each round we calculate the linear functions $\ell_S$, but this only takes $O(n^2 \cdot 2^d)$ time, which is negligible compared to $O(n^3)$ under the assumption that $d\le \log(n)/3$.
Therefore, the total running time is $O(n^3 \cdot k)$.
\end{proof}

\section{Proof of DeMillo-Lipton-Schwartz-Zippel Variant}\label{sec:dlsz}

In this section we provide a proof for Lemma~\ref{lem:dlsz}. Our proof is adapted from the proof of Lemma A.36 in the book of Arora and Barak \cite{AB09}.
\begin{proof}[Proof of Lemma~\ref{lem:dlsz}]
Since we only care about the values the polynomial take on $\F_q^n$, we may assume without loss of generality that the individual degree of each variable is at most $q-1$, since $a^q = a$ for all $ a\in \F_q$.

We use induction on $n$. If $n = 1$ then $f$ is a univariate polynomial of degree $d$ for some $d\le q-1$, since we assumed each individual degree is at most $q-1$. We have
\[
\Pr[f(x_1) \neq 0] \ge 1-d/q \ge q^{-d/(q-1)},
\]
where the first inequality follows since a univariate degree $d$ polynomial over a field obtains at most $d$ roots, and the last inequality can be verified for any $d \le q-1$ using basic calculus.
Suppose the statement is true when the number of variables is at most $n-1$. Then $f$ can be written as
\[
f(x_1, \ldots, x_n)  = \sum_{i=0}^{\min(d,q-1)} {x_1^i \cdot f_i(x_2, \ldots, x_n)}
\]
where $f_i$ is of total degree at most $d-i$. Let $k$ be the largest $i$ such that $f_i$ is a non-zero polynomial.
By conditioning we have,
\[
\Pr[f(x_1, \ldots, x_n) \neq 0] \ge \Pr[f_k(x_2, \ldots, x_n) \neq 0] \cdot \Pr[f(x_1, \ldots, x_n) \neq 0 \mid f_k(x_2, \ldots, x_n) \neq 0]\;.
\]
By the induction hypothesis, the first multiplicand is at least $q^{-(d-k)/(q-1)}$.
As for the second multiplicand, for any fixed $(x_2, \ldots, x_n) = (a_2, \ldots, a_n)$ such that $f_k(a_2, \ldots, a_n)\neq 0$, we get that $f(x_1, a_2, \ldots, a_n)$ is a non-zero univariate polynomial, in the variable $x_1$, of degree $k$. Hence, $\Pr_{x_1 \sim \F_q}[f(x_1, a_2, \ldots, a_n) \neq 0] \ge q^{-k/(q-1)}$ from the base case. Overall we get
\[
\Pr[f(x_1, \ldots, x_n) \neq 0] \ge q^{-(d-k)/(q-1)} q^{-k/(q-1)} = q^{-d/(q-1)}\;.
\]
\end{proof} 
\end{document}